\renewcommand{\Re}{\operatorname{Re}}
\DeclareMathOperator{\GL}{GL}
\DeclareMathOperator{\Ort}{O}
\DeclareMathOperator{\SL}{SL}
\DeclareMathOperator{\SO}{SO}
\DeclareMathOperator{\SU}{SU}
\DeclareMathOperator{\SPD}{SPD}
\DeclareMathOperator{\Gr}{Gr}
\DeclareMathOperator{\M}{M}
\DeclareMathOperator{\RP}{RP}
\declaretheorem[style=plain]{Assumption}
\Crefname{equation}{}{}
\title{Stationary Kernels and Gaussian Processes on Lie Groups and their Homogeneous Spaces I: the compact case}
\author{\name Iskander Azangulov\textsuperscript{\ensuremath{*}}
\email iska.azn@gmail.com \\
\addr St. Petersburg State University and University of Oxford
\AND
\name Andrei Smolensky\textsuperscript{\ensuremath{*}}
\email andrei.smolensky@gmail.com \\
\addr St. Petersburg State University and Neapolis University Pafos
\AND
\name Alexander Terenin
\email avt28@cornell.edu \\
\addr University of Cambridge and Cornell University
\AND
\name Viacheslav Borovitskiy
\email viacheslav.borovitskiy@gmail.com \\
\addr ETH Zürich
}
\begin{document}

\maketitle

\begin{table}[b!]
\vspace*{-1.5ex}
\footnoterule
\footnotesize\textsuperscript{\ensuremath{*}}Joint first author.
\hfill\null
\end{table}

\begin{abstract}
Gaussian processes are arguably the most important class of spatiotemporal models within machine learning.
They encode prior information about the modeled function and can be used for exact or approximate Bayesian learning.
In many applications, particularly in physical sciences and engineering, but also in areas such as geostatistics and neuroscience, invariance to symmetries is one of the most fundamental forms of prior information one can consider.
The invariance of a Gaussian process' covariance to such symmetries gives rise to the most natural generalization of the concept of stationarity to such spaces.
In this work, we develop constructive and practical techniques for building stationary Gaussian processes on a very large class of non-Euclidean spaces arising in the context of symmetries.
Our techniques make it possible to (i) calculate covariance kernels and (ii) sample from prior and posterior Gaussian processes defined on such spaces, both in a practical manner.
This work is split into two parts, each involving different technical considerations: part I studies compact spaces, while part II studies non-compact spaces possessing certain structure.
Our contributions make the non-Euclidean Gaussian process models we study compatible with well-understood computational techniques available in standard Gaussian process software packages, thereby making them accessible to practitioners.
\end{abstract}

\begin{keywords}
Gaussian processes, kernels, geometric learning, stationarity, symmetries, Lie groups, homogeneous spaces, Riemannian manifolds.
\end{keywords}

\section{Introduction}

Gaussian processes are widely used as machine learning models for learning unknown functions, particularly when data is scarce.
Using Bayesian methods, Gaussian processes are able to quantify uncertainty, thereby allowing what is known and what is unknown to be balanced when modeling and making decisions.
To maximize statistical efficiency, and ensure their posterior error bars are appropriate and can be trusted, the Gaussian process models we employ in practice should be adapted to their respective settings.
Their prior information should correctly encode the structure of the problem at hand.

Many problems, particularly in the physical sciences and engineering, require modeling functions on non-Euclidean spaces.
For example, spatial statistics on the surface of earth \cite{jeong2017,guinness2016}, cosmology \cite{lang2015} and medical imaging \cite{andersson2015} often call for Gaussian process models on the sphere, which is a very well-studied setting---see for instance \textcite{marinucci2011}.
Other applications, however, require Gaussian process models on different spaces: for many of these, appropriate statistical tools are only partially developed, particularly if the application demands careful consideration of symmetries.
For example, control parameter tuning  and parametric policy adaptation in robotics \cite{jaquier2020, jaquier2022}, latent representation learning in neuroscience \cite{jensen2020} or modeling unknown dynamics of physical systems \cite{borovitskiy2020,hutchinson2021} require models on various \emph{Lie groups} and their \emph{homogeneous spaces}.
At present, these applications tend to rely on ad-hoc heuristic or partially understood techniques, rather than on a principled and comprehensive formal treatment.

In this two-part work, we study Gaussian process priors on a large class of non-Euclidean spaces.
We focus on \emph{stationary} priors, whose covariance kernels are invariant under symmetries of the space, generalizing the usual Euclidean notion of stationarity as shift-invariance.
We work with spaces whose group of symmetries is rich enough, including, in part I, compact \emph{Lie groups} such as the special orthogonal group $\SO(n)$ or the special unitary group $\SU(n)$, and their \emph{homogeneous spaces} such as the sphere $\bb{S}_n$ or the Stiefel manifold $\f{V}(k, n)$.
In part~II~\cite{part2}, we consider certain \emph{non-compact symmetric spaces} such as the hyperbolic spaces $\bb{H}^n$ and spaces of symmetric positive definite matrices~$\SPD(n)$.
As part of our work, we contribute implementations of these ideas to the \href{https://geometric-kernels.github.io}{\textsc{GeometricKernels}}\footnote{Available at \url{https://geometric-kernels.github.io}. A prototypical software implementation, as well as our experiments, are can be found at \url{https://github.com/imbirik/LieStationaryKernels}.} library.
We now proceed to specify notation and more formally introduce the setting.

\paragraph*{Notation.}
We use bold italic letters ($\v{a}$, $\v{b}$) to denote finite-dimensional vectors in $\R^n$, as well as elements of the product space $X^n$ for a given set $X$.
We use bold upface letters ($\m{A},\m{B}$) to denote matrices.
If $f: X \-> Y$ then for $\v{x} \in X^n$ we write $f(\v{x})$ to denote the vector $\del{f(x_1), \ldots, f(x_n)}^{\top}$.
Analogously, for a function~$k: A \x B \-> C$, for $\v{a} \in A^n$ and $\v{b} \in B^m$ we write $k(\v{a}, \v{b})$ to denote the matrix with entries $k(a_\ell, b_{\ell'})$ for $1 \leq \ell \leq n$ and $1 \leq \ell' \leq m$.
We often write this as $\m{K}_{\v{a} \v{b}} = k(\v{a}, \v{b})$.
We use $\conj{x}$ to denote conjugation of complex numbers.
We denote the conjugate transpose of a matrix by $\m{A}^* = \conj{\m{A}^{\top}}$.

\subsection{Kernels and Gaussian Processes}
\label{sec:gp-intro}

Let $X$ be a set.
We say that a stochastic process $f$ indexed by $X$ is a \emph{Gaussian process} if, for any finite set of inputs $\v{x} \in X^N$, the random vector $f(\v{x})$ is multivariate Gaussian.
Every such process is determined by its \emph{mean function} $\mu(\.) = \E(f(\.))$ and \emph{covariance kernel} $k(\.,\.') = \Cov(f(\.),f(\.'))$, so we write ${f \~[GP](\mu, k)}$.
We assume all Gaussian processes we consider are real-valued, unless stated otherwise or apparent from context.

Assume $y_i = f(x_i) + \eps_i$ with $\v\eps \~[N](\v{0},\m\Sigma)$, and let $\v{x},\v{y}$ be the data. 
If $f\~[GP](0,k)$ is the prior Gaussian process, then the posterior $f\given\v{y}$ is also a Gaussian process, whose mean and covariance kernel are
\[
\E(f\given\v{y}) &= \m{K}_{(\.)\v{x}}(\m{K}_{\v{x}\v{x}} + \m\Sigma)^{-1}\v{y}
&
\Cov(f\given\v{y}) &= \m{K}_{(\.,\.')} - \m{K}_{(\.)\v{x}}(\m{K}_{\v{x}\v{x}} + \m\Sigma)^{-1}\m{K}_{\v{x}(\.')}
\]
where $(\.)$ and $(\.')$ denote arbitrary sets of evaluation locations.
We will also use another way of thinking about the posterior: using \emph{pathwise conditioning} \cite{chiles2009,wilson2020,wilson2021}.
From this viewpoint, the posterior can be written
\[
\label{eqn:intro:pathwise}
(f\given \v{y})(\.) = f(\.) + \m{K}_{(\.)\v{x}} (\m{K}_{\v{x}\v{x}} + \m\Sigma)^{-1}(\v{y} - f(\v{x}) - \v\eps)
&&
\v\eps \~[N](\v{0},\m\Sigma)
\]
where equality is taken to be in distribution.

In order to perform Bayesian learning with Gaussian processes, a prior Gaussian process, or a parametric family thereof, must be chosen.
In the Euclidean setting, priors given by \emph{stationary kernels} are a widely-used choice, which we now discuss.

\subsection{Stationary Kernels and Gaussian Processes in Euclidean Spaces}

A kernel $k$ over $\R^n$ is called \emph{stationary} if it is translation invariant in the sense that
\[
k(\v{x} + \v{c}, \v{x}' + \v{c}) = k(\v{x}, \v{x}')
\]
for all $\v{c} \in \R^n$.
In this case, there is a function $\Bbbk : \R^n \-> \R$ such that
\[
k(\v{x},\v{x}') = \Bbbk(\v{x}-\v{x}')
.
\]
A Gaussian process, similarly, is called stationary if its mean function is translation invariant---hence, constant, usually assumed zero---and its covariance kernel is stationary.

Stationary kernels are widely used in applications.
This includes the widely-popular family of \emph{Matérn} kernels advocated by \textcite{stein1999}, which uses three interpretable parameters: amplitude $\sigma^2$ which controls variability, length scale $\kappa$ which controls spatial decay, and smoothness $\nu$ which controls differentiability of the process.
The limiting case as $\nu\->\infty$ of the Matérn family is arguably the single most popular kernel used, the squared exponential (heat, Gaussian, RBF, diffusion) kernel which induces an infinitely differentiable Gaussian process prior.
We will study these kernels in more general settings: as a glimpse of what we will be able to carry out, we illustrate Gaussian process regression with a Matérn kernel on a real projective space, using the techniques of \Cref{sec:heat_matern}, in \Cref{fig:gp-regression}.

\begin{figure}
\begin{subfigure}{0.2425\textwidth}
\includegraphics[scale=0.25]{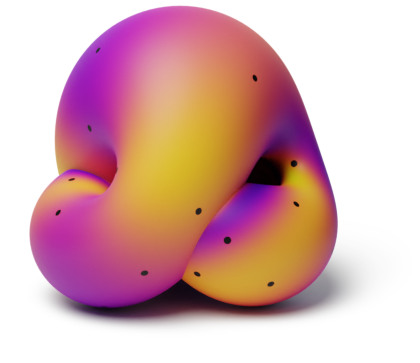}
\caption{Ground truth}
\end{subfigure}
\begin{subfigure}{0.2425\textwidth}
\includegraphics[scale=0.25]{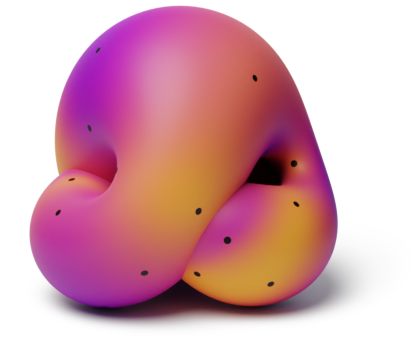}
\caption{Mean}
\end{subfigure}
\begin{subfigure}{0.2425\textwidth}
\includegraphics[scale=0.25]{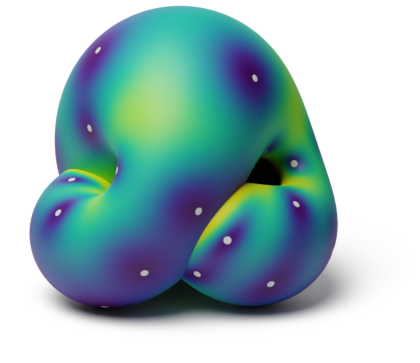}
\caption{Standard dev.}
\end{subfigure}
\begin{subfigure}{0.2425\textwidth}
\includegraphics[scale=0.25]{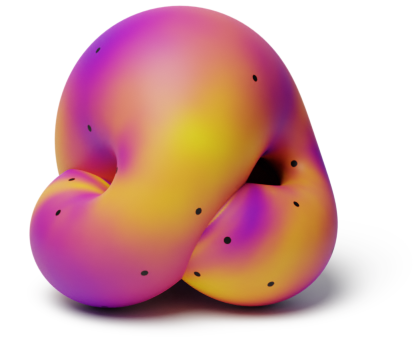}
\caption{Posterior sample}
\end{subfigure}
\caption{We illustrate a regression problem on a real projective plane $\RP_2$, with a Matérn-3/2 kernel in the sense of \Cref{sec:heat_matern}. Here and below we use the Boy's surface immersion of $\RP_2$ into $\mathbb{R}^3$ \cite{bryant1988}, which is non-isometric. Regression is applied to model the function  $y(x) = \sum_{j=1}^5 \cos(d_{\RP_2}(x, \xi_j))$ where $\xi_j$ are uniform samples on $\RP_2$. 
We show the resulting posterior mean and standard deviation, and one posterior sample.}
\label{fig:gp-regression}
\end{figure}

Stationary kernels have important properties.
Bochner's Theorem says that continuous stationary kernels are in bijective correspondence with finite nonnegative measures via the inverse Fourier transform, namely
\[
\Bbbk(\v{x}-\v{x}')
=
\int_{\R^n} e^{2\pi i \innerprod{\v\psi}{\v{x}-\v{x}'}} \d\Psi(\v\psi)
=
\int_{\R^n} \cos\del[0]{2\pi \innerprod[0]{\v\psi}{\v{x}-\v{x}'}} \d\Psi(\v\psi)
\]
where $\Psi$ is the spectral measure of $\Bbbk$, and the second equality follows from $\Bbbk$ being real-valued.
This both gives a way of defining new kernels---by specifying their respective spectral measures---and calculating them numerically.
In particular, defining $\v\phi$ to be a vector of complex exponentials, and, following \textcite{rahimi08}, discretizing the integral by using Monte Carlo, we obtain the approximation
\[
\Bbbk(\v{x}-\v{x}') \approx \v\phi(\v{x})^\top\v\phi(\v{x}').
\]
Letting $\phi_s$ be the components of $\v\phi$, this further leads to the expression
\[ \label{eqn:intro_f_rff}
f(\.) \approx \sum_{s=1}^S w_s \phi_s(\.)
&
&
w_\ell \~[N](0,1)
\]
for approximating the Gaussian process itself.
This enables one to efficiently draw approximate samples from the prior, and transform them into approximate posterior samples via \Cref{eqn:intro:pathwise} by plugging in the approximate prior in place of $f(\.)$ and $f(\v{x})$.

The \emph{feature map} $\v\phi$ is not uniquely determined: one choice is to use sines and cosines
\[
\label{eq:euclidean:rff}
\phi_{2 s - 1} (\v{x})
&=
\sigma \sqrt{\frac{2}{S}} \cos(2\pi \innerprod{\v\psi_s}{\v{x}})
&
\phi_{2 s} (x)
&=
\sigma \sqrt{\frac{2}{S}} \sin(2\pi \innerprod{\v\psi_s}{\v{x}})
&
\v\psi_s &\~\frac{1}{\sigma^2}\Psi
\]
where $s \in {1, \ldots, S/2}$ for $S$ even, and $\sigma^2 = \Psi(\R^n)$.
Observe that, here, sines and cosines come in pairs: each sample from the spectral measure corresponds to exactly two functions used in the approximation.
An alternative choice is to use only cosines, but add a random phase
\[ \label{eqn:intro_rfff}
\phi_s(\v{x}) &= \sigma \sqrt{\frac{2}{S}} \cos(2\pi \innerprod{\v\psi_s}{\v{x}} + \beta_s)
&
\v\psi_s &\~ \frac{1}{\sigma^2}\Psi
&
\beta_s &\~[U](0,2\pi)
\]
for $s \in 1, \ldots, S$ and $\sigma^2 = \Psi(\R^n)$, where $\f{U}$ is the uniform distribution.
An important advantage of $\phi$ in \Cref{eq:euclidean:rff,eqn:intro_rfff} over the complex exponentials we begun with is their real-valuedness.
General analogs of Bochner's theorem and Fourier feature expansions similar to~\Cref{eq:euclidean:rff,eqn:intro_rfff} will be the cornerstone of our considerations in the geometric settings~we~study.

\subsection{Previous Work and Contributions}

Recently, authors in the statistics and machine learning communities have begun exploring generalizations of Gaussian process models to non-Euclidean settings---and, in particular to Riemannian manifolds, of which the spaces studied in this work are special cases.

The first difficulty here occurs in defining a positive semi-definite covariance kernel to begin with.
For instance, \textcite{feragen15} have shown that, for a Riemannian manifold $X$ with geodesic distance $d$, if one generalizes the widely used squared exponential kernel naïvely by substituting the Euclidean distance with $d$, then the obtained expression
\[ \label{eqn:geodesic_sq_exp}
\sigma^2 \exp\del{-\frac{d(x,x')^2}{2\kappa^2}}
\]
will be positive semi-definite for all $\kappa > 0$ if and only if $X$ is isometric to a Euclidean space.
In some cases, there exist even stronger results of this type.
For example, \textcite{dacosta2023} show that the kernel in~\Cref{eqn:geodesic_sq_exp} is not positive semi-definite for all values of $\kappa$---rather than for some---if the space $X$ of interest belongs to the class of \emph{symmetric spaces},\footnote{The class of symmetric spaces is a subclass of homogeneous spaces that includes, for example, spheres and hyperbolic spaces.} thereby solving the open problem posed by \textcite{feragen16} in the case of symmetric spaces.
This distance-based approach is therefore a technical disaster in need of alternatives.

One alternative, pursued by \textcite{lindgren11,said13,coveney20,borovitskiy2020} and mostly restricted to compact manifolds, involves defining Riemannian Gaussian processes using stochastic partial differential equations.
A key result in this setting is that the kernels of certain such processes admit \emph{manifold Fourier feature} expansions
\[ \label{eqn:compact_manifold_kernel}
k(x,x') = \sigma^2 \sum_{j=0}^\infty \Psi(\lambda_j) f_j(x) f_j(x')
\]
where $(\lambda_j,f_j)$ are eigenpairs of the (negated) Laplace--Beltrami operator $-\Delta$, and $\Psi(\lambda_j)$ can be viewed as the generalized spectral measure---now supported on the nonnegative integers.

The upside of this approach is that, in principle, it allows one to compute covariance kernels without needing to solve stochastic partial differential equations.
The downside is that computing the eigenpairs analytically or numerically is for many manifolds a non-trivial problem in its own right.
One can observe that~\Cref{eqn:compact_manifold_kernel} coincides with various explicit expansions known for spheres and tori, derived from Bochner-type symmetry-based considerations, rather than from differential equations.
This suggests that to build a deeper understanding, one should also study these phenomena from the perspective of stationarity under group action. 
This will be the starting point for our work.
For instance, if viewed through an appropriate lens, in part I of this work we present techniques to bypass or solve the problem of finding Laplace--Beltrami eigenpairs for a large class of compact Riemannian~manifolds.

If we set $S(\lambda) = e^{-\lambda t}$ in~\Cref{eqn:compact_manifold_kernel}, the obtained expression coincides with the \emph{heat kernel} (also known as the \emph{diffusion kernel}), the fundamental solution of the heat equation, which generalizes the squared exponential kernel in a sound manner, without incurring the problems of geodesic-based approaches such as \Cref{eqn:geodesic_sq_exp} \cite{grigoryan2009}.
This kernel has been studied by many authors in different specific settings \cite{lafferty2005,kondor2008,zhao2018}, and is a landmark target of development for our work.
\textcite{niu2019,ye2020} suggest a general approach to approximate heat kernel's values by simulating random walks, leveraging connections between the heat equation and Brownian motion.
While this approach can work, it is also a computationally-expensive method-of-last-resort, since it can converge slowly and is not guaranteed to give a positive semi-definite kernel for finite samples.
We develop different techniques that simultaneously apply to more general kernels, including analogs of Matérn kernels, and obtain approximations which are guaranteed positive semi-definite.

We begin with an abstract description of stationary Gaussian processes on a Riemannian manifold $X$ acted on by a suitably well-behaved Lie group $G$, given by \textcite{yaglom1961} at the dawn of stochastic process theory.
This description relies on certain far-reaching generalizations of harmonic analysis built upon the notions of \emph{representation theory}.
We develop this abstract description into a fully-accessible computational tool suitable for day-to-day statistical practice.
We do this by making Yaglom's abstract expressions constructive: specifically, we bring together and synthesize various theory and techniques to develop computational methods for statistical learning based on representation-theoretic expressions.
This includes the following classes of approximations.

\1 Algorithms for pointwise kernel evaluation, with automatic differentiation support.
\2 Algorithms for efficiently drawing sample paths, both from priors and posteriors.
\0

These algorithms approximate stationary kernels by linear combinations of functions which depend only on the given space, and admit closed-form expressions arising from algebraic considerations.
The coefficients in these linear combinations depend on the particular kernel, but for the aforementioned heat and Matérn kernels can be computed exactly.
Our sampling routines rely on the same quantities, and additionally require only that one can sample uniformly on the space of interest.

Together, these algorithmic primitives provide practitioners with all of the tools needed to define Gaussian processes for spatial models \cite{cressie92} and decision-making algorithms such as Bayesian optimization \cite{frazier18}, along with primitives for constructing vector-valued processes \cite{hutchinson2021} and other extensions, and studying their theoretical properties \cite{rosa23}.
The techniques developed provide a blueprint for how one might derive other non-Euclidean kernels from symmetry-based considerations, including potentially on discrete spaces \cite{borovitskiy2021,borovitskiy2022} or more general algebraic structures where representation-theoretic considerations apply \cite{fukumizu08}.
This is a purely theoretical paper: for applications, we refer the reader to the aforementioned~works.

In total, our work makes it possible to use Gaussian processes on a much wider collection of manifolds admitting analytic descriptions than was previously~practically~available.

\section{Stationary Gaussian Processes under Group Action} \label{sec:stationary}

Let $X$ be a set, and let $G$ be a group acting on $X$ with group action $\lacts : G \x X \-> X$.
A kernel is called \emph{stationary}\footnote{Note that the term \emph{homogeneous} is also used in the literature, notably by \textcite{yaglom1961}.} if it satisfies 
\[
k(g \lacts x, g \lacts x') = k(x,x')  
\]
for all $g \in G$ and all $x,x' \in X$.
Unless otherwise stated, we assume that $k$ is real valued and continuous.
From this, one can easily see that if $f\~[GP](0,k)$ has a stationary kernel, then the processes $f(g\lacts \cdot)$ and $f(\cdot)$ have the same distribution.
In this case, we say that $f$ is stationary.
Without loss of generality, we work with processes with zero mean.
We now introduce the key results of \textcite{yaglom1961} for different classes of spaces---these are, effectively, far-reaching generalizations of the Bochner's theorem.

\subsection{Stationary Kernels on Lie Groups} \label{sec:stationary:lie}

A \emph{Lie group} $(G,\bdot)$ is a group which is also a smooth manifold.
We focus on the case where $G$ acts on itself from both the left and the right, and defer the case where only one of these actions is present to \Cref{sec:stationary:homogeneous}.
This falls into the general notion of stationarity introduced earlier by considering the product group $G \times G$ acting on the set $G$ by $(g_1, g_2) \lacts x = g_1 \bdot x \bdot g_2^{-1}$.
In that case, for $(g_1, g_2) \in G \times G$, we have
\[
k((g_1, g_2) \lacts x, (g_1, g_2) \lacts x')
=
k(g_1 \bdot x \bdot g_2^{-1}, g_1 \bdot x' \bdot g_2^{-1})
=
k(x,x')
.
\]
Kernels---and general functions---that satisfy this property are called \emph{bi-invariant}.
Letting $e \in G$ be the identity element, we can define a single-argument function $\Bbbk : G \-> \R$ such~that
\[
k(x_1,x_2)
=
k(x_2^{-1} \bdot x_1, e)
=
\Bbbk(x_2^{-1} \bdot x_1)
=
\Bbbk(x_1 \bdot x_2^{-1})
\]
which directly generalizes the one-argument notion of the Euclidean case.

We will need notions and results from the \emph{representation theory} of compact Lie groups.
The idea of representation theory is to study the structure of a group $G$ by studying how it can be instantiated as the group of transformations of some vector space, which is often more concrete and easier to understand.
More formally, a \emph{representation} $\pi : G \-> \GL(V)$ is a homomorphism from $G$ to the general linear group, consisting of bijective linear transformations of a (complex) vector space~$V$.
A representation $\pi$ is called \emph{unitary} if its range consists of unitary matrices.
Representations that do not have any invariant subspaces except $\{0\}$ and $V$, and therefore cannot be decomposed into a direct sum of smaller representations, are called \emph{irreducible}.

We work with smooth representations of compact Lie groups.
In the setting at hand, there are at most a countable number of irreducible unitary representations---let $\Lambda$ be an index set in bijective correspondence with the set of such representations.
We defer the question of how to actually obtain such a set to subsequent sections.
Let $\lambda\in\Lambda$ be an index.
One can show that every smooth irreducible unitary representation
\[
\pi^{(\lambda)} : G \-> GL(V_\lambda)
\]
is finite-dimensional, in the sense that the vector space $V_\lambda$ is finite-dimensional.
Let $d_\lambda = \dim V_\lambda$.
If we introduce a $G$-invariant inner product $\innerprod{\cdot}{\cdot}_{V_{\lambda}}$ and an orthonormal basis $e_1, \ldots, e_{d_{\lambda}}$ in $V_{\lambda}$, we can define the \emph{matrix coefficients} of the representation $\pi^{(\lambda)}$ by
\[
\pi_{j k}^{(\lambda)}: G &\-> \C
&
\pi_{j k}^{(\lambda)}(g) &= \innerprod[1]{\pi^{(\lambda)}(g) e_j}{e_k}_{V_{\lambda}}
\]
where $1 \leq j,k \leq d_\lambda$.
Let $\mu_G$ be the unique probabilistic Haar measure on $G$, which by compactness always exists.
By the Peter--Weyl Theorem---see \textcite{folland1995} as well as \textcite{weil1940}---the matrix coefficients form an orthogonal basis in the space $L^2(G, \mu_G)$ with $\norm[0]{\pi_{j k}^{(\lambda)}}_{L^2(G, \mu_G)} = 1/\sqrt{d_{\lambda}}$.
Thus, $f \in L^2(G, \mu_G)$ may be expressed in this basis in the form of an infinite series using the functions $\sqrt{d_{\lambda}}\pi_{j k}^{(\lambda)}(\.)$.
This generalizes Fourier series to the case of a compact topological group $G$.
Define the \emph{character} $\chi^{(\lambda)}: G \-> \C$ corresponding to the representation $\pi^{(\lambda)}$ by the expression
\[
\chi^{(\lambda)}(g)
=
\tr \pi^{(\lambda)}(g)
=
\sum_{j = 1}^{d_{\lambda}} \pi^{(\lambda)}_{j j}(g)
.
\]
Note that the character does not depend on a choice of a basis $e_1,\ldots,e_{d_{\lambda}}$---a property inherited from the trace operator $\tr$.
Having introduced the necessary notions, we are now ready to present the representation-theoretic description of stationary Gaussian processes on compact Lie groups.

\begin{figure}
\begin{subfigure}{0.24\textwidth}
\includegraphics[scale=0.175,trim=75 55 75 75,clip]{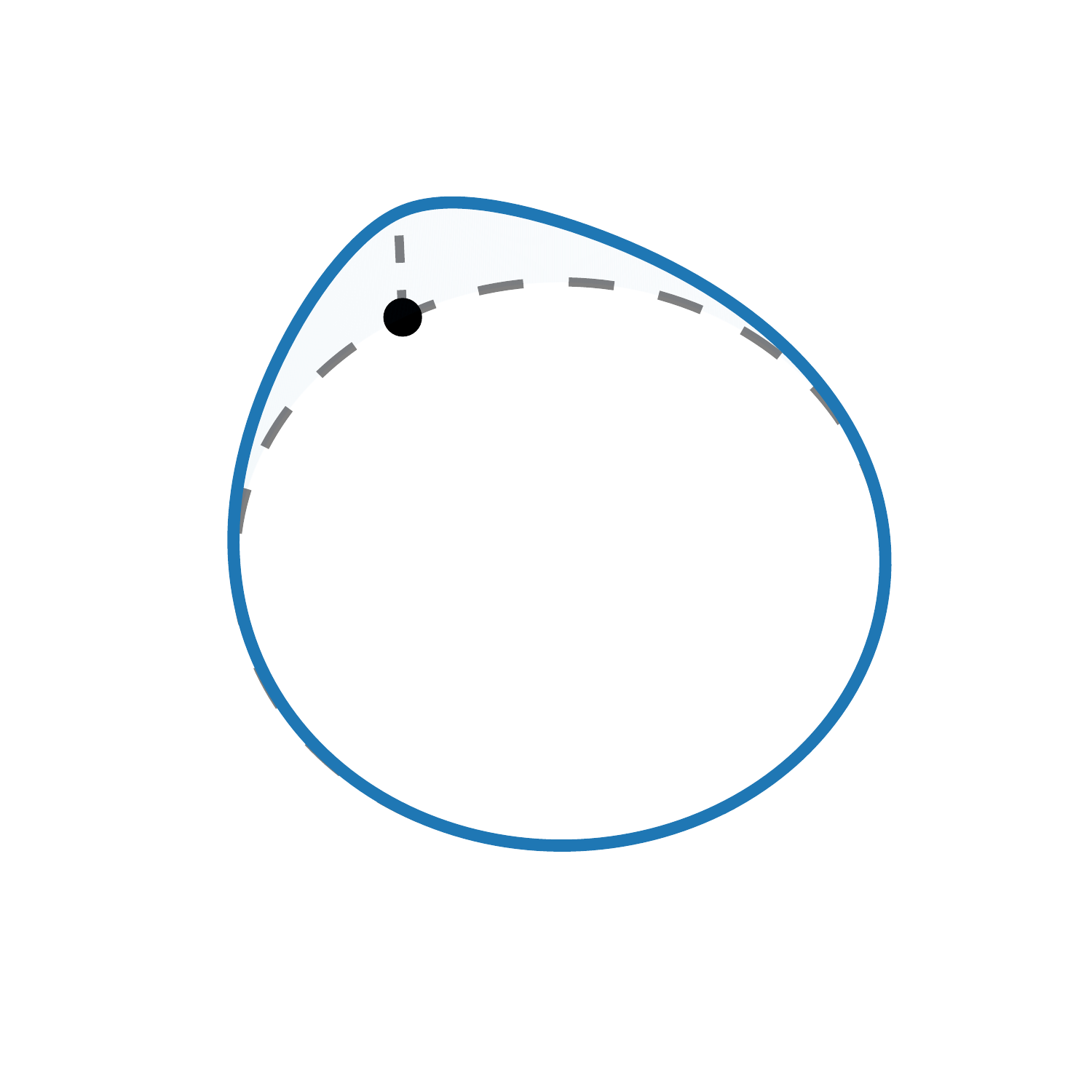}
\caption{Kernel}
\end{subfigure}
\begin{subfigure}{0.24\textwidth}
\includegraphics[scale=0.175,trim=75 55 75 75,clip]{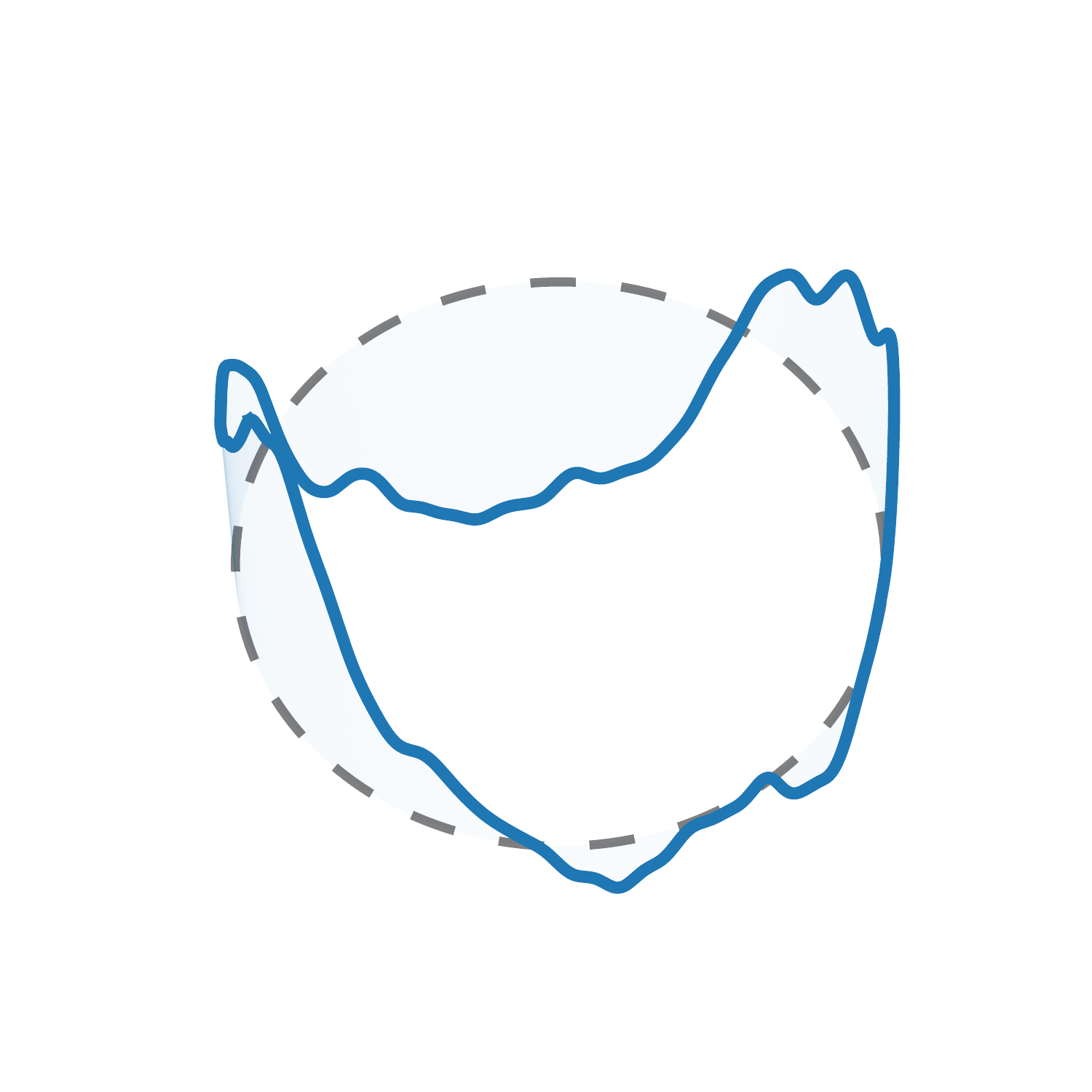}
\caption{Sample}
\end{subfigure}
\begin{subfigure}{0.24\textwidth}
\includegraphics[scale=0.175,trim=75 55 75 75,clip]{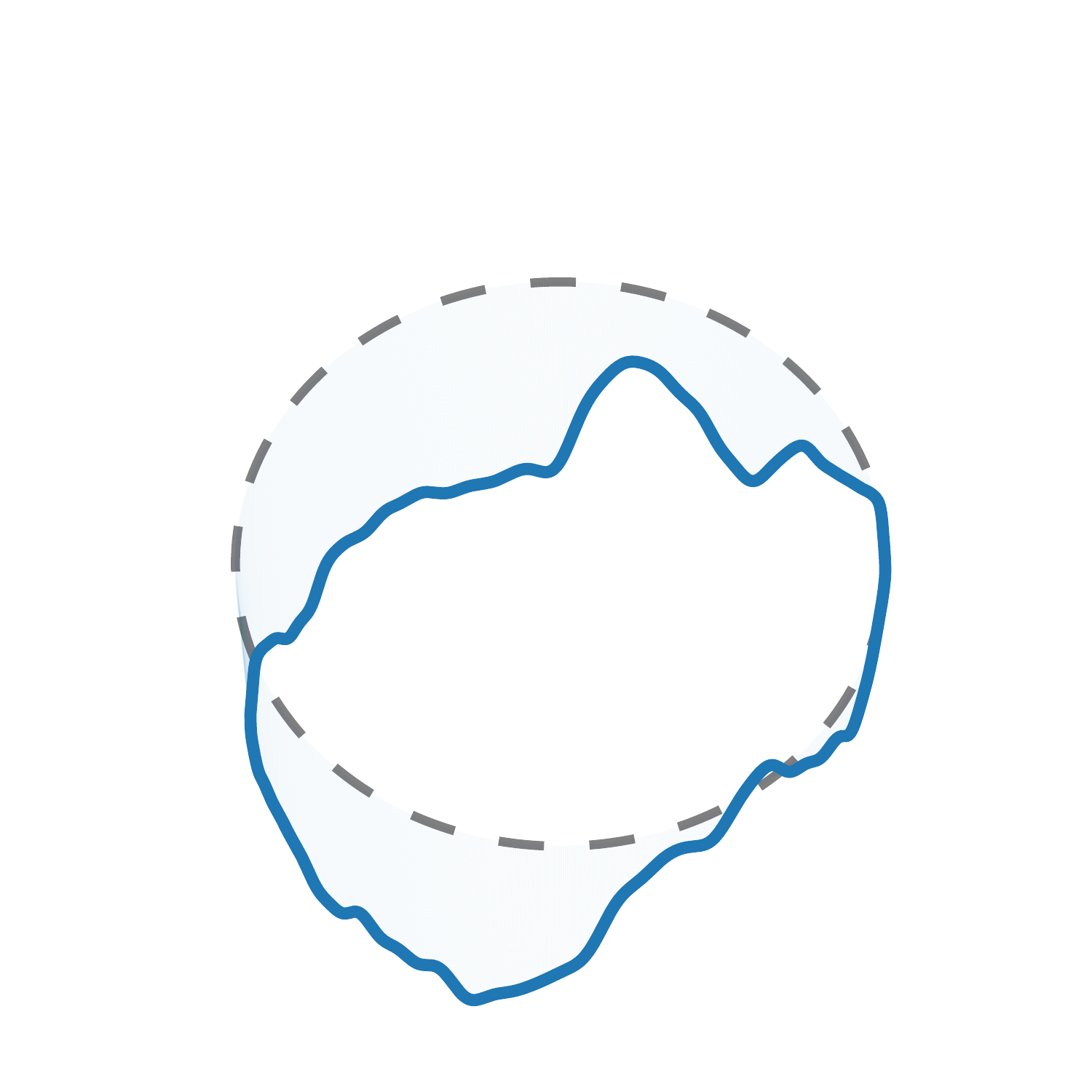}
\caption{Sample}
\end{subfigure}
\begin{subfigure}{0.24\textwidth}
\includegraphics[scale=0.175,trim=75 55 75 75,clip]{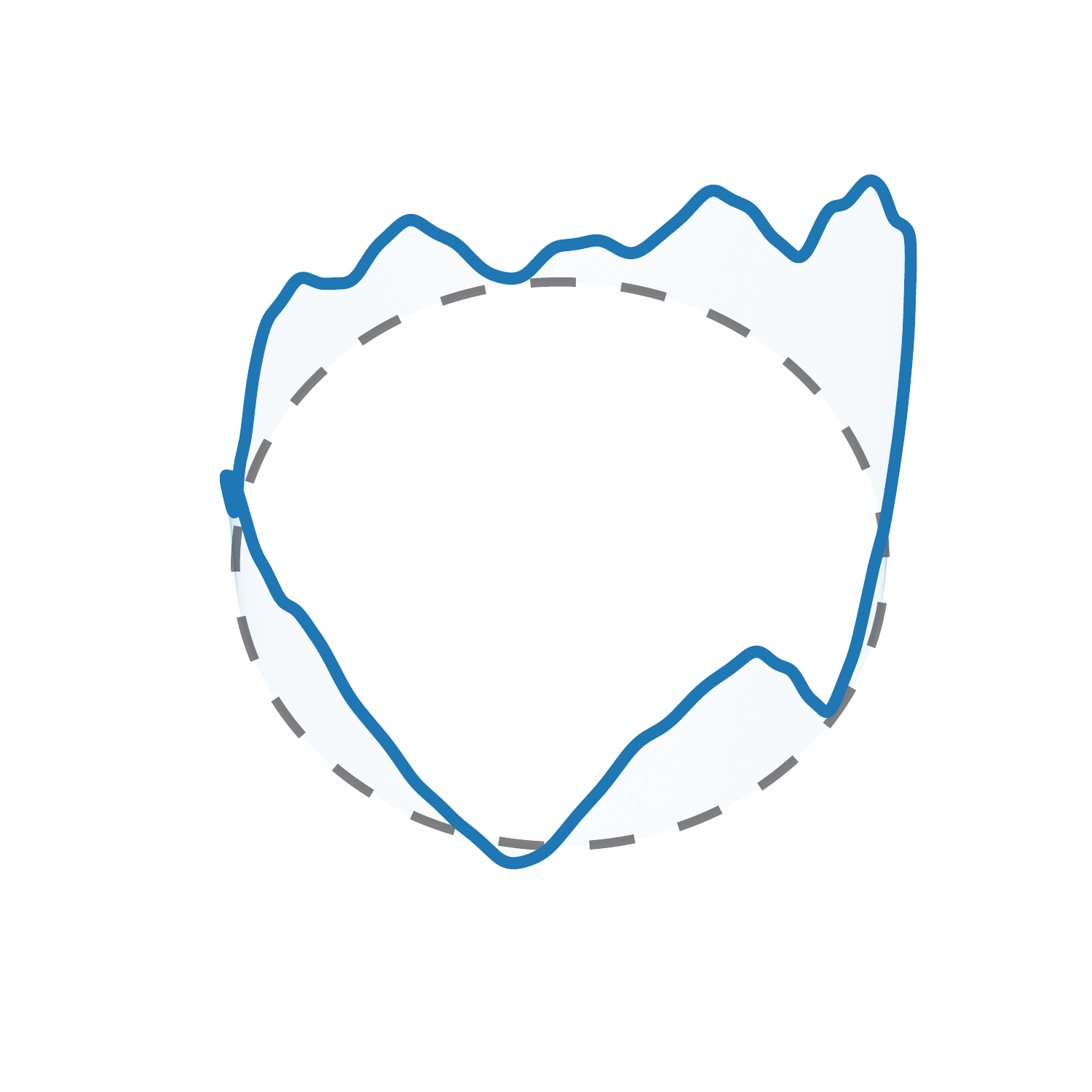}
\caption{Sample}
\end{subfigure}
\caption{Illustration of the Matérn-3/2 kernel of \Cref{sec:heat_matern}, and three random samples from its respective prior Gaussian process, on the circle group $\mathbb{S}_1$, which acts on itself by~rotation.}
\label{fig:circle-kernel}
\end{figure}

\begin{restatable}{theorem}{ThmStationaryGroup} \label{thm:stationary_group}
A Gaussian process $f \~[GP](0, k)$ on a compact Lie group $G$ is stationary with respect to left-and-right action of $G$ on itself if and only if $k$ is of form
\[ \label{eqn:stationary_lie:kernel}
k(g_1, g_2) = \sum_{\lambda \in \Lambda} a^{(\lambda)} \, \Re\chi^{(\lambda)}(g_2^{-1} \bdot g_1)
\]
where $a^{(\lambda)} \geq 0$ satisfy $\sum_{\lambda\in\Lambda} d_{\lambda} a^{(\lambda)} < \infty$.
Moreover, for all $\lambda$, $\Re\chi^{(\lambda)}$ is positive-definite.
\end{restatable}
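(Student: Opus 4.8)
The plan is to reduce the statement to the harmonic analysis of central (class) functions and then close the loop with a Mercer-type spectral argument. First I would translate stationarity into conditions on the single-argument function $\Bbbk$. Bi-invariance gives $k(g_1,g_2)=\Bbbk(g_2^{-1}\bdot g_1)$ as in the excerpt, and applying bi-invariance with an inner conjugating element shows $\Bbbk(h^{-1}\bdot g\bdot h)=\Bbbk(g)$, i.e. $\Bbbk$ is a \emph{class function}. Symmetry of the covariance, $k(g_1,g_2)=k(g_2,g_1)$, forces $\Bbbk(g)=\Bbbk(g^{-1})$, and by assumption $\Bbbk$ is real-valued and continuous. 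So the problem becomes: characterize the real, inversion-symmetric, continuous class functions $\Bbbk$ for which the kernel is positive semi-definite.

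Next I would expand $\Bbbk$ in characters. The characters $\chi^{(\lambda)}$ form an orthonormal basis of the closed subspace of class functions in $L^2(G,\mu_G)$ (a consequence of Peter--Weyl together with the orthogonality relations), so $\Bbbk=\sum_\lambda c^{(\lambda)}\chi^{(\lambda)}$ with $c^{(\lambda)}=\innerprod{\Bbbk}{\chi^{(\lambda)}}_{L^2}$. Using $\overline{\chi^{(\lambda)}(g)}=\chi^{(\lambda)}(g^{-1})=\chi^{(\lambda^*)}(g)$ for the dual representation $\lambda^*$, real-valuedness of $\Bbbk$ yields $c^{(\lambda)}=\overline{c^{(\lambda^*)}}$ while inversion symmetry yields $c^{(\lambda)}=c^{(\lambda^*)}$; together these force every $c^{(\lambda)}$ to be real. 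Since $\Re\chi^{(\lambda)}=\Re\chi^{(\lambda^*)}$, setting $a^{(\lambda)}=c^{(\lambda)}$ and summing over all of $\Lambda$ reproduces $\Bbbk=\sum_\lambda a^{(\lambda)}\Re\chi^{(\lambda)}$ with $a^{(\lambda)}\in\R$, which is the claimed form modulo the sign and summability conditions.

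For the ``moreover'' claim I would exploit the Gram structure of characters. Because $\pi^{(\lambda)}$ is unitary, $\chi^{(\lambda)}(g_2^{-1}\bdot g_1)=\tr\left(\pi^{(\lambda)}(g_2)^*\pi^{(\lambda)}(g_1)\right)=\innerprod{\pi^{(\lambda)}(g_1)}{\pi^{(\lambda)}(g_2)}_{\mathrm{HS}}$, so for complex coefficients $\sum_{i,j}\overline{z_i}z_j\,\chi^{(\lambda)}(g_i^{-1}\bdot g_j)=\norm{\sum_i z_i\pi^{(\lambda)}(g_i)}_{\mathrm{HS}}^2\ge 0$; hence the character kernel is Hermitian positive semi-definite. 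Restricting to real coefficients, this sum is already real and therefore equals $\sum_{i,j}z_iz_j\Re\chi^{(\lambda)}(g_i^{-1}\bdot g_j)$, showing $\Re\chi^{(\lambda)}$ is positive-definite.

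The crux is the equivalence of positive-definiteness with $a^{(\lambda)}\ge0$, which I would handle via Mercer's theorem. Substituting $\pi^{(\lambda)}(g^{-1})=\pi^{(\lambda)}(g)^*$ into the character expansion gives $k(g_1,g_2)=\sum_\lambda \frac{a^{(\lambda)}}{d_\lambda}\sum_{j,k}\psi^{(\lambda)}_{kj}(g_1)\overline{\psi^{(\lambda)}_{kj}(g_2)}$, where $\psi^{(\lambda)}_{kj}=\sqrt{d_\lambda}\,\pi^{(\lambda)}_{kj}$ are orthonormal in $L^2(G,\mu_G)$ by Peter--Weyl. This is precisely the spectral decomposition of the integral operator $T_k$, whose eigenvalues are $a^{(\lambda)}/d_\lambda$ with multiplicity $d_\lambda^2$. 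Since a continuous kernel is positive semi-definite iff $T_k$ is a positive operator, positive-definiteness is equivalent to $a^{(\lambda)}\ge0$ for all $\lambda$; the ``if'' direction then follows since a nonnegative combination of the positive-definite kernels $\Re\chi^{(\lambda)}$ is again positive-definite. Finally, the trace identity $\Bbbk(e)=\sum_\lambda a^{(\lambda)}d_\lambda$ equals the finite pointwise variance of $f$, giving $\sum_\lambda d_\lambda a^{(\lambda)}<\infty$, and the same bound $|\Re\chi^{(\lambda)}|\le d_\lambda$ yields uniform convergence, so continuity is preserved and pointwise positive-definiteness passes to the limit. I expect the main obstacle to be the careful passage between ``$k$ positive semi-definite'' and ``$T_k$ a positive operator'' --- the rigorous invocation of Mercer's theorem together with the bookkeeping of the multiplicities and the $1/\sqrt{d_\lambda}$ normalization in the eigenexpansion.
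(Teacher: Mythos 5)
Your proposal is correct, but it takes a genuinely different route from the paper. The paper's proof of the characterization is essentially a citation: it invokes Theorem~2 of Yaglom (1961) for the expansion $k(g_1,g_2)=\sum_\lambda a^{(\lambda)}\chi^{(\lambda)}(g_2^{-1}\bdot g_1)$ with $a^{(\lambda)}\ge0$ and $\sum_\lambda d_\lambda a^{(\lambda)}<\infty$, and then only supplies the passage to the real-valued case by taking real parts. You instead reprove the characterization from scratch: reduce to a real, inversion-symmetric, continuous class function $\Bbbk$, expand in characters (orthonormal basis of class functions via Peter--Weyl), use the pairing $\lambda\leftrightarrow\lambda^*$ to force real coefficients, and obtain the sign condition $a^{(\lambda)}\ge0$ by identifying $a^{(\lambda)}/d_\lambda$ as the eigenvalues of the integral operator $T_k$ and invoking Mercer; the summability then falls out of the trace identity $\Bbbk(e)=\sum_\lambda d_\lambda a^{(\lambda)}$. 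Your argument for the ``moreover'' claim (the character as a Hilbert--Schmidt inner product of the unitary matrices $\pi^{(\lambda)}(g_i)$, hence a Gram kernel) is essentially identical to the paper's. What each approach buys: the paper's proof is short and defers the analytic content to a classical reference; yours is self-contained and makes transparent exactly where nonnegativity and summability come from. One point to order carefully in a full write-up, which you partly flag yourself: the character expansion of $\Bbbk$ is a priori only an $L^2$ identity, so you must first establish $T_k\ge0$ from positive semi-definiteness of the continuous kernel (via Riemann-sum approximation of $\langle T_kf,f\rangle$) and only then invoke Mercer to upgrade to uniform pointwise convergence --- both for the displayed formula~\eqref{eqn:stationary_lie:kernel} and for the trace identity giving $\sum_\lambda d_\lambda a^{(\lambda)}<\infty$; with that ordering the argument is complete.
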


\begin{proof}
This follows mostly by specializing Theorem 2 of \textcite{yaglom1961}, which describes complex-valued stationary processes on general topological groups, to the case where $f$ is real-valued and Gaussian.
A detailed proof is given in \Cref{appdx:proofs}.
\end{proof}

This is a powerful result: it says that to compute a general stationary kernel on a Lie group numerically, it suffices to compute the coefficients $a^{(\lambda)}$, and the characters $\chi^{(\lambda)}$, up to some order of truncation.
To carry this out, we will need some way to represent the index set $\Lambda$ of irreducible unitary representations numerically.
We describe methods for doing so and the methods for efficient approximate sampling from such processes in~\Cref{sec:computation}.
For the circle group $G = \bb{S}_1 = \SO(2) = \mathrm{U}(1)$, an illustration of such a kernel, and samples of a corresponding process, computed using the techniques of \Cref{sec:computation}, is given in \Cref{fig:circle-kernel}.
Before proceeding to these, we study the situation on other spaces of interest.

\subsection{Stationary Kernels on Homogeneous Spaces of Compact Lie Groups} \label{sec:stationary:homogeneous}

Let $G$ be a compact Lie group, and let $H$ be a closed, not necessarily normal subgroup of~$G$.
We say that $X = G / H$ is a \emph{homogeneous space}.\footnote{In geometry, a homogeneous space is a topological space $X$ upon which some group $G$ of automorphisms acts continuously and transitively \cite[Section 2.6]{folland1995}. In this case, if we fix any point $x \in X$ and define $H$ to be the subgroup of $G$ consisting of such elements $g \in G$ that $g \lacts x = x$, it can be proved that $X$ is isomorphic to $G / H$.}
$X$ consists of cosets of the form $g \bdot H$---the equivalence classes defined in the natural manner, and the action of $G$ on $X$ is defined by $g_1 \lacts (g_2 \bdot H) = (g_1 \bdot g_2) \bdot H$.
For simplicity, where unambiguous we will often drop the $(\.) \bdot H$ component from notation, and write elements of $G / H$ as $x \in G / H$.

Consider complex-valued functions $f : G/H \-> \C$ defined on the homogeneous space.
It is not hard to see that such functions are in bijective correspondence with functions $\tilde{f} : G \-> \C$ which are constant on all cosets, namely $\tilde{f}(g) = \tilde{f}(g \bdot h)$ for all $h\in H$.
Moreover, if we define the natural projection $\phi : G \-> G/H$ by $\phi : g \|> g\bdot H$, then every function $f \in L^2(G/H)$ can be written as $\tilde{f} \after \phi \in L^2(G)$.
This shows homogeneous spaces inherit square-integrability from the Lie group $G$ of their automorphisms \cite{folland1995}.

We can therefore expand a function $f \in L^2(G/H)$ using the $L^2(G)$-orthonormal bases introduced previously.
In particular, every such function can be written as an infinite weighted sum of matrix coefficients $\pi^{(\lambda)}_{jk}(\.)$.
On the other hand, not every function constructed this way will be constant on the cosets, as all functions $f \in L^2(G/H)$ must be.
To get around this, we revisit the bases chosen in the vector spaces $V_\lambda$, which were previously used to define the matrix coefficients.
For an $f \in L^2(G/H)$, let $\tilde{f}(g) = (f\after\phi)(g)$, $\tilde{f} \in L^2(G)$.
Write
\[
\tilde{f}(g)
&=
\sum_{\lambda\in\Lambda}
d_{\lambda}
\sum_{j, k = 1}^{d_{\lambda}}
\innerprod[1]{\tilde{f}}{\pi^{(\lambda)}_{j k}}_{L^2(G)} \pi^{(\lambda)}_{j k}(g)
\\
&=
\sum_{\lambda\in\Lambda}
d_{\lambda}
\int_G \tilde{f}(u)
\ubr{\del[2]{
\sum_{j, k = 1}^{d_{\lambda}}
\pi^{(\lambda)}_{j k}(g)
\overline{\pi^{(\lambda)}_{j k}(u)}
}}_{\chi^{(\lambda)}(u^{-1} \bdot g)}
\d \mu_G(u)
=
\sum_{\lambda\in\Lambda}
d_{\lambda}\,
(\tilde{f} * \chi^{(\lambda)}) (g).
\]
where $*$ denotes convolution, and the calculation follows by interchanging sums and integrals.
We therefore see that the convolution with a character $\chi^{(\lambda)}$ corresponds to orthogonal projection onto the subspace $\Span \pi^{(\lambda)}_{j k} \subseteq L^2(G)$.
This correspondence---between projections and convolution against characters---will reappear in different guises throughout this work.
It is easy to see that since $\tilde{f}$ is by definition constant on cosets, its convolution against a character is also constant on cosets, thus $\tilde{f} * \chi^{(\lambda)} \in L^2(G/H)$.
Therefore to find an orthogonal (generalized Fourier) basis on $L^2(G/H)$, it suffices, for every $\lambda$, to find an orthogonal basis of the vector subspace of $\Span \pi^{(\lambda)}_{j k}$ that consists of functions constant on cosets.

Consider a function $f \in \Span \pi^{(\lambda)}_{j k}$.
If we write the condition $f(g \bdot h) = f(g)$ for all ${g \in G, h \in H}$ in the basis $\pi^{(\lambda)}_{j k}(\.)$, where $f$ is expressed as $f(\.) = \sum_{j, k = 1}^{d_{\lambda}} c_{j k} \pi^{(\lambda)}_{j k}(\.)$, we~get
\[
\sum_{j, k = 1}^{d_{\lambda}} \del{\sum_{l=1}^{d_{\lambda}} c_{j l} \pi^{(\lambda)}_{k l}(h)} \pi^{(\lambda)}_{j k}(g) = \sum_{j, k = 1}^{d_{\lambda}} c_{j k} \pi^{(\lambda)}_{j k}(g)
.
\]
Let $\m{C}$ be a matrix with entries $c_{j k}$.
The above condition can be more compactly written as~$\pi^{(\lambda)}(h) \m{C}^{\top} = \m{C}^{\top}$ for all $h \in H$.
Thus, the rows of $\m{C}$ are invariant vectors of the representation~$\pi^{(\lambda)}$ that is restricted onto the subgroup $H$---note that these can be linearly dependent.
Let $r_\lambda$ be the dimension of the space of invariant vectors of $\eval{\pi^{(\lambda)}}_H$, and choose an orthonormal basis $e_1, \ldots, e_{d_{\lambda}}$ of $V_{\lambda}$ such that $e_1, \ldots, e_{r_\lambda}$ are invariant vectors of~$\eval{\pi^{(\lambda)}}_H$.
Then, the rows of $\m{C}$ are linear combinations of $e_1, \ldots, e_{r_\lambda}$ and thus the dimension of the subspace of $\Span \pi^{(\lambda)}_{j k}$ that consists of functions constant on cosets is exactly $r_\lambda d_{\lambda}$.

If $\pi^{(\lambda)}_{j k}$ are defined using the preceding basis of $V_{\lambda}$, a direct computation shows that $\pi^{(\lambda)}_{j k}$, $1 \leq j \leq d_{\lambda}$, $1 \leq k \leq r_\lambda$ are constant on cosets and thus form the orthogonal basis we were originally looking for \cite{weil1940}.

Using these notions, define the \emph{spherical functions} to be the entries of the first $r_\lambda$ columns of the matrix $\pi^{(\lambda)}_{j k}$ for $1 \leq j \leq d_{\lambda}$ and $1 \leq k \leq r_\lambda$, and define the \emph{zonal spherical functions} similarly, but with $1 \leq j \leq r_\lambda$ and $1 \leq k \leq r_\lambda$.
Note that not all spherical functions are zonal, due to the differing indices, and that zonal spherical functions are constant on double cosets, meaning that
\[
\pi^{(\lambda)}_{j k}(h_1 \bdot g \bdot h_2) &= \pi^{(\lambda)}_{j k}(g)
&
h_1, h_2 \in H
&
&1 \leq j \leq r_\lambda
\]
which can be seen by a direct computation.
With these notions at hand, we are now ready to describe stationary Gaussian processes on $G/H$.

\begin{restatable}{theorem}{ThmStationaryHomogeneous} \label{thm:stationary_homogeneous}
A Gaussian process $f \~[GP](0, k)$ on a compact homogeneous space $G/H$ is stationary with respect to the action of $G$ if and only if $k$ is of form
\[ \label{eqn:stationary_homogeneous:kernel}
&
k(g_1 \bdot H, g_2 \bdot H)
=
\sum_{\lambda \in \Lambda}
\sum_{j, k=1}^{r_\lambda}
a^{(\lambda)}_{j k} \Re \pi^{(\lambda)}_{j k}(g_2^{-1} \bdot g_1)
\]
where the coefficients $a^{(\lambda)}_{j k} \in \R$ form symmetric positive semi-definite matrices $\m{A}^{(\lambda)}$ of size $r_\lambda \x r_\lambda$ satisfying $\sum_\lambda \tr \m{A}^{(\lambda)} < \infty$, and $\pi^{(\lambda)}_{j k}$ are the zonal spherical functions.
Moreover, for each individual $\lambda \in \Lambda$, the corresponding sum over $j, k$ is positive semi-definite.
\end{restatable}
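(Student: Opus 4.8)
The plan is to reduce the statement to harmonic analysis on the group $G$ itself, exactly as in the bi-invariant case underlying \Cref{thm:stationary_group}, and then to isolate the subspace cut out by $H$. First I would pull the kernel back to $G$: set $\kappa(g) = k(g \bdot H, e \bdot H)$ and observe, using stationarity together with the fact that $k$ descends to $G/H$, that $k(g_1 \bdot H, g_2 \bdot H) = \kappa(g_2^{-1} \bdot g_1)$ with $\kappa$ \emph{bi-$H$-invariant}, i.e.\ $\kappa(h_1 \bdot g \bdot h_2) = \kappa(g)$ for $h_1, h_2 \in H$; conversely, any such $\kappa$ yields a well-defined stationary kernel on $G/H$. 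Under this correspondence, positive semi-definiteness of $k$ is equivalent to $\kappa$ being a positive-definite function on $G$, since $\sum_{i,i'} c_i \conj{c}_{i'} k(g_i \bdot H, g_{i'} \bdot H) = \sum_{i,i'} c_i \conj{c}_{i'} \kappa(g_{i'}^{-1} \bdot g_i)$.

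Second, I would expand $\kappa$ in the Peter--Weyl basis. Writing the matrix-valued Fourier coefficient $\widehat\kappa(\lambda) = \int_G \kappa(g)\,\pi^{(\lambda)}(g)^* \d\mu_G(g)$, right-$H$-invariance forces $\pi^{(\lambda)}(h)\widehat\kappa(\lambda) = \widehat\kappa(\lambda)$ and left-$H$-invariance forces $\widehat\kappa(\lambda)\pi^{(\lambda)}(h) = \widehat\kappa(\lambda)$ for all $h \in H$. In the adapted basis $e_1, \ldots, e_{d_\lambda}$, with $e_1, \ldots, e_{r_\lambda}$ spanning the $\left.\pi^{(\lambda)}\right|_H$-invariant vectors, this means $\widehat\kappa(\lambda)$ is supported on its top-left $r_\lambda \x r_\lambda$ block $\m{C}^{(\lambda)}$. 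Hence $\kappa$ is a series in the zonal spherical functions $\pi^{(\lambda)}_{jk}$ with $1 \le j, k \le r_\lambda$, recovering the index range in \eqref{eqn:stationary_homogeneous:kernel}.

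Third comes positive-definiteness. The key computation, parallel to the identity $\chi^{(\lambda)}(u^{-1} \bdot g) = \sum_{j,k} \pi^{(\lambda)}_{jk}(g)\conj{\pi^{(\lambda)}_{jk}(u)}$ established above, is $\pi^{(\lambda)}_{jk}(g_2^{-1} \bdot g_1) = \sum_{l=1}^{d_\lambda} \pi^{(\lambda)}_{jl}(g_1)\conj{\pi^{(\lambda)}_{kl}(g_2)}$, which exhibits each block $\sum_{j,k} a^{(\lambda)}_{jk} \pi^{(\lambda)}_{jk}(g_2^{-1} \bdot g_1)$ as a Gram kernel whenever $\m{A}^{(\lambda)} \succeq 0$; taking real parts preserves positive semi-definiteness, yielding the ``for each $\lambda$'' claim and the converse direction. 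For the forward direction one shows, as in the Bochner theorem on compact groups (a special case of \textcite{yaglom1961}), that positive-definiteness of $\kappa$ is equivalent to $\m{C}^{(\lambda)} \succeq 0$ for every $\lambda$, which together with the second step produces Hermitian positive semi-definite blocks. The summability $\sum_\lambda \tr \m{A}^{(\lambda)} < \infty$ then falls out by evaluating the series at the identity, since $\pi^{(\lambda)}_{jk}(e) = \delta_{jk}$ gives $k(x,x) = \sum_\lambda \tr \m{A}^{(\lambda)}$, the finite prior variance.

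The main obstacle I anticipate is the passage from Hermitian positive semi-definite blocks $\m{C}^{(\lambda)}$ with complex zonal functions to \emph{real symmetric} positive semi-definite matrices $\m{A}^{(\lambda)}$ multiplying $\Re\pi^{(\lambda)}_{jk}$. Using that $k$ is real-valued, so $k = \Re k$ and $\kappa(g^{-1}) = \kappa(g)$, one splits $\m{C}^{(\lambda)} = \m{P}^{(\lambda)} + i\m{Q}^{(\lambda)}$ into real-symmetric and real-skew parts and expands $\Re\del{\m{C}^{(\lambda)}_{jk}\pi^{(\lambda)}_{jk}}$. The delicate point is that the skew part pairs with $\Im\pi^{(\lambda)}_{jk}$, and one must show these contributions either vanish---for representations of real type, where the basis can be chosen so that $\pi^{(\lambda)}_{jk}$ is real and $\m{A}^{(\lambda)} = \Re\m{C}^{(\lambda)}$ works, using that the real part of a Hermitian positive semi-definite matrix is positive semi-definite on real vectors---or recombine across a conjugate pair $\lambda, \conj\lambda$ into a single real-symmetric positive semi-definite block. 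Handling this real-structure dichotomy carefully, while tracking the $d_\lambda$ normalization, is the crux; everything else is bookkeeping built on Peter--Weyl and Schur orthogonality.
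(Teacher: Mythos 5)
Your proposal is essentially a self-contained proof of the spherical Bochner--Godement theorem for the Gelfand-type pair $(G,H)$ via Peter--Weyl, whereas the paper takes the complex-valued decomposition as given---it simply invokes Theorem~5 of \textcite{yaglom1961}, which already supplies $k(g_1\bdot H, g_2\bdot H)=\sum_\lambda\sum_{j,k\le r_\lambda}a^{(\lambda)}_{jk}\pi^{(\lambda)}_{jk}(g_2^{-1}\bdot g_1)$ with Hermitian positive semi-definite blocks and the trace condition, together with the process-level converse and the per-$\lambda$ positive semi-definiteness (realized there as the covariance of $f_\lambda(gH)=\sum_{j,k}z^\lambda_{jk}\pi^{(\lambda)}_{jk}(g)$ with $\E z^\lambda_{j_1k_1}z^\lambda_{j_2k_2}=\delta_{j_1j_2}a^{(\lambda)}_{k_1k_2}$). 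Your steps one through three are sound (the reduction to a bi-$H$-invariant positive-definite function on $G$, the localization of $\widehat\kappa(\lambda)$ to the top-left $r_\lambda\x r_\lambda$ block in the adapted basis, the Gram-kernel identity $\pi^{(\lambda)}_{jk}(g_2^{-1}\bdot g_1)=\sum_l\pi^{(\lambda)}_{jl}(g_1)\conj{\pi^{(\lambda)}_{kl}(g_2)}$, and the trace condition from evaluation at $e$), and what they buy is independence from Yaglom's paper at the cost of redoing classical harmonic analysis.

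The one place where you stop short is exactly where the paper's proof actually lives: the passage from complex Hermitian blocks to real symmetric ones. You correctly identify this as the crux but leave it as a dichotomy (real type versus conjugate pairs) to be ``handled carefully.'' The paper resolves it in one stroke, with no case split and no recombination of conjugate pairs into a single block: letting $\lambda'$ index the conjugate representation, one has $\pi^{(\lambda')}_{jk}=\conj{\pi^{(\lambda)}_{kj}}$; then $k=\conj{k}$ forces $a^{(\lambda)}_{jk}=\conj{a^{(\lambda')}_{kj}}$ by orthonormality of the matrix coefficients, while the symmetry $k(x,y)=k(y,x)$ of a real covariance (your $\kappa(g^{-1})=\kappa(g)$, which you wrote down but did not use) forces $a^{(\lambda)}_{jk}=a^{(\lambda')}_{kj}$. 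Combining the two gives $a^{(\lambda)}_{jk}\in\R$ outright, for every $\lambda$, after which taking real parts of the series is immediate. If you intend to complete your sketch, this is the argument to adopt at that point; your proposed alternative of setting $\m{A}^{(\lambda)}=\Re\m{C}^{(\lambda)}$ for real-type representations is correct but does not by itself dispose of the quaternionic and complex-type cases, and the ``recombine across a conjugate pair'' option would change the indexing of the blocks away from the form the theorem asserts.
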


\begin{proof}
Similar to before, this follows mostly by specializing Theorem 5 of \textcite{yaglom1961}, which describes complex-valued stationary processes to the real-valued Gaussian case.
A detailed proof is given in \Cref{appdx:proofs}.
\end{proof}

\Cref{thm:stationary_homogeneous} simplifies for \emph{symmetric} homogeneous spaces, which possess certain additional properties that hold, for instance, for all homogeneous spaces of constant sectional curvature.
For such spaces, we study non-compact analogs of this result in part II of this work~\cite{part2}. 
For symmetric spaces, $r_\lambda \leq 1$, thus every representation of $G$ corresponds to at most one zonal spherical function $\pi^{(\lambda)}_{1 1}$.
The matrix $\m{A}^{(\lambda)}$, then, simplifies into a scalar similar to the one which appeared in the Lie group setting.
This occurs, for instance, on spheres $\bb{S}_n = \SO(n+1)/\SO(n)$ and Grassmann manifolds ${\Gr(r, n) = \Ort(n)/(\Ort(r) \x \Ort(n-r))}$, including projective spaces $\RP_n$ as a special case.
We illustrate a Matérn kernel on the sphere, and samples of the corresponding process, in \Cref{fig:sphere-kernel}.

\begin{figure}
\begin{subfigure}{0.2425\textwidth}
\includegraphics[scale=0.25]{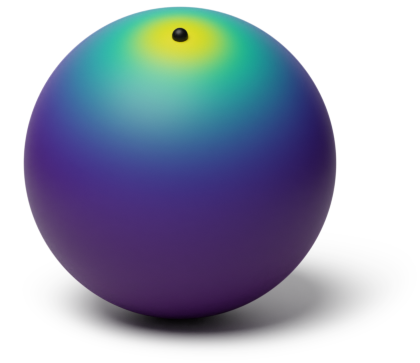}
\caption{Kernel}
\end{subfigure}
\begin{subfigure}{0.2425\textwidth}
\includegraphics[scale=0.25]{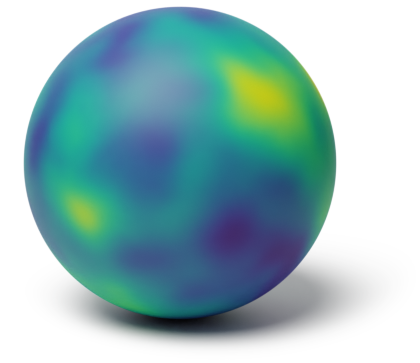}
\caption{Sample}
\end{subfigure}
\begin{subfigure}{0.2425\textwidth}
\includegraphics[scale=0.25]{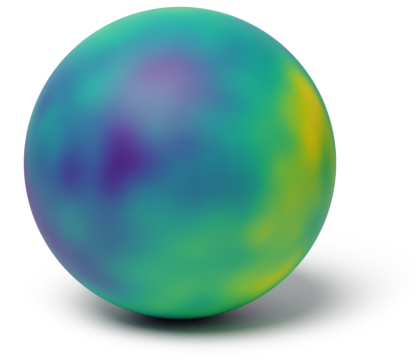}
\caption{Sample}
\end{subfigure}
\begin{subfigure}{0.2425\textwidth}
\includegraphics[scale=0.25]{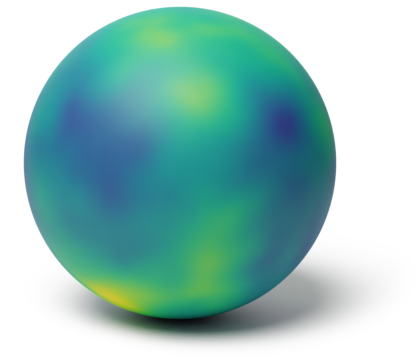}
\caption{Sample}
\end{subfigure}
\caption{Illustration of the Matérn-3/2 kernel of \Cref{sec:heat_matern}, and samples from its corresponding prior Gaussian process, on the sphere $G/H = \SO(3)/\SO(2) = \mathbb{S}_2$.}
\label{fig:sphere-kernel}
\end{figure}

\Cref{thm:stationary_homogeneous} can also be used to give a description of left-invariant kernels on Lie groups, complementing the preceding results on bi-invariant kernels.

\begin{corollary} \label{cor:stationary_left_invar}
Every left-invariant but not necessarily bi-invariant kernel on a compact Lie group $G$ can be expressed according to \Cref{eqn:stationary_homogeneous:kernel}, with $r_\lambda = d_{\lambda}$.
\end{corollary}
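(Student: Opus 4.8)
The plan is to realize a left-invariant kernel on $G$ as a stationary kernel on a suitable homogeneous space, and then to invoke \Cref{thm:stationary_homogeneous} directly. Concretely, I would take the acting group in that theorem to be $G$ itself and the subgroup to be the trivial subgroup $H = \{e\}$, which is closed (hence admissible) since $G$ is Hausdorff. Then each coset $g \bdot \{e\} = \{g\}$ is a singleton, so the coset space $G/\{e\}$ is canonically identified with $G$, and the homogeneous-space action becomes $g_1 \lacts (g_2 \bdot \{e\}) = (g_1 \bdot g_2) \bdot \{e\}$, i.e. left multiplication by $g_1$. Under this identification, a kernel is stationary with respect to the $G$-action on $G/\{e\}$ precisely when it is left-invariant on $G$. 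The first step is thus to record this correspondence carefully, confirming that the left action of $G$ on itself is transitive with trivial stabilizer, so that $G \cong G/\{e\}$ as a homogeneous space.

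Next I would compute the quantities $r_\lambda$ appearing in \Cref{thm:stationary_homogeneous} for this choice of $H$. Recall that $r_\lambda$ is the dimension of the space of vectors in $V_\lambda$ fixed by $\left.\pi^{(\lambda)}\right|_H$. Since $H = \{e\}$, the restriction $\left.\pi^{(\lambda)}\right|_{\{e\}}$ sends the single group element $e$ to the identity operator, so every vector of $V_\lambda$ is invariant and $r_\lambda = \dim V_\lambda = d_\lambda$. In particular, any orthonormal basis of $V_\lambda$ consists entirely of invariant vectors, so the basis-selection step in the construction preceding \Cref{thm:stationary_homogeneous} imposes no constraint, and the zonal spherical functions---the entries $\pi^{(\lambda)}_{j k}$ with $1 \leq j, k \leq r_\lambda$---become the full set of matrix coefficients $\pi^{(\lambda)}_{j k}$, $1 \leq j, k \leq d_\lambda$.

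Substituting $r_\lambda = d_\lambda$ into \eqref{eqn:stationary_homogeneous:kernel} then yields exactly the asserted form, with the coefficient matrices $\m{A}^{(\lambda)}$ now of size $d_\lambda \x d_\lambda$ ranging over all symmetric positive semi-definite matrices subject to $\sum_\lambda \tr \m{A}^{(\lambda)} < \infty$; since \Cref{thm:stationary_homogeneous} is an equivalence, this simultaneously shows that every left-invariant kernel admits the expansion and that every such expansion is left-invariant. I do not anticipate a genuine obstacle, as the statement is a specialization of \Cref{thm:stationary_homogeneous} rather than an independent argument. The step deserving the most care is verifying the identification of left-invariance with stationarity on $G/\{e\}$ and checking that the spherical-function bookkeeping degenerates as claimed when $H$ is trivial; both are routine once the correspondence is in place.
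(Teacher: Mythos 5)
Your proposal is correct and takes exactly the same route as the paper, which simply sets $H = \{e\}$ and invokes \Cref{thm:stationary_homogeneous}; your additional verification that $r_\lambda = d_\lambda$ and that stationarity on $G/\{e\}$ is left-invariance just spells out what the paper leaves implicit.
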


\begin{proof}
Letting $H = \cbr{e}$ be the trivial group, the claim is immediate from \Cref{thm:stationary_homogeneous}.
\end{proof}

Note that any stationary positive semi-definite function $k$ on a homogeneous space $G/H$ induces a positive semi-definite function 
\[
&
\Bbbk(g)
=
k(g \bdot H, e \bdot H)
=
k(g_1 \bdot H, g_2 \bdot H)
&
g_2^{-1} \bdot g_1 = g
\]
over the group $G$ which is constant on all double cosets $H \bdot g \bdot H$, namely
\[
\Bbbk(h_1 \bdot g \bdot h_2) = \Bbbk(g)
\]
for $g\in G$ and $h_1,h_2\in H$.
Conversely, any positive semi-definite function on $G$ constant on all double cosets $H \bdot g \bdot H$ induces a positive semi-definite kernel on the homogeneous space~$G/H$.\footnote{Note that $\Bbbk$ should only represent a left-invariant kernel. In particular, this means that $\Bbbk(g_2^{-1} \bdot g_1) = \Bbbk(g_1 \bdot g_2^{-1})$ is \emph{not} necessarily true for all $g_1, g_2 \in G$ and that these kernels do not fall under the assumptions of \Cref{thm:stationary_group} of the preceding section.}

\section{Computational Techniques for Stationary Gaussian Processes} \label{sec:computation}

We now develop computational techniques for implementing stationary Gaussian process models from the preceding section.
To work with the posterior mean, posterior covariance, and pathwise conditioning expressions of \Cref{sec:gp-intro}, we need two computational primitives: 
\1*[(a)] Evaluating the covariance kernel $k(\.,\.)$ pointwise.
\2*[(b)] Efficiently sampling from the prior $f(\.)$.\footnote{For a given set of evaluation locations, $f$ can be sampled by forming and factorizing the kernel matrix at cubic cost with respect to the number of evaluation locations: we use the term \emph{efficient sampling} to refer to sampling $f$ approximately \emph{without these costs}.}
\0*
In general, (a) and (b) need to be performed in an automatically-differentiable manner, for instance to optimize model hyperparameters using maximum likelihood, or differentiate the prior density in Hamiltonian Monte Carlo.

Assuming the scaling coefficients~$a^{(\lambda)}$ or the scaling matrices~$\m{A}^{(\lambda)}$ are a priori fixed, \Cref{thm:stationary_group,thm:stationary_homogeneous} offer avenues for evaluating $k(\.,\.)$ and, with some additional work, approximately sampling $f(\.)$.
This can be achieved by truncating the respective infinite series to the desired accuracy.
To do so, one must be able to traverse the index set $\Lambda$: we study this in \Cref{sec:computation:comp_traverse}.
For pointwise kernel evaluation, one must also compute the characters $\chi^{(\lambda)}$ in the Lie group case, and the zonal spherical functions $\pi^{(\lambda)}_{j, k}$ in the homogeneous space case: 
we study this in \Cref{sec:computation:comp_pointwise}.

For sampling, in \Cref{sec:computation:comp_sampling} we develop a technique we term \emph{generalized random phase Fourier features} that allows efficient sampling using only the aforementioned quantities, as long as one can also draw random samples from the manifold's Haar measure.
Moreover, it allows one---if needed---to compute the Karhunen--Loève basis in terms of the characters or zonal spherical functions, respectively.

With these quantities at hand, all that remains is to obtain the scaling coefficients~$a^{(\lambda)}$ or scaling matrices~$\m{A}^{(\lambda)}$, which are kernel-dependent. 
We study these for specific kernel classes in \Cref{sec:heat_matern}.

\subsection{Enumerating Representations of Compact Lie Groups} \label{sec:computation:comp_traverse}

Both for (approximate) pointwise kernel evaluation and for sampling on compact Lie groups and their homogeneous spaces we will need a way to enumerate the set $\Lambda$ of irreducible unitary representations.
This is a central question of the representation theory.
We now sketch a practical approach to doing so.

Every compact Lie group $G$ contains a connected abelian Lie subgroup, which is isomorphic to a torus---call the maximal (by inclusion) of these subgroups \emph{maximal tori}.
All maximal tori have the same dimension and are conjugate to one another, in the sense that if $T^{(1)}, T^{(2)} \subseteq G$ are two maximal tori, then there exists $g \in G$ such that $T^{(1)} = g^{-1} \bdot T^{(2)} \bdot g$.
Moreover, every element of $G$ is conjugate to an element of a given maximal torus---for example, every unitary matrix which is an element $U \in G = \f{U}(d)$ is diagonalizable by a unitary transformation, namely $U = V D V^{-1}$ where $D$ is an element of the torus of diagonal unitary matrices and $V \in \f{U}(d)$.

Representations of $G$ can be therefore be studied by studying their relationship with representations on the torus, which are simpler.
Specifically, on an $m$-dimensional torus $\bb{T}^m$, the standard theory of Fourier analysis enables one to express a suitably regular function $f : \bb{T}^m \-> \C$ as an infinite sum 
\[
f(x) = \sum_{n \in \Z^m} c_n e^{2\pi i \innerprod{n}{x}} = \sum_{n\in\Lambda} c_n \pi_n(x)
\]
which we have expressed in two ways---analytic, and representation-theoretic.
In the expression above $\Lambda = \Z^m$ and $\pi_n : \bb{T}^m \-> \bb{T} \subseteq \GL(\C^1)$ are the irreducible unitary representations, which on a torus are one-dimensional and can be explicitly written as complex exponentials.
This establishes a bijective correspondence between such representations and Fourier basis functions, at least as long as one studies the torus.

Since every compact Lie group contains a maximal torus $T$, this gives us information about representations on more general compact Lie groups.
When restricted to such a torus, a representation of $G$ splits into a sum of one-dimensional irreducible representations, parametrized by tuples of integers, which are called the \emph{weights} of the representation.
One can show these tuples, and the original representation on $G$, are both uniquely determined by a specific weight called the \emph{highest weight}.
By mapping the set of highest weights with a linear transform, one obtains the set of \emph{signatures}.

\begin{theorem}
The set of irreducible unitary representations of $G$ is in explicit bijective correspondence with the set of \emph{signatures} $\Lambda \subseteq \Z^m$, where $m$ is the dimension of $T$.
\end{theorem}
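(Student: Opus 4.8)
The plan is to prove this via the classical \emph{theorem of the highest weight} of Cartan and Weyl, assuming throughout that $G$ is compact and connected. First I would fix a maximal torus $T \subseteq G$ of dimension $m$ and use the fact, recalled above, that the irreducible unitary representations of $T \cong \bb{T}^m$ are the one-dimensional characters indexed by $\Z^m$. Restricting any irreducible unitary representation $\pi^{(\lambda)}$ of $G$ to $T$ and invoking complete reducibility (which holds by compactness), I decompose $V_\lambda$ into a direct sum of one-dimensional $T$-eigenspaces; the associated characters are the \emph{weights} of $\pi^{(\lambda)}$, a finite multiset of elements of $\Z^m$. The desired correspondence will send $\pi^{(\lambda)}$ to a distinguished weight, its highest weight, after which an explicit linear change of coordinates produces the signature.

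To single out the highest weight I would first introduce the \emph{roots}: decomposing the complexified Lie algebra $\mathfrak{g}_\C$ under the adjoint action of $T$ yields the root system, and a choice of positive roots $R^+$ (equivalently, a Weyl chamber) induces the dominance partial order $\mu \preceq \nu \iff \nu - \mu \in \Z_{\geq 0} R^+$ on weights. The key structural steps are then: (i) each irreducible $\pi^{(\lambda)}$ has, among its weights, a unique maximal one with respect to $\preceq$, occurring with multiplicity one; and (ii) a vector spanning this extremal weight space is annihilated by the raising operators and generates all of $V_\lambda$ under the lowering operators attached to the $\mathfrak{sl}_2$-triples of the positive roots. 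Step (ii) shows that the highest weight determines $\pi^{(\lambda)}$ up to isomorphism, giving injectivity of $\lambda \mapsto (\text{highest weight})$; two nonisomorphic irreducibles cannot share a highest weight.

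It remains to characterize the image of this map. Integrality ($\text{highest weight} \in \Z^m$) is immediate from the restriction to $T$, while finite-dimensionality forces \emph{dominance}: applying the representation theory of $\mathfrak{sl}_2$ to each positive root shows that the pairing of the highest weight with every positive coroot must be a nonnegative integer. Conversely, I must show that every dominant integral weight actually arises, and this \emph{existence} half is the main obstacle. I would establish it by one of the standard routes --- either Weyl's unitarian trick, passing to the complexification $G_\C$ and constructing the irreducible quotient of a Verma module (whose finite-dimensionality is guaranteed precisely by dominance), or the Borel--Weil realization as the space of holomorphic sections of the line bundle over the flag manifold $G/T$ associated to the weight. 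Either way one obtains a bijection between irreducible unitary representations of $G$ and dominant integral weights.

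Finally, I would convert dominant integral weights into signatures by an explicit invertible linear map --- the one sending the fundamental weights to the standard basis, so that the dominance and integrality constraints cut out a concrete subset $\Lambda \subseteq \Z^m$ (for example, the weakly decreasing integer tuples in the case $G = \f{U}(n)$). Composing this linear isomorphism with the highest-weight correspondence yields the explicit bijection claimed. The delicate point throughout is existence: surjectivity onto the dominant integral weights is exactly the content of the constructive half of the highest weight theorem, and is where essentially all of the representation-theoretic work resides.
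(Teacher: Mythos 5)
Your proposal is correct and follows essentially the same route as the paper: Appendix~\ref{appdx:trav_repr} likewise reduces the claim to the Cartan--Weyl highest weight theorem (cited from \textcite{brockerdieck1985} and \textcite{fegan1991}) and then identifies dominant integral weights with integer tuples via the signature construction of \textcite{naimark1982representations}. The only difference is one of depth, not of approach --- the paper defers the highest weight theorem entirely to the references, while you additionally sketch its internal proof (uniqueness of the maximal weight, generation by lowering operators, and existence via Verma modules or Borel--Weil), correctly identifying the existence half as where the real work lies.
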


\begin{proof}
\Cref{appdx:trav_repr}.
\end{proof}

Henceforth, we use $\Lambda$ to refer to the set of signatures, which can be calculated explicitly for most Lie groups.
We provide explicit expressions for $\SO(n)$ and $\SU(n)$ in Appendix~\ref{appdx:weights_examples}.

We can use this bijective correspondence to impose an ordering on $\Lambda$, and define truncated analogs of the infinite series given by \textcite{yaglom1961}.
Specifically, for a stationary kernel $k$ on a compact Lie group $G$, and $L < \infty$, define
\[
k^{(L)}(g_1,g_2) = \sum_{\lambda\in\tl\Lambda} a^{(\lambda)} \Re \chi^{(\lambda)}(g_2^{-1} \bdot g_1)
\]
where $\tl\Lambda\subseteq\Z^m$ is of size $L$ and consists of signatures with the (potentially up to a constant) largest values of $a^{(\lambda)}$, and we take the real part of each character since $k$ is real-valued.
Analogously, on a compact homogeneous space $G/H$ define
\[
k^{(L)}(g_1 \bdot H, g_2 \bdot H) = \sum_{\lambda\in\tl\Lambda} \sum_{j, k=1}^{r_{\lambda}} a^{(\lambda)}_{j k} \Re \pi^{(\lambda)}_{j k}(g_2^{-1} \bdot g_1)
\]
where $\tl\Lambda\subseteq\Z^m$ is of size $L$, and consists of signatures with the (potentially up to a constant) largest values under a suitable matrix ordering, for instance the one given by $\tr\m{A}^{(\lambda)}$.

For the concrete kernels we consider in \Cref{sec:heat_matern}, signatures determine the respective Laplace--Beltrami eigenvalues, and the coefficients $a^{(\lambda)}$ grow monotonically with these eigenvalues.
Similarly, the matrices $\m{A}^{(\lambda)}$ turn out to be scalar multiples of the identity, where the scalar multiple grows monotonically with the respective eigenvalues.
This allows one to obtain a suitable ordering, and therefore implement the truncated series numerically---as long as one can calculate the characters and zonal spherical functions, respectively.
For any truncation level, the resulting expressions are stationary and positive semi-definite.
We therefore proceed to study how to calculate characters and zonal spherical functions.

\subsection{Pointwise Kernel Evaluation} \label{sec:computation:comp_pointwise}

To evaluate a stationary kernel pointwise, we now study how to calculate the characters $\chi^{(\lambda)}$ and zonal spherical functions $\pi^{(\lambda)}_{jk}$ corresponding to a signature $\lambda\in\Lambda$.

\subsubsection{Calculating the Characters of Lie Groups} \label{sec:computation:char}

To calculate the characters of a compact Lie group $G$, we will use the observations that (i) every element of $G$ is conjugate to an element of a fixed maximal torus $T$, (ii) characters are conjugation-invariant, and (iii) on a maximal torus, they admit an explicit description via the Weyl character formula.

\begin{theorem}
Every character $\chi^{(\lambda)}$ with signature $\lambda$ can be expressed as 
\[
\chi^{(\lambda)}(g) &= \chi^{(\lambda)}(t) = \frac{P_1(t)}{P_2(t)}
&
t &= \tilde{g}^{-1} \bdot g \bdot \tilde{g}
\]
where $t \in T$ is an element of a maximal torus conjugate to $g$.
For a fixed identification of~$T$ with $\bb{T}^m \subseteq \C^m$, $P_1$ and $P_2$ are explicit polynomials with respect to the variables $z_1, \ldots z_m, \overline{z_1}, \ldots, \overline{z_m}$ of $\bb{T}^m$.
Moreover, the number of terms in $P_1, P_2$ does not grow with~$\lambda$.
\end{theorem}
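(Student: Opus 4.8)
The plan is to reduce the computation of $\chi^{(\lambda)}$ to the maximal torus and then invoke the Weyl character formula, whose ratio structure is exactly the claimed form. First I would use the two facts already recorded above: that every $g \in G$ is conjugate to some $t \in T$, and that $\chi^{(\lambda)} = \tr \pi^{(\lambda)}$ is a class function, being invariant under $g \mapsto \tilde{g}^{-1} \bdot g \bdot \tilde{g}$ because the trace is conjugation-invariant. Together these give $\chi^{(\lambda)}(g) = \chi^{(\lambda)}(t)$ for $t = \tilde{g}^{-1} \bdot g \bdot \tilde{g}$, so it suffices to produce an explicit formula for the restriction $\left.\chi^{(\lambda)}\right|_T$.

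On $T$, the restriction of $\pi^{(\lambda)}$ splits into one-dimensional weight spaces, so $\left.\chi^{(\lambda)}\right|_T = \sum_\mu m_\mu^{(\lambda)} e^{2\pi i \innerprod{\mu}{\theta}}$ is a finite Laurent polynomial in the torus coordinates $\theta = (\theta_1, \ldots, \theta_m)$. This expression is not what we want: the number of weights $\mu$, and hence of terms, grows with $\lambda$. Instead I would appeal to the Weyl character formula, which rewrites this sum as a ratio
\[
\left.\chi^{(\lambda)}\right|_T = \frac{A_{\lambda + \rho}}{A_\rho}
\qquad
A_\mu = \sum_{w \in W} \operatorname{sgn}(w)\, e^{2\pi i \innerprod{w(\mu)}{\theta}}
\]
where $W$ is the Weyl group and $\rho$ is half the sum of the positive roots. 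Setting $P_1 = A_{\lambda + \rho}$ and $P_2 = A_\rho$ yields the claimed form $\chi^{(\lambda)} = P_1/P_2$.

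With this in hand, the three assertions follow by bookkeeping. The count-of-terms claim is immediate: each $A_\mu$ is a sum indexed by $W$, so $P_1$ and $P_2$ each have at most $|W|$ terms, a quantity depending only on $G$ and not on $\lambda$. To see that $P_1, P_2$ are polynomials in $z_1, \ldots, z_m, \overline{z_1}, \ldots, \overline{z_m}$, I would fix the identification $T \cong \bb{T}^m$ with coordinates $z_j = e^{2\pi i \theta_j}$; each exponential $e^{2\pi i \innerprod{w(\mu)}{\theta}}$ becomes a monomial in the $z_j$ with rational exponents. Multiplying $P_1$ and $P_2$ by a common monomial—which leaves the ratio and the number of terms unchanged—clears any negative and half-integer exponents coming from $\rho$, after which every exponent is a nonnegative integer once negative powers are rewritten via $z_j^{-1} = \overline{z_j}$, giving genuine polynomials in the $z_j$ and $\overline{z_j}$. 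Explicit choices of $\rho$, of the $W$-action, and of the identification $T \cong \bb{T}^m$ then render $P_1, P_2$ fully explicit; I would carry this out for $\SO(n)$ and $\SU(n)$ in \Cref{appdx:weights_examples}.

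The genuinely nontrivial input is the Weyl character formula itself, which I would cite rather than reprove. The main subtlety that remains is the denominator: $A_\rho = \prod_{\alpha > 0}\del{e^{\pi i \innerprod{\alpha}{\theta}} - e^{-\pi i \innerprod{\alpha}{\theta}}}$ vanishes on the singular elements of $T$ (the walls), so a priori the ratio $P_1/P_2$ is defined only off this set. The resolution, which I would state carefully, is that $A_\rho$ divides $A_{\lambda + \rho}$ in the ring of $W$-antiinvariant Laurent polynomials, the quotient being precisely the class function $\chi^{(\lambda)}$; hence the ratio extends continuously across the walls and the identity $\chi^{(\lambda)} = P_1/P_2$ holds everywhere once interpreted as this continuous extension. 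This divisibility, together with the fact that the reduction to $T$ is well defined independently of the conjugating element $\tilde{g}$—since $\chi^{(\lambda)}$ is constant on conjugacy classes and $A_{\lambda+\rho}/A_\rho$ is $W$-invariant on $T$—is where the main care is needed.
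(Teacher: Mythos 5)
Your proposal is correct and follows essentially the same route as the paper: reduce to the maximal torus via conjugation-invariance of the trace, invoke the Weyl character formula $\chi^{(\lambda)}|_T = A_{\lambda+\rho}/A_\rho$ (cited, not reproved), bound the number of terms by $|W|$, and convert the alternating sums into polynomials in $z_j,\overline{z_j}$. Your explicit handling of the half-integer exponents coming from $\rho$, of the rewriting $z_j^{-1}=\overline{z_j}$ on the unit torus, and of the divisibility of $A_{\lambda+\rho}$ by $A_\rho$ across the walls is in fact more careful than the paper's own treatment, which only remarks on these points informally.
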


\begin{proof}
This is a reformulation of \textcite[Theorem 9.9]{fegan1991}---see \Cref{appdx:character_formula}.
\end{proof}

These expressions are ratios of polynomials, and can therefore be evaluated and automatically differentiated.
To find $t$, one employs techniques that depend on the group.
In many cases, such as $\SU(n)$ or $\SO(2n+1)$, this reduces to finding the eigenvalues of $g$.
Additionally, for many groups, the polynomials $P_1$ and $P_2$ can often be rewritten as certain determinants of order $\dim T$, or short sums thereof.
In cases where $P_2 = 0$, the ratio should be understood as a limit when the denominator approaches zero, for instance when $g$ has repeated eigenvalues. 
This is not an issue, since one can show that the denominator is always a divisor of the numerator, so the ratio $\frac{P_1}{P_2}$ is itself a polynomial.
Further details on these expressions can be found in \Cref{appdx:character_formula}, and explicit expressions for the Lie groups $\SU(n)$ and $\SO(n)$ can be found in \Cref{appdx:particular_character_formulas}.

\subsubsection{Calculating the Zonal Spherical Functions of Homogeneous Spaces} \label{sec:computation:zonal}

\begin{figure}
\begin{subfigure}{0.32\textwidth}
\begin{tikzpicture}
\path[use as bounding box] (-2.625,-2) rectangle (2.125,2);
\node at (0,0) {\includegraphics[scale=0.25]{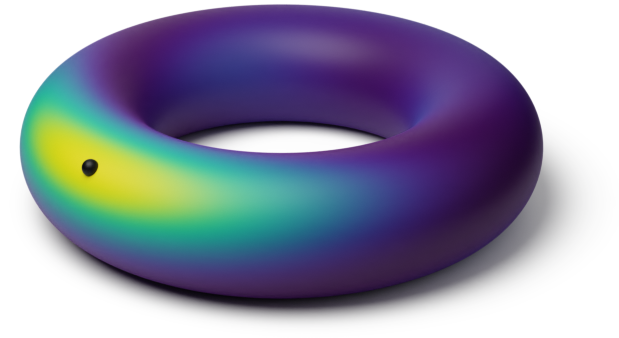}};
\end{tikzpicture}
\caption{Torus}
\end{subfigure}
\begin{subfigure}{0.32\textwidth}
\begin{tikzpicture}
\path[use as bounding box] (-2.625,-2) rectangle (2.125,2);
\node at (0,0) {\includegraphics[scale=0.25]{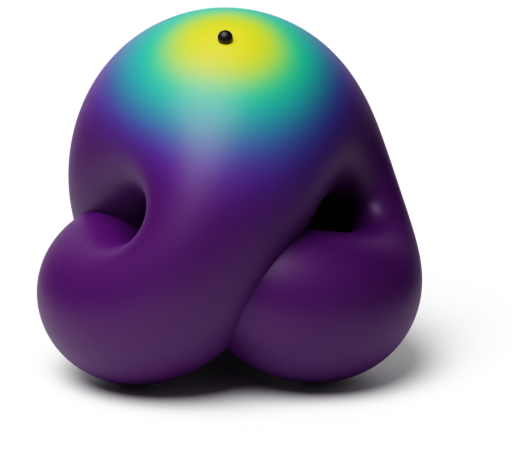}};
\end{tikzpicture}
\caption{Projective plane}
\end{subfigure}
\begin{subfigure}{0.32\textwidth}
\begin{tikzpicture}
\path[use as bounding box] (-2.625,-2) rectangle (2.125,2);
\node at (0,0) {\includegraphics[scale=0.25]{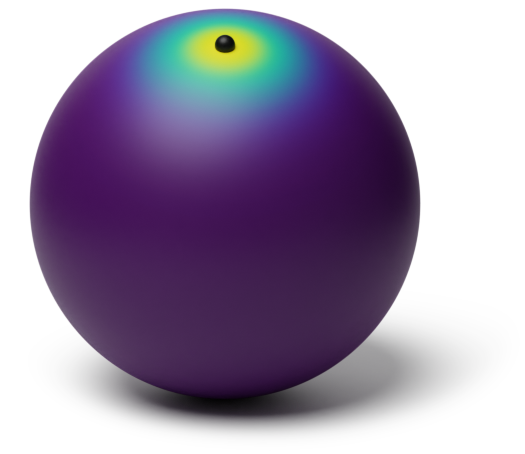}};
\end{tikzpicture}
\caption{Sphere}
\end{subfigure}
\caption{Values of the heat kernel $k(\bullet,\.)$ on $\mathbb{T}^2$, $\RP_2$, and $\mathbb{S}_2$. The torus $\mathbb{T}^2$ is equipped with the flat metric, so the depicted embedding is not isometric.}
\label{fig:gp-kernels}
\end{figure}

The class of homogeneous spaces is very general, and thus a unified treatment similar to the Lie group setting is more difficult.
Consequently, the most efficient ways to compute zonal spherical functions tend to involve space-specific considerations, even in symmetric spaces which contain at most one zonal spherical function per signature.
Nonetheless, we present a general technique to express zonal spherical function on $G/H$ in terms of the characters on $G$, which we term \emph{generalized periodic summation} by analogy with the Euclidean case.
Before presenting this, however, we consider homogeneous spaces where case-specific descriptions of spherical functions are available, at least to some extent.

\begin{itemize}
\item $\bb{S}_n = \SO(n+1)/\SO(n)$: \emph{the $n$-dimensional sphere}.
This is a symmetric homogeneous space.
The spherical functions on this space are called \emph{spherical harmonics}, and are very well-studied.
In particular, zonal spherical harmonics can be explicitly written in terms of the \emph{Gegenbauer~polynomials}.
\item $\f{V}(k, n) = \SO(n)/\SO(n-k)$: \emph{Stiefel manifolds}.
This is the space of all $k$-tuples of orthonormal vectors in $\R^n$.
The spherical functions are called \emph{Stiefel harmonics}, and are studied by \textcite{gelbart1974,strichartz1975}.
\item $\RP_n = \SO(n)/\Ort(n-1)$: \emph{real projective space.}
This is the space of all one-dimensional linear subspaces of $\R^n$, and is a symmetric space.
One can obtain the zonal spherical functions by leveraging connections between this space and the sphere, or by rewriting it as $\f{P}(n) = \Ort(n)/(\Ort(1) \x \Ort(n-1))$, which is as a special case of a Grassmannian.
\item $\Gr(k, n) = \Ort(n)/(\Ort(k) \x \Ort(n-k))$: \emph{Grassmannians.}
This is the space of all $k$-dimensional linear subspaces of $\R^n$, and is a symmetric homogeneous space.
Spherical functions for Grassmannians are described by \textcite{davis1999a, davis1999b}.
\end{itemize}

In cases where space-specific descriptions of spherical functions are available, efficient algorithmic techniques can often be constructed from them.
We illustrate the values of kernels defined in \Cref{sec:heat_matern}, computed by means of such techniques, in \Cref{fig:gp-kernels}.
Unfortunately, many descriptions involve heavy mathematical machinery, and it can be highly non-trivial to translate them into a practical algorithm.
In cases where applications merit it, we view the formulation of such algorithms as a promising avenue for future work, and now return to the general case.

If the coefficients matrices $\m{A}^{(\lambda)} = a^{(\lambda)}\m{I}$ are multiples of the identity matrix, as will be the case for the heat and Matérn kernels studied in \Cref{sec:heat_matern}, one can compute kernels on $G/H$ in terms of characters of $G$.
The next claim shows how.

\begin{restatable}{theorem}{ThmPeriodicSummation} \label{thm:periodic_summation}
Let $f \~[GP](0,k)$, where $k$ is stationary and satisfies $\m{A}^{(\lambda)} = a^{(\lambda)}\m{I}$.
Then
\[
\!
k(g_1 \bdot H ,g_2 \bdot H) &= \int_{H} k_G(g_1 \bdot h, g_2) \d\mu_H(h)
&
f(g \bdot H) &= \int_H f_G(g \bdot h) \d\mu_H(h)
\]
where $\mu_H$ is the Haar measure on $H$, and $f_G \~[GP](0,k_G)$ with
\[
k_G(g_1, g_2)
=
\sum_{\lambda \in \Lambda} a^{(\lambda)} \Re \chi^{(\lambda)}(g_2^{-1} \bdot g_1).
\]
\end{restatable}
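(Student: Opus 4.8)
The plan is to reduce both identities to one representation-theoretic fact: averaging an irreducible representation over $H$ produces the orthogonal projection onto its $H$-invariant subspace. For each $\lambda$ I would introduce
\[
\m{P}_\lambda = \int_H \pi^{(\lambda)}(h) \d\mu_H(h)
\]
and check, using the left- and right-invariance of $\mu_H$ together with unitarity of $\pi^{(\lambda)}$, that $\m{P}_\lambda$ is both idempotent and self-adjoint, hence an orthogonal \emph{projection}, whose range is precisely the space of invariant vectors of $\left.\pi^{(\lambda)}\right|_H$. In the basis $e_1, \ldots, e_{d_\lambda}$ of $V_\lambda$ fixed in \Cref{sec:stationary:homogeneous}, in which $e_1, \ldots, e_{r_\lambda}$ span that invariant subspace, this says $\m{P}_\lambda$ is the diagonal matrix with exactly $r_\lambda$ leading ones and zeros elsewhere.

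For the kernel identity I would substitute the series for $k_G$ into $\int_H k_G(g_1 \bdot h, g_2)\d\mu_H(h)$ and interchange integration with the sum over $\lambda$; this is justified since $|\chi^{(\lambda)}| \leq d_\lambda$ and $\sum_\lambda d_\lambda a^{(\lambda)} < \infty$. Writing $\chi^{(\lambda)}(g_2^{-1} \bdot g_1 \bdot h) = \tr\del{\pi^{(\lambda)}(g_2^{-1} \bdot g_1)\pi^{(\lambda)}(h)}$ and pulling the $H$-average inside the trace gives $\int_H \chi^{(\lambda)}(g_2^{-1} \bdot g_1 \bdot h)\d\mu_H(h) = \tr\del{\pi^{(\lambda)}(g_2^{-1} \bdot g_1)\m{P}_\lambda}$, which, because $\m{P}_\lambda$ is the diagonal projection above, collapses to $\sum_{j=1}^{r_\lambda}\pi^{(\lambda)}_{jj}(g_2^{-1} \bdot g_1)$. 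Taking real parts and summing against $a^{(\lambda)}$ reproduces the right-hand side of \eqref{eqn:stationary_homogeneous:kernel} with $\m{A}^{(\lambda)} = a^{(\lambda)}\m{I}$, which is $k(g_1 \bdot H, g_2 \bdot H)$ by \Cref{thm:stationary_homogeneous}.

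For the process identity I would set $\tilde f(g \bdot H) = \int_H f_G(g \bdot h)\d\mu_H(h)$. First, replacing $g$ by $g \bdot h_0$ and using right-invariance of $\mu_H$ shows the integral is unchanged, so $\tilde f$ is well-defined on $G/H$. Since each $\tilde f(g \bdot H)$ is an $L^2$-limit of finite combinations $\sum_i w_i f_G(g \bdot h_i)$ of jointly Gaussian variables, all finite-dimensional marginals of $\tilde f$ are Gaussian, so $\tilde f$ is a Gaussian process with zero mean, and its covariance is the double average
\[
\Cov\del{\tilde f(g_1 \bdot H), \tilde f(g_2 \bdot H)} = \int_H \int_H k_G(g_1 \bdot h, g_2 \bdot h')\d\mu_H(h)\d\mu_H(h').
\]
Running the preceding character computation in both variables yields $\tr\del{\m{P}_\lambda \pi^{(\lambda)}(g_2^{-1} \bdot g_1)\m{P}_\lambda}$ for each $\lambda$; by idempotency of $\m{P}_\lambda$ and cyclicity of the trace this equals $\tr\del{\m{P}_\lambda \pi^{(\lambda)}(g_2^{-1} \bdot g_1)}$, i.e. the same $\sum_{j=1}^{r_\lambda}\pi^{(\lambda)}_{jj}(g_2^{-1} \bdot g_1)$ as before. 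Thus the double average equals the single average equals $k$, so $\tilde f \~[GP](0, k)$ agrees in distribution with $f$.

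The main obstacle is not the algebra but the analytic bookkeeping around convergence and measurability: justifying the interchange of the $H$-integral with the infinite sum over $\lambda$, and, more delicately, making sense of the stochastic integral $\int_H f_G(g \bdot h)\d\mu_H(h)$ and exchanging it with the expectation. Both are handled by continuity of $k_G$ and of the sample paths together with compactness of $H$, but they are the steps that require genuine care; the rest reduces to the projection property of $\m{P}_\lambda$.
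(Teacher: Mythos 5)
Your proof is correct, and it arrives at the same pivotal identity as the paper, namely $\int_H \chi^{(\lambda)}(g \bdot h)\,\d\mu_H(h) = \sum_{j=1}^{r_\lambda}\pi^{(\lambda)}_{jj}(g)$, but by a different mechanism. The paper works at the level of functions on $G$: it observes that the matrix coefficients $\pi^{(\lambda)}_{jk}$ with $k \leq r_\lambda$ are constant on cosets (so averaging over $H$ fixes them) while the remaining ones lie in the orthogonal complement of the coset-constant subspace (so they average to zero). You instead work at the level of the representation space $V_\lambda$, showing that $\m{P}_\lambda = \int_H \pi^{(\lambda)}(h)\,\d\mu_H(h)$ is the orthogonal projection onto the $H$-invariant vectors and reading the identity off as $\tr\bigl(\pi^{(\lambda)}(g)\m{P}_\lambda\bigr)$; this is arguably cleaner, since the vanishing of the non-spherical coefficients falls out of the block structure of $\m{P}_\lambda$ rather than requiring a separate orthogonality argument. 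The two routes also diverge on the covariance of the averaged process: the paper collapses the double integral $\int_H\int_H k_G(g_1 \bdot h_1, g_2 \bdot h_2)$ to a single one using bi-invariance of $k_G$ and the invariance of $\mu_H$ under $h_1 \mapsto h_1 \bdot h_2^{-1}$, whereas you evaluate it representation-by-representation and invoke idempotency of $\m{P}_\lambda$ plus cyclicity of the trace; the paper's trick is shorter, while yours makes transparent exactly why a single $H$-average already suffices. Your closing remarks on the analytic bookkeeping (interchange of sum and integral, sense of the pathwise integral) identify real gaps, but the paper itself passes over them silently, so your treatment is not less rigorous than the original.
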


\begin{proof}
\Cref{appdx:proofs}.
\end{proof}

This generalizes the \emph{periodic summation} idea in the Euclidean setting, which says that a kernel on the torus can be obtained by starting with a kernel on $\R^d$, and summing over multiples of $2\pi$ \cite{borovitskiy2020}.
One can therefore compute such kernels on a homogeneous space $G/H$ by computing their analogs on the underlying Lie group $G$ using the techniques of \Cref{sec:computation:char}, and averaging over uniform random samples from $H$, obtaining 
\[
k(g_1 \bdot H, g_2 \bdot H) &\approx \frac{1}{S} \sum_{s=1}^S k_G(g_1\bdot h_s, g_2)
&
h_s &\~ \mu_H
.
\]
Note that this procedure---in the finite-sample case---does \emph{not} necessarily preserve positive semi-definiteness (or even symmetry) of the kernel.
To guarantee a well-defined kernel at the cost of additional computational complexity, it can therefore be preferable to average over both arguments of the kernel, obtaining 
\[
k(g_1 \bdot H, g_2 \bdot H) &\approx \frac{1}{S^2} \sum_{s=1}^S \sum_{s'=1}^S k_G(g_1\bdot h_s, g_2\bdot h_{s'})
&
h_s &\~ \mu_H
\]
where we emphasize that a total of $S$ samples are computed from $\mu_H$, which are summed over the set of all pairs of samples.
The resulting expression is guaranteed to be symmetric and positive semi-definite.
We now proceed to consider sampling.

\subsection{Efficient Sampling} \label{sec:computation:comp_sampling}

\begin{figure}
\begin{subfigure}{0.32\textwidth}
\begin{tikzpicture}
\path[use as bounding box] (-2.625,-2) rectangle (2.125,2);
\node at (0,0) {\includegraphics[scale=0.25]{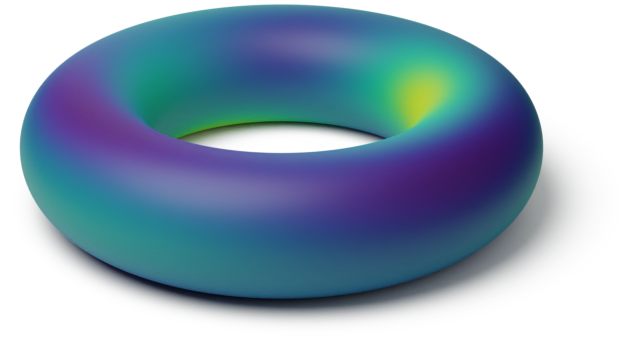}};
\end{tikzpicture}
\caption{Torus}
\end{subfigure}
\begin{subfigure}{0.32\textwidth}
\begin{tikzpicture}
\path[use as bounding box] (-2.625,-2) rectangle (2.125,2);
\node at (0,0) {\includegraphics[scale=0.25]{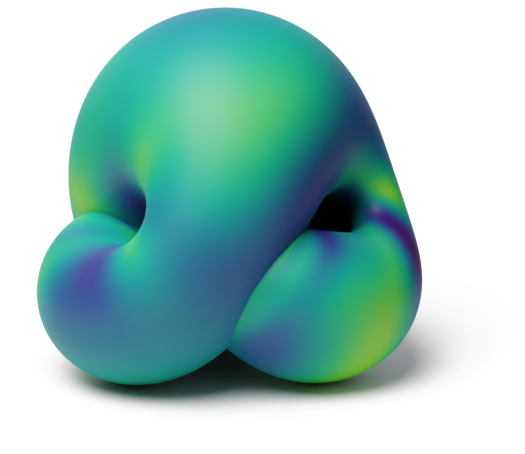}};
\end{tikzpicture}
\caption{Projective plane}
\end{subfigure}
\begin{subfigure}{0.32\textwidth}
\begin{tikzpicture}
\path[use as bounding box] (-2.625,-2) rectangle (2.125,2);
\node at (0,0) {\includegraphics[scale=0.25]{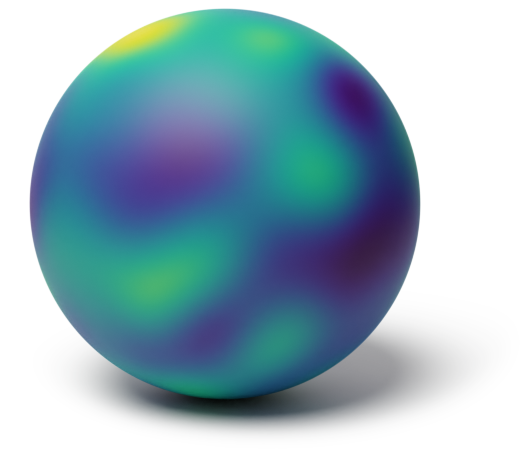}};
\end{tikzpicture}
\caption{Sphere}
\end{subfigure}
\caption{Samples from a Gaussian process with heat kernel covariance, on $\mathbb{T}^2$, $\RP_2$, and~$\mathbb{S}_2$.}
\label{fig:gp-samples}
\end{figure}

Since a sample from a posterior Gaussian process can be obtained by transforming prior samples into posterior samples using \Cref{eqn:intro:pathwise} we now study the problem of sampling from prior Gaussian process $f$.
Given the set of signatures $\Lambda$ of \Cref{sec:computation:comp_traverse}, we can form a prior sample from $f$ by summing a sequence of processes $f^{(\lambda)} \~[GP](0,k^{(\lambda)})$ where $k^{(\lambda)} = a^{(\lambda)} \, \Re\chi^{(\lambda)}(g_2^{-1} \bdot g_1)$ or its homogeneous space analog.
We now introduce techniques for efficiently sampling from such processes, which further allow one to compute a Karhunen--Loève basis as a byproduct.

\subsubsection{Generalized Random Phase Fourier Features} 
\label{sec:computation:comp_sampling:rff}

The key idea for constructing samples comes from the following observation.

\begin{proposition} \label{thm:rkhs}
Let $\c{S}$ be a measurable space, let $\mu$ be probability measure on $\c{S}$, and let $\phi_j$ with $j = 1,\ldots,N$ be an $L^2(\c{S}, \mu)$-orthonormal system.
Let $K(x, x') = \sum_{j=1}^N \phi_j(x) \overline{\phi_j(x')}$.
Then for all $x, x' \in \c{S}$ we have
\[ \label{eqn:rkhs_prop}
K(x, x') = \int_{\c{S}} K(x, u) \overline{K(x', u)} \d  \mu(u)
.
\]
\end{proposition}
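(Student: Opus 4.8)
The plan is to establish \eqref{eqn:rkhs_prop} by direct computation: expand both factors inside the integral into their defining finite sums, integrate term by term, and collapse the resulting double sum using orthonormality of the system $\phi_1, \ldots, \phi_N$. Since the sum defining $K$ has only $N$ terms, none of the analytic subtleties one would face with infinite expansions arise, and the whole argument reduces to a finite linear-algebraic manipulation carried out under the integral sign.

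First I would rewrite the conjugated kernel. From the definition $K(y,u) = \sum_{j=1}^N \phi_j(y)\overline{\phi_j(u)}$ together with the behaviour of complex conjugation, we have
\[
\overline{K(y,u)} = \sum_{j=1}^N \overline{\phi_j(y)}\,\phi_j(u).
\]
Multiplying this against $K(x,u) = \sum_{i=1}^N \phi_i(x)\overline{\phi_i(u)}$ and expanding the product of the two finite sums gives
\[
K(x,u)\,\overline{K(y,u)} = \sum_{i=1}^N \sum_{j=1}^N \phi_i(x)\,\overline{\phi_j(y)}\;\phi_j(u)\,\overline{\phi_i(u)}.
\]

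Next I would integrate in $u$ against $\mu$. Because the sum is finite, interchanging summation and integration is immediate, and each term is $\mu$-integrable, being a product of two $L^2(\c{S},\mu)$ functions and hence in $L^1$ by Cauchy--Schwarz. The factors $\phi_i(x)\overline{\phi_j(y)}$ do not depend on $u$ and pull out of the integral, leaving
\[
\int_{\c{S}} K(x,u)\,\overline{K(y,u)}\d\mu(u) = \sum_{i=1}^N \sum_{j=1}^N \phi_i(x)\,\overline{\phi_j(y)} \int_{\c{S}} \phi_j(u)\,\overline{\phi_i(u)}\d\mu(u).
\]
The inner integral is precisely the $L^2(\c{S},\mu)$ inner product of $\phi_j$ and $\phi_i$, which equals $\delta_{ij}$ by the assumed orthonormality. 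The double sum therefore collapses onto its diagonal, yielding $\sum_{i=1}^N \phi_i(x)\overline{\phi_i(y)} = K(x,y)$, which is the claimed identity.

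I expect no genuine obstacle here: the sole content is orthonormality, and the finiteness of $N$ removes any measure-theoretic concern about exchanging sum and integral. The only point worth flagging is that if one later wishes to apply the same reproducing identity to an \emph{infinite} orthonormal system, as is implicit when passing to a Karhunen--Lo\`eve expansion, then the computation still goes through provided the series defining $K$ converges in $L^2(\c{S},\mu)$ in each argument; at that stage a dominated-convergence or Parseval-type bound would be required to justify the interchange. For the finite statement as given, however, none of this is needed.
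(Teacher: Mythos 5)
Your proof is correct and is exactly the direct computation the paper intends: expand $K(x,u)\overline{K(y,u)}$ into a finite double sum, swap the (finite) sum with the integral, and collapse the double sum via orthonormality of the $\phi_j$. Nothing further is needed, and your remark about the infinite-system case is a reasonable aside but not part of the stated claim.
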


\begin{proof}
This follows by direct computation, given for completeness in~\Cref{appdx:proofs}.
\end{proof}

We can therefore Monte Carlo approximate the function $K$ as
\[ \label{eqn:rff_kernel}
K(x, x') &\approx \frac{1}{S}\sum_{s=1}^S K(x, u_s) \overline{K(x', u_s)}
&
u_s &\~ \mu.
\]
If $f \~[GP](0, K)$ then we have the approximation
\[ \label{eqn:rff_process}
f(x) &\approx \frac{1}{\sqrt{S}} \sum_{s=1}^S w_s K(x, u_s)
&
u_s &\~ \mu
&
w_s &\~[N](0, 1)
.
\]
Thus, starting from a covariance kernel of the form $K(x, x') = \sum_{j=1}^N \phi_j(x) \overline{\phi_j(x')}$ for an orthonormal system $\phi_j$, we have constructed a family of random functions $K(\., u_l)$ which enable us to approximately express the Gaussian process $\f{GP}(0, K)$ in a form which is directly amenable to efficient sampling.

If $\c{S} = \bb{T}^d$, and $\phi_j$ are sinusoids from the sine-cosine random Fourier features, then $K(x,u_l)$ are exactly cosines with a random phase shift.
We therefore say that $u_l$ is the \emph{random phase} and $K$ is the \emph{phase function}, and call this construction \emph{generalized random phase Fourier features} for general measure spaces, equipped only with a probability measure $\mu$.
We illustrate samples from such processes, computed additionally using the techniques of \Cref{sec:heat_matern}, in \Cref{fig:gp-samples}.
Note that the process \Cref{eqn:rff_process} may fail to be stationary: we examine this further in \Cref{sec:computation:comp_sampling:kl}.
We now consider spaces of interest.

\subsubsection{Phase Functions and Efficient Sampling for Lie Groups} \label{sec:computation:comp_sampling:lie}

For a compact Lie group $G$, there are usually efficient methods for sampling from the Haar measure on $G$: see \textcite{mezzadri07}. 
Thus, generalized random phase Fourier features enable us to efficiently sample from approximate priors using only the characters.
If $\chi^{(\lambda)}$ is real-valued, namely $\Re \chi^{(\lambda)} = \chi^{(\lambda)}$, then we let $K^{(\lambda)}(g_1, g_2) = d_{\lambda} \chi^{(\lambda)}(g_2^{-1} \bdot g_1)$, corresponding to $\phi_\lambda = \sqrt{d_{\lambda}} \pi^{(\lambda)}_{j k}$.
Otherwise, note that the map $g \|> \del[0]{\pi^{(\lambda)}(g)}^*$, which sends a group element $g$ to the conjugate transpose of the matrix defined by the matrix coefficients $\pi^{(\lambda)}_{ij}(g)$, is an irreducible unitary representation.
It is thus equal to $\pi^{(\lambda')}$ for some $\lambda' \in \Lambda$, $\lambda' \not= \lambda$ and whose character $\chi^{(\lambda')}$ obviously equals $\overline{\chi^{(\lambda)}}$.
In this case, we associate the aggregate of $\sqrt{d_{\lambda}} \pi^{(\lambda)}_{j k}$ and $\sqrt{d_{\lambda'}} \pi^{(\lambda')}_{j k}$ with $\phi_l$, and further associate $d_{\lambda} \chi^{(\lambda)} + d_{\lambda'} \chi^{(\lambda')} = 2 d_{\lambda} \Re \chi^{(\lambda)}$ with $K^{(\lambda)}$.
With this, we can compute each of the terms in~\Cref{eqn:stationary_lie:kernel} using only the characters, and the resulting approximation is always real-valued.
We therefore obtain
\[ 
\label{eqn:random_phase_lie}
f(g)
&\approx
\sum_{\lambda \in \tl{\Lambda}}
\frac{1}{\sqrt{S}}
\sum_{s = 1}^{S}
w^{(\lambda)}_{s}
K^{(\lambda)}(g, u_s)
&
u_s &\~ \mu_G
&
w^{(\lambda)}_{s} &\~[N]\del{0, \frac{a^{(\lambda)}}{d_{\lambda}}}
.
\]
Here, $\tl\Lambda$ are the truncated signatures defined previously, $\mu_G$ is the probabilistic Haar measure on the group $G$.
$\tl{\Lambda}$ should be chosen so as \emph{not} to contain~$\lambda'$ together with any $\lambda$ except in the case $\lambda = \lambda'$.
The size of~$\tl\Lambda$ and value of $L$ control approximation quality.
One can interpret the function $g\|> [K^{(\lambda)}(g, u_s)]_{s=1,\ldots,S,\,\lambda\in\tl\Lambda}$ as an approximate finite-dimensional feature map: one can therefore also use this construction for other purposes, beyond just sampling, where feature maps are helpful.
This can include, for instance, speeding up training using Bayesian-linear-model-based techniques.

\subsubsection{Phase Functions and Efficient Sampling for Homogeneous Spaces} \label{sec:computation:comp_sampling:homogeneous}

If $G/H$ is a homogeneous space, the key question for defining generalized random phase Fourier features is how to correctly choose the function $K$ in \Cref{thm:rkhs}.
A natural idea is to associate each of zonal spherical function $\pi^{(\lambda)}_{j k}(g_2^{-1} \bdot g_1)$ with a separate function $\tl{K}(g_1 \bdot H, g_2 \bdot H)$: however, this is not the right thing to do, as we obtain 
\[
\tl{K}(g_1 \bdot H, g_2 \bdot H) = \sum_{j=1}^{d_{\lambda}} \pi^{(\lambda)}_{j k_1}(g_1) \overline{\pi^{(\lambda)}_{j k_2}(g_2)}
\]
for $1 \leq k_1, k_2 \leq r_\lambda$, which is not the form required by \Cref{thm:rkhs} unless~${k_1 = k_2}$.
Moreover, $\tl{K}$ and the induced approximation $f$ can be complex-valued.
A more delicate approach is therefore needed.
Let $a^{(\lambda)}_{k_1 k_2}$ be a set of coefficients such that they form a symmetric positive semi-definite matrix whose precise form is to be determined later. 
Define 
\[ \label{eqn:homogeneous_K_defn}
K(g_1 \bdot H, g_2 \bdot H)
=
\sum_{k_1, k_2=1}^{r_\lambda}
a^{(\lambda)}_{k_1 k_2} \pi^{(\lambda)}_{k_1 k_2}(g_2^{-1} \bdot g_1)
=
\sum_{k_1, k_2=1}^{r_\lambda}
a^{(\lambda)}_{k_1 k_2}
\sum_{j=1}^{d_{\lambda}}
\pi^{(\lambda)}_{j k_1}(g_1)
\overline{\pi^{(\lambda)}_{j k_2}(g_2)}
\]
and observe the following.

\begin{restatable}{proposition}{ThmKPhaseCond} \label{thm:K_phase_cond}
The function $K$ defined by~\Cref{eqn:homogeneous_K_defn} satisfies
\[ \label{eqn:rkhs_prop_to_test}
K(x, x') = \int_{G/H} K(x, u) \overline{K(x', u)} \d\mu_{G/H}(u)
\]
if and only if the matrix $\m{A} = d_{\lambda}^{-1} \del[1]{a^{(\lambda)}_{k_1 k_2}}$ is idempotent, namely ${\m{A}^2=\m{A}}$.
\end{restatable}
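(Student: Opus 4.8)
The plan is to verify \eqref{eqn:rkhs_prop_to_test} by expanding the product $K(x,u)\overline{K(y,u)}$, integrating term-by-term in $u$ using the orthogonality of the spherical functions, and then comparing the resulting expression against $K(x,y)$ coefficient-by-coefficient. First I would substitute the doubly-indexed form of $K$ from \eqref{eqn:homogeneous_K_defn} into both factors; since the $a^{(\lambda)}_{k_1 k_2}$ are real, this writes $K(x,u)\overline{K(y,u)}$ as a sum over $k_1,k_2,k_1',k_2'$ and $j,j'$ of the terms $a^{(\lambda)}_{k_1 k_2}\,a^{(\lambda)}_{k_1' k_2'}\,\pi^{(\lambda)}_{j k_1}(x)\,\overline{\pi^{(\lambda)}_{j' k_1'}(y)}\,\overline{\pi^{(\lambda)}_{j k_2}(u)}\,\pi^{(\lambda)}_{j' k_2'}(u)$.

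Next I would integrate over $u \in G/H$. The crucial input is that the spherical functions $\pi^{(\lambda)}_{j k}$ with $1\le k\le r_\lambda$ are constant on cosets, hence descend to $G/H$, and that $\mu_{G/H}$ is the pushforward of the Haar measure $\mu_G$ along the projection $\phi$. Their $L^2(G/H)$ inner products therefore coincide with the $L^2(G)$ inner products given by the Peter--Weyl Theorem, so $\int_{G/H}\overline{\pi^{(\lambda)}_{j k_2}(u)}\,\pi^{(\lambda)}_{j' k_2'}(u)\,\d\mu_{G/H}(u) = d_\lambda^{-1}\delta_{j j'}\delta_{k_2 k_2'}$. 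Applying this collapses the $u$-integral together with the $j'$- and $k_2'$-sums, leaving $d_\lambda^{-1}\sum_{k_1,k_1',k_2} a^{(\lambda)}_{k_1 k_2}\,a^{(\lambda)}_{k_1' k_2}\,\sum_j \pi^{(\lambda)}_{j k_1}(x)\,\overline{\pi^{(\lambda)}_{j k_1'}(y)}$. By the symmetry of $\m{A}^{(\lambda)}=(a^{(\lambda)}_{k_1 k_2})$, the inner sum $\sum_{k_2} a^{(\lambda)}_{k_1 k_2} a^{(\lambda)}_{k_1' k_2}$ is precisely the $(k_1,k_1')$ entry of $(\m{A}^{(\lambda)})^2$.

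I would then compare this against $K(x,y)=\sum_{k_1,k_2} a^{(\lambda)}_{k_1 k_2}\sum_j \pi^{(\lambda)}_{j k_1}(x)\,\overline{\pi^{(\lambda)}_{j k_2}(y)}$. To convert the pointwise identity \eqref{eqn:rkhs_prop_to_test} into an identity of coefficient matrices, I would integrate the difference of the two expressions against $\overline{\pi^{(\lambda)}_{j_0 k_1}(x)}$ in $x$ over $G/H$ and invoke the same orthogonality relation; because the spherical functions are linearly independent, this projection isolates each coefficient and forces $d_\lambda^{-1}(\m{A}^{(\lambda)})^2=\m{A}^{(\lambda)}$. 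Substituting $\m{A}^{(\lambda)}=d_\lambda\m{A}$ then reduces this to $\m{A}^2=\m{A}$, which, being an equivalence at every step, settles both directions simultaneously.

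The main obstacle is the coefficient-matching step. Because the summand couples the shared index $j$ across the two factors $\pi^{(\lambda)}_{j k_1}(x)$ and $\overline{\pi^{(\lambda)}_{j k_2}(y)}$, the identity is \emph{not} a statement about an arbitrary linear combination of the product basis, so one cannot simply read the coefficients off; instead one must project out individual entries via the orthogonality integral, while carefully tracking the single normalization constant $d_\lambda^{-1}$ that relates the Haar measure $\mu_G$ to its pushforward $\mu_{G/H}$.
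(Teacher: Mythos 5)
Your proposal is correct and follows essentially the same route as the paper's proof: expand $K(x,u)\overline{K(y,u)}$, collapse the $u$-integral via the Peter--Weyl orthogonality relation $\int_{G/H}\overline{\pi^{(\lambda)}_{j k}(u)}\,\pi^{(\lambda)}_{j' k'}(u)\,\d\mu_{G/H}(u)=d_\lambda^{-1}\delta_{jj'}\delta_{kk'}$, and match coefficients against \eqref{eqn:homogeneous_K_defn} to reduce the identity to $\m{A}\m{A}^{\top}=\m{A}$, which symmetry turns into $\m{A}^2=\m{A}$. Your extra care in justifying the coefficient-matching step by projecting onto the linearly independent spherical functions is a point the paper leaves implicit, but the substance of the argument is identical.
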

\begin{proof}
\Cref{appdx:proofs}.
\end{proof}

It follows that if the scaled coefficient matrix $\m{A}$ whose entries are $d_{\lambda}^{-1} a^{(\lambda)}_{k_1 k_2}$ is an orthogonal projection, then generalized random phase Fourier features sampling is possible.
By positive semi-definiteness, $\m{A}$ has an eigendecomposition
\[
\m{A} &= \m{U}_{(\lambda)} \operatorname{diag}\del{\alpha_1^{(\lambda)}, \ldots, \alpha_{r_\lambda}^{(\lambda)}}\m{U}_{(\lambda)}^{\top}
&
\m{U}_{(\lambda)} &= \del{\v{u}_1^{(\lambda)}, \ldots, \v{u}_{r_\lambda}^{(\lambda)}}
\]
where $\alpha_k^{(\lambda)}$ are non-negative eigenvalues, and the columns of $\m{U}_{(\lambda)}$ are an orthonormal basis of eigenvectors of $\m{A}$.
Reinterpreting this matrix product as a sum, we can write
\[
\m{A} = \sum_{k=1}^{r_\lambda} \alpha_k^{(\lambda)} \ubr{\v{u}_k^{(\lambda)} {\v{u}_k^{(\lambda)}}^{\top}}_{\m{U}^{(k)}_{(\lambda)}}
\]
where $\m{U}^{(k)}_{(\lambda)}$ are orthogonal projection matrices.
Now, define
\[
\tl{K}^{(\lambda)}_k(g_1 \bdot H, g_2 \bdot H) = \sum_{k_1, k_2=1}^{r_\lambda}
d_{\lambda} \m{U}^{(k)}_{\lambda, k_1 k_2} \pi^{(\lambda)}_{k_1 k_2}(g_2^{-1} \bdot g_1).
\]
Recall that the aforementioned map $g \|> \del[0]{\pi^{(\lambda)}(g)}^*$ defines an irreducible unitary representation and thus equal to $\pi^{(\lambda')}$ for some $\lambda' \in \Lambda$.
If $\lambda$ is such that $\lambda = \lambda'$, then $\pi^{(\lambda)}_{j k} = \overline{\pi^{(\lambda)}_{k j}}$ which, by virtue of $\m{U}^{(k)}_{(\lambda)}$ being symmetric, implies $\tl{K}^{(\lambda)}_k (x, x') \in \R$.
We have in this case
\[
\sum_{k_1, k_2=1}^{r_\lambda}
a^{(\lambda)}_{k_1 k_2} \Re \pi^{(\lambda)}_{k_1 k_2}(g_2^{-1} \bdot g_1)
=
\sum_{k=1}^{r_\lambda} \frac{\alpha_k^{(\lambda)}}{d_{\lambda}} \tl{K}^{(\lambda)}_k (g_1 \bdot H, g_2 \bdot H)
\]
where $\tl{K}^{(\lambda)}_k$ are real-valued phase functions which for $\lambda = \lambda'$ we rename to $K^{(\lambda)}_k$.

If $\lambda \not= \lambda'$, to ensure the resulting samples are real-valued, write
\[
&\sum_{k_1, k_2=1}^{r_\lambda}
a^{(\lambda)}_{k_1 k_2} \Re \pi^{(\lambda)}_{k_1 k_2}(g_2^{-1} \bdot g_1)
+
\sum_{k_1, k_2=1}^{r_\lambda}
a^{(\lambda')}_{k_1 k_2} \Re \pi^{(\lambda')}_{k_1 k_2}(g_2^{-1} \bdot g_1)
\\
&=\!\!\!\! \sum_{k_1, k_2=1}^{r_\lambda}
a^{(\lambda)}_{k_1 k_2} \pi^{(\lambda)}_{k_1 k_2}(g_2^{-1} \bdot g_1)
+
a^{(\lambda')}_{k_1 k_2} \pi^{(\lambda')}_{k_1 k_2}(g_2^{-1} \bdot g_1)
=
\sum_{k=1}^{r_\lambda} \frac{\alpha_k^{(\lambda)}}{d_{\lambda}} 2 \Re \tl{K}^{(\lambda)}_k (g_1 \bdot H, g_2 \bdot H)
\]
where we used the fact that $a^{(\lambda)}_{k_1 k_2} = a^{(\lambda')}_{k_2 k_1} \in \R$.
From the preceding argument it is clear that $2 \Re \tl{K}^{(\lambda)}_k$ are real-valued phase functions which for $\lambda \not= \lambda'$ we rename to $K^{(\lambda)}_k$.

Since terms with different $\lambda$ values in the generalized random phase Fourier feature approximation are independent, it suffices to sample from each $K^{(\lambda)}_k$ by \Cref{eqn:rff_process}, multiply by $\sqrt{\alpha_k^{(\lambda)} / d_{\lambda}}$, and reassemble the expression by summing terms.
This gives the approximation
\[ 
\label{eqn:random_phase_homogeneous}
f(g \bdot H)
&\approx
\sum_{\lambda \in \tl\Lambda}
\sum_{k = 1}^{r_\lambda}
\frac{1}{\sqrt{S}}
\sum_{s = 1}^{S}
w^{(\lambda)}_{s}
K^{(\lambda)}_k (g \bdot H, u_s)
&
w^{(\lambda)}_{s} &\~[N]\del{0, \frac{\alpha^{(\lambda)}_k}{d_{\lambda}}}
&
u_s &\~ \mu_{G/H}
\]
where $\tl\Lambda$ are the truncated signatures, and~$\mu_{G/H}$ is the probabilistic Haar measure on~$G/H$.
$\tl\Lambda$ should be chosen so as \emph{not} to contain~$\lambda'$ together with any $\lambda$ unless $\lambda = \lambda'$, where this notation was previously defined in the context of the matrix-coefficient conjugate-transpose map $g \|> (\pi^{(\lambda)}(g))^*$.
The size of~$\tl\Lambda$ and the value of $S$ control the quality of the approximation.
If~$G/H$ is a symmetric space, then~$r_\lambda \leq 1$, and the above formulas simplify considerably.
Mirroring the Lie group setting, the function $x \|> [K^{(\lambda)}_k(x, u_s)]_{s=1,\ldots,S,\,k=1,\ldots,r_\lambda,\,\lambda\in\tl\Lambda}$ gives an approximate finite-dimensional feature map for the kernel.

\subsubsection{Calculating a Karhunen--Loève Basis} \label{sec:computation:comp_sampling:kl}

It is not hard to see that matrix coefficients or general spherical functions provide a Karhunen--Loève basis for the Gaussian process, whose covariance is given by the respective character or a combination of zonal spherical functions.
However, the former are available on a much smaller class of spaces compared to the latter.
This was one of the main reasons we explored how to sample from a stationary Gaussian process using only the former.
We now ask and answer a complementary question: given only the characters or zonal spherical functions, is it possible to obtain a Karhunen--Loève basis?

Our strategy will be to find a non-orthonormal basis, and Gram--Schmidt orthogonalize it to obtain a Karhunen--Loève basis.
This generalizes the approach of \textcite[Section 1.3]{dai2013} from spheres to more general spaces.
We begin from an abstract formulation.
Consider an orthonormal system $\phi_j$ in the space $L^2(\c{S}, \mu)$, where $(\c{S}, \mu)$ is a measure space and $\mu$ is a probability measure.
Define, as before, the function $K(x, x') = \sum_{j = 1}^{N} \phi_j(x) \overline{\phi_j(x')}$.
Consider a set of $N$ points $x_j$ in $\c{S}$ and define the matrices
\[
\m{M}_k
=
\begin{pmatrix}
\phi_1(x_1) & \cdots & \phi_1(x_k) \\
\vdots & \ddots & \vdots \\
\phi_k(x_1) & \cdots & \phi_k(x_k) \\
\end{pmatrix}
.
\]

\begin{restatable}{lemma}{ThmFundamentalSetExists} \label{thm:fundamental_set_exists}
There exist $\cbr{x_1, \ldots, x_N} \subseteq \c{S}$ such that $\det \m{M}_k \neq 0$ for all $1 \leq k \leq N$.
\end{restatable}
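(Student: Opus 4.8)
The plan is to build the points $x_1, \ldots, x_N$ one at a time by induction on $k$, arranging at each stage that the newly completed leading minor $\det \m{M}_k$ is nonzero while leaving the earlier minors, which depend only on $x_1, \ldots, x_{k-1}$, undisturbed. The key observation I would exploit is that once $x_1, \ldots, x_{k-1}$ are fixed, the quantity $\det \m{M}_k$ is a function of the single new point $x_k$, and this function is a nontrivial linear combination of $\phi_1, \ldots, \phi_k$; its nonvanishing as an element of $L^2(\c{S}, \mu)$ then forces the existence of an admissible $x_k$.

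For the base case $k = 1$ I would note that $\det \m{M}_1 = \phi_1(x_1)$, and since $\phi_1$ is a unit vector in $L^2(\c{S}, \mu)$ it is not $\mu$-almost-everywhere zero; hence $\cbr{u \in \c{S} : \phi_1(u) \neq 0}$ has positive measure, and any of its points serves as $x_1$. For the inductive step, assuming $x_1, \ldots, x_{k-1}$ have been chosen with $\det \m{M}_j \neq 0$ for all $j \leq k - 1$, I would expand $\det \m{M}_k$ along its last column to obtain
\[
\det \m{M}_k = \sum_{i=1}^{k} (-1)^{i+k} \phi_i(x_k) \, \det \m{N}_i,
\]
where $\m{N}_i$ is the matrix obtained by deleting the $i$-th row and the last column. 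Deleting the last row and last column leaves exactly the top-left $(k-1)\x(k-1)$ block $\m{M}_{k-1}$, so $\det \m{N}_k = \det \m{M}_{k-1}$. Read as a function of $x_k$, the right-hand side is $\sum_{i=1}^k c_i \phi_i$ with constants $c_i = (-1)^{i+k} \det \m{N}_i$ determined by the fixed points, and its leading coefficient $c_k = \det \m{M}_{k-1}$ is nonzero by the inductive hypothesis.

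Since $\phi_1, \ldots, \phi_k$ are orthonormal they are linearly independent in $L^2(\c{S}, \mu)$, so $\sum_{i=1}^k c_i \phi_i$ with $c_k \neq 0$ is a nonzero element of $L^2(\c{S}, \mu)$ and cannot vanish $\mu$-almost everywhere. The set of $u$ at which it is nonzero therefore has positive $\mu$-measure; choosing $x_k$ there keeps $\det \m{M}_k \neq 0$, and iterating up to $k = N$ yields the claimed set.

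The one point requiring care, which I expect to be the main (if modest) obstacle, is the passage from ``nonzero in $L^2$'' to ``nonzero at an actual point'': because the $\phi_j$ are a priori only $L^2$-functions with no canonical pointwise representative, I cannot name a specific value of $x_k$ in advance. The positive-measure argument circumvents this, since it needs only the existence of some point of nonvanishing rather than a distinguished one. The algebraic identity $\det \m{N}_k = \det \m{M}_{k-1}$ that closes the induction is immediate from the cofactor expansion and needs no separate computation.
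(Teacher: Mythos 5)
Your proposal is correct and follows essentially the same route as the paper: an inductive construction in which $\det \m{M}_k$, viewed as a function of the new point $x_k$, is a linear combination $\sum_{i=1}^k c_i\phi_i$ whose top coefficient is $\det\m{M}_{k-1}\neq 0$, so that linear independence of the $\phi_i$ forces the existence of an admissible $x_k$. The only cosmetic difference is that you obtain this expression by cofactor expansion along the last column while the paper factors it via the Schur complement as $\det\m{M}_{k-1}\bigl(\phi_k(x_k)-\sum_{j<k}\alpha_j\phi_j(x_k)\bigr)$; these are the same identity, and your explicit positive-measure remark for passing from ``nonzero in $L^2$'' to ``nonzero at some point'' is a welcome clarification of a step the paper leaves implicit.
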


\begin{proof}
\Cref{appdx:proofs}.
\end{proof}

\begin{definition}
A collection $\cbr{x_j}_{j=1}^N \subseteq \c{S}$ is a \emph{fundamental set} if $\det \m{M}_N \neq 0$.
\end{definition}

The existence of such sets, which are non-unique, is guaranteed by Lemma~\ref{thm:fundamental_set_exists}.

\begin{proposition} \label{thm:fundamental_basis_independent}
The notion of a fundamental set does not depend on a specific orthonormal system system $\phi_j$ but rather only on the space $\Span \phi_j$---that is, on $K$.
\end{proposition}

\begin{proof}
We have
\[ \label{eqn:mnsq}
\m{M}_N^{*} \m{M}_N
=
\begin{pmatrix}
K(x_1, x_1) & \cdots & K(x_N, x_1) \\
\vdots & \ddots & \vdots \\
K(x_1, x_N) & \cdots & K(x_N, x_N)
\end{pmatrix}
.
\]
We have $\det \m{M}_N \neq 0$ if and only if $\det \m{M}_N^{*} \m{M}_N > 0$.
The latter depends on $K$ only, and hence only on $\Span \phi_j$.
\end{proof}

\begin{restatable}{theorem}{ThmFundamentalSpan} \label{thm:fundamental_span}
If $x_j$ is a fundamental set with respect to the system $\phi_j$, then
\[
\Span \phi_j
=
\Span K(\., x_j)
\]
\end{restatable}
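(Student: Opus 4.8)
The plan is to prove the equality of spans by a dimension count, reducing everything to the nonvanishing of $\det \m{M}_N$. First I would record the easy inclusion $\Span K(\cdot, x_j) \subseteq \Span \phi_j$. Expanding the definition of $K$ gives
\[
K(\cdot, x_j) = \sum_{i=1}^{N} \phi_i(\cdot)\,\overline{\phi_i(x_j)},
\]
so each of the $N$ functions $K(\cdot, x_j)$ is a linear combination of $\phi_1, \ldots, \phi_N$, and the inclusion follows. Since the $\phi_i$ are orthonormal, they are in particular linearly independent, so $\Span \phi_j$ is exactly $N$-dimensional.

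Because $\Span K(\cdot, x_j)$ is a subspace of this $N$-dimensional space and is spanned by $N$ functions, the desired equality holds if and only if the functions $K(\cdot, x_1), \ldots, K(\cdot, x_N)$ are themselves linearly independent. I would verify this by inspecting the matrix $\m{B}$ of their coordinates in the basis $\phi_1, \ldots, \phi_N$. From the expansion above, the coordinate of $K(\cdot, x_j)$ along $\phi_i$ is $\overline{\phi_i(x_j)}$, so $\m{B}_{ij} = \overline{\phi_i(x_j)} = \overline{(\m{M}_N)_{ij}}$, i.e.\ $\m{B} = \overline{\m{M}_N}$. Linear independence of the $K(\cdot, x_j)$ is equivalent to $\det \m{B} \neq 0$.

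Finally, I would compute $\det \m{B} = \det \overline{\m{M}_N} = \overline{\det \m{M}_N}$, which is nonzero precisely because $\{x_j\}_{j=1}^N$ is a fundamental set, i.e.\ $\det \m{M}_N \neq 0$. Hence the $K(\cdot, x_j)$ are linearly independent, they span a subspace of $\Span \phi_j$ of full dimension $N$, and therefore $\Span K(\cdot, x_j) = \Span \phi_j$, as claimed.

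There is no genuine analytic difficulty here; the argument is entirely linear-algebraic. The only step that requires care is correctly identifying the coordinate matrix as $\overline{\m{M}_N}$ rather than $\m{M}_N$ itself—the conjugate arises from the second argument of $K$—and observing that entrywise conjugation leaves the nonvanishing of the determinant intact, so that the fundamental-set hypothesis transfers directly to the independence of the $K(\cdot, x_j)$.
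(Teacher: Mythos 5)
Your proof is correct and takes essentially the same route as the paper: both expand $K(\cdot, x_j)$ in the basis $\phi_i$, identify the coordinate matrix as a conjugate (the paper writes it as $\m{M}_N^*$, you as $\overline{\m{M}_N}$ — the difference is only a transpose convention), and conclude from $\det \m{M}_N \neq 0$. No issues.
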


\begin{proof}
\Cref{appdx:proofs}.
\end{proof}

Notice that \Cref{eqn:mnsq} allows one to check whether a set of points is a fundamental set without using the basis $\phi_j$.
Thus, one can find a fundamental set by uniformly sampling a set of points $x_1, \ldots, x_N \in G$, and checking whether the determinant $\det \m{M}_N^{*} \m{M}_N$ vanishes.
This is generally a rare event, so in practice it typically suffices to sample once.

The basis $K(x_j,\.)$ is not necessarily orthonormal, but pairwise inner products of basis functions can nonetheless be easily calculated using its special form, by writing
\[
\innerprod{K(\., x_j)}{K(\., x_k)}_{L^2(\c{S}, \mu)}
=
\sum_{n=1}^N \overline{\phi_n(x_j)} \phi_n(x_k)
=
K(x_k, x_j)
=
\overline{K(x_j, x_k)}.
\]
Here, we have used \Cref{eqn:rkhs_prop} and $K(x, x') = \overline{K(x', x)}$.
One therefore obtains a Karhunen--Loève basis by Gram--Schmidt orthogonalizing $K(\cdot, x_j)$ at $\c{O}(N^3)$ cost.

\section{Computational Techniques for Heat and Matérn Kernels} \label{sec:heat_matern}

In the preceding sections, we obtained Lie-theoretic and representation-theoretic expressions for stationary kernels and Gaussian processes under group action.
In these expressions, only the scaling factors $a^{(\lambda)}$ and $\m{A}^{(\lambda)}$, respectively, depend on the specific kernel.
We therefore study how to compute them for the \emph{Matérn} and \emph{heat} kernel classes: our approach will be to define heat kernels first, and extend to the Matérn class using connections between them.
To proceed, we need to turn our spaces into Riemannian manifolds by defining a metric tensor, for which we make the canonical choice.

\begin{Assumption}
$X$ is equipped with the metric tensor induced by its Killing form.
\end{Assumption}

If $X=G$ is a Lie group, this metric is automatically bi-invariant under the action of $G$ on itself from both sides. 
If $X = G/H$ is a homogeneous space, this metric is the quotient metric obtained from $G$, and is invariant under the canonical action of $G$ on~$G/H$.

\subsection{Squared Exponential Kernels as Solutions of the Heat Equation} \label{sec:heat_matern:heat}

Before defining the general Riemannian heat kernel, consider first its Euclidean analog.
In this setting, it is widely known as the squared exponential, or Gaussian, or RBF kernel.
We denote it by $k_{\infty,\kappa,\sigma^2}$.
It has two positive numerical parameters $\kappa$ and $\sigma^2$, representing its amplitude (variance) and length scale, respectively.
For $\v{x}, \v{x}' \in \R^n$ this kernel is
\[
k_{\infty, \kappa, \sigma^2}(\v{x}, \v{x}')
=
\sigma^2
e^{- \frac{\norm[0]{\v{x}-\v{x}'}^2}{2 \kappa^2}}
.
\]
Observe that if we let $\kappa = \sqrt{2 t}$ and $\sigma^2 = \frac{1}{(4 \pi t)^{n/2}}$ then, defining 
\[
\c{P}(t,\v{x},\v{x}') = k_{\infty, \sqrt{2 t}, (4 \pi t)^{-n/2}}(\v{x}, \v{x}')
\]
we can express this kernel as the fundamental solution of the heat equation
\[ \label{eqn:heat_defn}
\frac{\partial \c{P}}{\partial t}(t, \v{x}, \v{x}')
&=
\Delta_{\v{x}} \c{P}(t, \v{x}, \v{x}')
&
\c{P}(0,\v{x},\v{x}') &= \delta(\v{x} - \v{x}')
\]
where $\Delta_{\v{x}}$ is the Laplacian acting on the $\v{x}$-argument, and the initial condition $\delta(\v{x} - \v{x}') \in \c{S}'(X)$ is the Dirac delta function, understood as a vector in the Schwartz space~$\c{S}'(X)$ of tempered distributions.
This gives the general way of defining general analogs of squared exponential kernels, which we call heat kernels, and will be our starting point.

The heat equation generalizes to a Riemannian manifold $X$ under mild regularity conditions, so long as $\Delta$ is replaced with the Laplace--Beltrami operator.
The Riemannian heat kernel $\c{P}$ can now be defined as the fundamental solution of the Riemannian heat equation. We also define its squared-exponential-like reparametrization by
\[ \label{eqn:sq_exp_heat}
&
k_{\infty, \kappa, \sigma^2}(x, x')
=
\frac{\sigma^2}{C_{\kappa}} \c{P}(\kappa^2/2, x, x')
\]
where $C_{\kappa}$ is a constant chosen depending on how one wishes to normalize the kernel's amplitude.
For spaces where $\c{P}(\kappa^2/2, x, x)$ does not depend on $x$, such as homogeneous spaces, a natural choice is $C_{\kappa} = \c{P}(\kappa^2/2, x, x)$.
For other spaces, if $X$ is compact, one can choose $C_{\kappa} = \int_X \c{P}(\kappa^2/2, x, x) \d\!\vol(\v{x})$.
By general theory \cite{grigoryan2009}, these functions are positive definite kernels.
At this stage, however, it is not clear whether or not they are stationary, and if so, what the corresponding scaling coefficients $a^{(\lambda)}$ or scaling matrices $\m{A}^{(\lambda)}$ are.
To understand this, we specialize to the settings at hand.

\subsubsection{Compact Lie Groups}

To understand how to express solutions of the heat equation in a more explicit manner, we begin by connecting the representation-theoretic notions introduced previously with spectral properties of the Laplace--Beltrami operator~$\Delta$.

\begin{restatable}{result}{ThmLaplaceEigenfunctionsGroup} \label{thm:laplace_eigenfunctions_group}
For every $\lambda$, the matrix coefficients $\pi^{(\lambda)}_{jl}$ are eigenfunctions of $-\Delta$ corresponding to the same eigenvalue $\alpha_\lambda \geq 0$.
Moreover, the eigenvalues $\alpha_\lambda$ can be written\footnote{This expression is sometimes called \emph{Freudenthal's formula for the
eigenvalues}.}
\[ \label{eqn:laplace_eigenvalues_lie}
\alpha_\lambda =  \norm{\rho}^2_B - \norm{w+\rho}^2_B
\]
where $w$ is the highest weight of the representation with signature $\lambda$, $\norm{g}^2_B = -B(g, g)$, $B$ is the Killing form of $G$, and $\rho$ is the half-sum of positive roots of $G$---see \Cref{appdx:trav_repr}.
\end{restatable}

\begin{proof}
The first part of the claim is well-known: we sketch a simple but instructive proof in~\Cref{appdx:proofs}.
The second part is given in \textcite[Theorem~10.6]{fegan1991}.
\end{proof}

For technical details on these quantities, see \Cref{appdx:trav_repr}.
With this connection established, one can calculate an explicit expression for the heat kernel on $G$ in the sense of the heat equation using the representation-theoretic notions at hand.

\begin{proposition}
The heat kernel is stationary, and is given by
\[ \label{eqn:heat_lie}
k_{\infty, \sqrt{2 t}, (4 \pi t)^{-n/2}}(g_1, g_2) = \c{P}(t, g_1, g_2)
=
\sum_{\lambda \in \Lambda}
e^{-\alpha_{\lambda} t} d_{\lambda} \chi_{\lambda}(g_2^{-1} \bdot g_1).
\]
\end{proposition}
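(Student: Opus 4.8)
The plan is to identify $\c{P}(t,x,y)$ with the standard spectral (Mercer-type) expansion of the heat kernel on a compact Riemannian manifold, and then collapse the resulting sum of matrix coefficients into characters. First I would recall the general fact \cite{grigoryan2009} that on a compact manifold $-\Delta$ has discrete spectrum with an orthonormal eigenbasis $\cbr{\psi_n}$ of $L^2$, and that the fundamental solution of the heat equation \eqref{eqn:heat_defn} is given by $\c{P}(t,x,y) = \sum_n e^{-\beta_n t}\psi_n(x)\overline{\psi_n(y)}$, where $\beta_n \geq 0$ is the eigenvalue attached to $\psi_n$.

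By the Peter--Weyl theorem the functions $\sqrt{d_\lambda}\,\pi^{(\lambda)}_{jk}$, ranging over $\lambda\in\Lambda$ and $1\le j,k\le d_\lambda$, form exactly such an orthonormal basis, and by \Cref{thm:laplace_eigenfunctions_group} each of them is an eigenfunction of $-\Delta$ with the \emph{same} eigenvalue $\alpha_\lambda$. Substituting this basis into the spectral expansion yields
\[
\c{P}(t,x,y) = \sum_{\lambda\in\Lambda} e^{-\alpha_\lambda t}\, d_\lambda \sum_{j,k=1}^{d_\lambda} \pi^{(\lambda)}_{jk}(x)\,\overline{\pi^{(\lambda)}_{jk}(y)}.
\]
The inner double sum is precisely the character identity $\sum_{j,k}\pi^{(\lambda)}_{jk}(x)\overline{\pi^{(\lambda)}_{jk}(y)} = \chi^{(\lambda)}(y^{-1}\bdot x)$ established in \Cref{sec:stationary:homogeneous}, which delivers the claimed formula \eqref{eqn:heat_lie}. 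For stationarity I would argue that the kernel is \emph{bi-invariant} under $(g_1,g_2)\lacts x = g_1\bdot x\bdot g_2^{-1}$: the character argument transforms as $(g_1\bdot x'\bdot g_2^{-1})^{-1}\bdot(g_1\bdot x\bdot g_2^{-1}) = g_2\bdot(x'^{-1}\bdot x)\bdot g_2^{-1}$, a conjugate of $x'^{-1}\bdot x$, so each summand is unchanged because characters are conjugation-invariant, placing $k$ in the form of \Cref{thm:stationary_group}. Real-valuedness follows by pairing each $\lambda$ with its conjugate representation $\lambda'$, for which $\chi^{(\lambda')} = \overline{\chi^{(\lambda)}}$, $d_{\lambda'}=d_\lambda$, and $\alpha_{\lambda'}=\alpha_\lambda$; conjugate pairs combine into $2e^{-\alpha_\lambda t}d_\lambda\Re\chi^{(\lambda)}$ and self-conjugate terms are already real, which simultaneously reconciles \eqref{eqn:heat_lie} with the $\Re\chi^{(\lambda)}$ form and identifies $a^{(\lambda)} = e^{-\alpha_\lambda t}d_\lambda$.

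I expect the only genuinely delicate point to be justifying the spectral expansion and its convergence rather than the algebra. The series converges absolutely and uniformly for every $t>0$ because $\abs{\chi^{(\lambda)}(g)}\le d_\lambda$ (a sum of $d_\lambda$ unit-modulus diagonal matrix coefficients of a unitary representation), so each term is dominated by $e^{-\alpha_\lambda t}d_\lambda^2$, and $\sum_\lambda e^{-\alpha_\lambda t}d_\lambda^2<\infty$ is exactly finiteness of the heat trace $\int_G \c{P}(t,x,x)\,\d\mu_G(x)$, guaranteed by Weyl's law (polynomial growth of eigenvalue multiplicities against exponential decay); differentiating termwise introduces only further polynomial factors, so the limit is smooth and solves $\partial_t\c{P} = \Delta_x\c{P}$. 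One must also check the Dirac initial condition as $t\to 0^+$, which is immediate from Peter--Weyl since the partial sums act as $L^2$-orthogonal projections converging to the identity, whence convolution against $\c{P}(t,\.,y)$ tends to evaluation at $y$. Uniqueness of the fundamental solution then forces equality with $\c{P}$.
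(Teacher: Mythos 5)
Your proof is correct and follows essentially the same route as the paper, whose entire argument is the one-line instruction to expand both sides of the heat equation in the Peter--Weyl basis $\{\sqrt{d_\lambda}\,\pi^{(\lambda)}_{jk}\}$; you simply carry that instruction out, collapsing the matrix coefficients into characters via $\sum_{j,k}\pi^{(\lambda)}_{jk}(x)\overline{\pi^{(\lambda)}_{jk}(y)}=\chi^{(\lambda)}(y^{-1}\bdot x)$ and invoking the eigenvalue result for the $\alpha_\lambda$. The additional care you take with convergence, the Dirac initial condition, real-valuedness via conjugate pairs, and conjugation-invariance for stationarity goes beyond what the paper records but is all sound.
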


\begin{proof}
Write out both sides of the heat equation in the basis $\cbr[1]{\sqrt{d_{\lambda}}\pi^{(\lambda)}_{j k}}_{\lambda, j, k}$.
\end{proof}

Therefore, to evaluate heat kernels pointwise, and to sample from the corresponding Gaussian processes, it suffices to apply the machinery of \Cref{sec:computation}.
In particular, note that $\lambda \|> e^{-\alpha_\lambda t}$ is monotone---we will see this holds for all concrete kernels that we consider. 
Therefore, to choose the set of truncated signatures $\tl\Lambda$, we select the signatures corresponding to the smallest eigenvalues, and apply \Cref{thm:laplace_eigenfunctions_group}. 

\subsubsection{Homogeneous Spaces of Compact Lie Groups}

\begin{figure}
\begin{subfigure}{0.32\textwidth}
\includegraphics[scale=0.25]{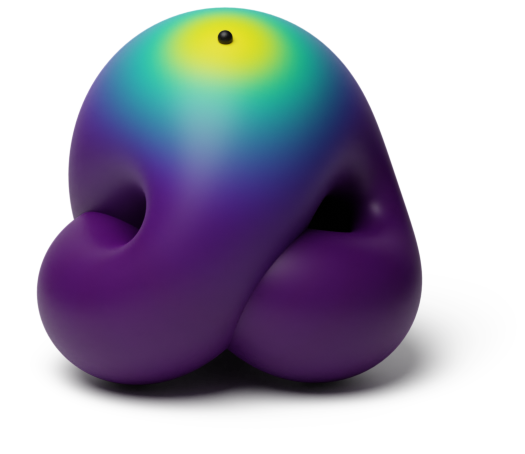}
\caption{$\kappa = 0.1$}
\end{subfigure}
\begin{subfigure}{0.32\textwidth}
\includegraphics[scale=0.25]{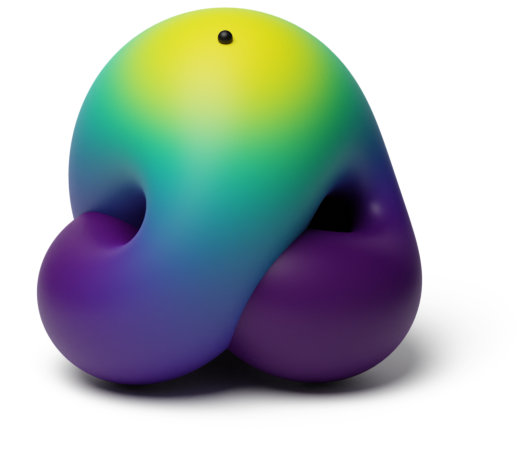}
\caption{$\kappa = 0.25$}
\end{subfigure}
\begin{subfigure}{0.32\textwidth}
\includegraphics[scale=0.25]{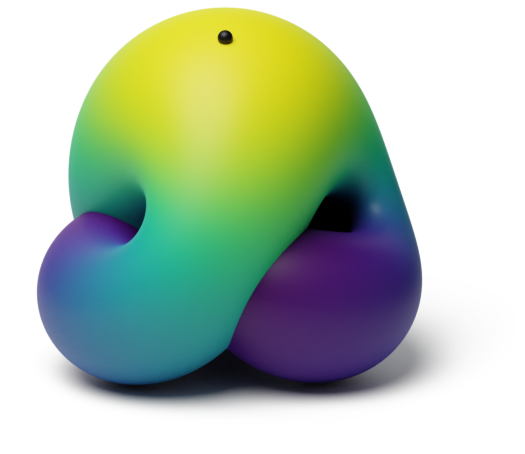}
\caption{$\kappa = 0.5$}
\end{subfigure}
\caption{Values of heat kernels on a projective plane $\RP_2$, with various length scales.}
\label{fig:heat-kernel}
\end{figure}

We again begin by connecting representation-theoretic notions with spectral theory of $\Delta$.

\begin{restatable}{result}{ThmLaplaceEigenfunctionsHomogeneous} \label{thm:laplace_eigenfunctions_hom}
The spherical functions $\pi^{(\lambda)}_{jk}$ on a compact homogeneous space $G/H$ are the eigenfunctions of $-\Delta$.
Moreover, the eigenvalue corresponding to every spherical function~$\pi^{(\lambda)}_{jk}$, which by definition is also a matrix coefficient of $G$, is given in \Cref{eqn:laplace_eigenvalues_lie}.
\end{restatable}

\begin{proof}
\Cref{appdx:proofs}.
\end{proof}

We can use this to compute explicit expressions for the heat kernel on~$G/H$.

\begin{proposition}
The heat kernel is stationary, and is given by
\[ \label{eqn:heat_homogeneous}
k_{\infty, \sqrt{2 t}, (4 \pi t)^{-n/2}}(g_1 \bdot H, g_2 \bdot H) = \c{P}(t, g_1 \bdot H, g_2 \bdot H)
=
\sum_{\lambda \in \Lambda}
e^{-\alpha_{\lambda} t} d_{\lambda}
\sum_{j=1}^{r_\lambda}
\pi^{(\lambda)}_{j j}(g_2^{-1} \bdot g_1).
\]
\end{proposition}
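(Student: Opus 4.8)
The plan is to mirror the compact group case, with the extra work of passing to cosets concentrated in a single addition-theorem step. The argument splits into establishing stationarity and then identifying the coefficients.

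First I would dispatch stationarity abstractly. Under our standing assumption that $G/H$ carries the quotient metric induced by the Killing form, $G$ acts on $G/H$ by isometries, so every $g \in G$ commutes with the Laplace--Beltrami operator $\Delta$. Consequently the heat semigroup $e^{t\Delta}$, and hence its Schwartz kernel $\c{P}(t, \cdot, \cdot)$, is invariant under the diagonal $G$-action, meaning $\c{P}(t, g \lacts x, g \lacts y) = \c{P}(t, x, y)$. This is exactly stationarity, so by \Cref{thm:stationary_homogeneous} the kernel necessarily takes the form \eqref{eqn:stationary_homogeneous:kernel}, and all that remains is to compute the coefficient matrices.

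Next I would pin down the coefficients via the spectral decomposition of the heat kernel. By \Cref{thm:laplace_eigenfunctions_hom}, the spherical functions $\pi^{(\lambda)}_{jk}$ are eigenfunctions of $-\Delta$ with the eigenvalue $\alpha_\lambda$ of \eqref{eqn:laplace_eigenvalues_lie}, and by the Peter--Weyl construction preceding \Cref{thm:stationary_homogeneous} the normalized spherical functions $\sqrt{d_\lambda}\,\pi^{(\lambda)}_{jk}$ form an orthonormal basis of $L^2(G/H)$. Expanding both sides of the Riemannian heat equation \eqref{eqn:heat_defn} in this basis---precisely as in the compact group case---forces the fundamental solution to be $\c{P}(t, x, y) = \sum_\lambda e^{-\alpha_\lambda t} d_\lambda \sum \pi^{(\lambda)}_{jk}(x)\overline{\pi^{(\lambda)}_{jk}(y)}$, with the inner sum running over the spherical-function index range. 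The crux is then an addition theorem collapsing these products into zonal spherical functions evaluated at $g_2^{-1} \bdot g_1$: using the composition law $\pi^{(\lambda)}_{jk}(g \bdot g') = \sum_m \pi^{(\lambda)}_{mk}(g)\pi^{(\lambda)}_{jm}(g')$ together with unitarity $\overline{\pi^{(\lambda)}_{jk}(y)} = \pi^{(\lambda)}_{kj}(y^{-1})$, summation over the free matrix-coefficient index telescopes each product into a single matrix coefficient of $g_2^{-1}\bdot g_1$, while the restriction to the invariant indices $1 \le j,k \le r_\lambda$ is enforced by the fact that $\int_H \pi^{(\lambda)}(h)\,\d\mu_H(h)$ is the orthogonal projection onto the $r_\lambda$-dimensional $H$-invariant subspace. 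Equivalently, and perhaps more transparently, one can bypass the spectral bookkeeping entirely: since the coefficient matrices are identity multiples $\m{A}^{(\lambda)} = e^{-\alpha_\lambda t} d_\lambda \m{I}$, \Cref{thm:periodic_summation} applies and the heat kernel on $G/H$ equals the $H$-average $\int_H k_G(g_1 \bdot h, g_2)\,\d\mu_H(h)$ of the group heat kernel \eqref{eqn:heat_lie}; writing $\chi^{(\lambda)} = \sum_j \pi^{(\lambda)}_{jj}$ and integrating via the same projection identity reproduces the zonal spherical functions directly.

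The main obstacle is precisely this last step of index bookkeeping: tracking which matrix-coefficient index is summed, the ordering $g_2^{-1}\bdot g_1$ versus $g_1 \bdot g_2^{-1}$, and how the $H$-invariant projection cuts the collapsed indices down to the zonal range. This is the representation-theoretic generalization of the classical spherical-harmonic addition theorem, and is where essentially all the substantive work lies; for symmetric spaces it trivializes, since $r_\lambda \le 1$ leaves a single zonal spherical function per signature and the identity reduces to the familiar one.
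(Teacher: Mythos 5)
Your proposal is correct and takes essentially the same route as the paper, whose entire proof is the instruction to expand both sides of the heat equation in the basis of spherical functions; your added stationarity argument via isometry-invariance and the alternative route through \Cref{thm:periodic_summation} are both consistent with the paper's framework. One small remark on the index bookkeeping you defer: carrying out the collapse $\sum_{j=1}^{d_\lambda}\pi^{(\lambda)}_{jk}(x)\overline{\pi^{(\lambda)}_{jk'}(y)}=\pi^{(\lambda)}_{k'k}(y^{-1}\bdot x)$ shows the surviving sum is the diagonal one $\sum_{k=1}^{r_\lambda}\pi^{(\lambda)}_{kk}(y^{-1}\bdot x)$ (i.e.\ $\m{A}^{(\lambda)}=e^{-\alpha_\lambda t}d_\lambda\m{I}$), consistent with the paper's proof of \Cref{thm:periodic_summation}.
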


\begin{proof}
Write out both sides of the heat equation in the basis of spherical functions.
\end{proof}

In homogeneous spaces, too, it therefore suffices to apply the machinery of \Cref{sec:computation}.
We illustrate heat kernels computed in this manner in \Cref{fig:heat-kernel}.

\subsection{Matérn Kernels} \label{sec:heat_matern:matern}

We now use the preceding notions to introduce Matérn kernels.
On~$\R^n$, these constitute \cite{rasmussen2006,stein1999} a widely used three parameter family
\[
k_{\nu, \kappa, \sigma^2}(\v{x},\v{x}')
=
\sigma^2
\frac{2^{1-\nu}}{\Gamma(\nu)}
\del{
  \sqrt{2\nu}
  \frac{\norm{\v{x}-\v{x}'}}{\kappa}
}^\nu
K_\nu\del{
  \sqrt{2\nu}
  \frac{\norm{\v{x}-\v{x}'}}{\kappa}
}
\]
with parameters $\sigma^2 > 0$, $\kappa > 0$, and $\nu > 0$ being the variance, length scale, and smoothness, respectively.
Here, $K_\nu$ is the modified Bessel function of the second kind \cite{gradshteyn2014}.
We are interested in defining them in settings beyond $\R^n$.

We approach the problem by connecting Matérn kernels with squared exponential kernels, and leveraging the preceding results.
Still on $\R^n$, as $\nu\->\infty$, we have 
\[
\lim_{\nu\->\infty} k_{\nu, \kappa, \sigma^2}(\v{x},\v{x}') = k_{\infty, \kappa, \sigma^2}(\v{x},\v{x}')
\]
 but this is not the only connection between these families of kernels.
One can show
\[ \label{eqn:matern_integral_formula_1}
k_{\nu, \kappa, \sigma^2}(\v{x}, \v{x}')
&=
\frac{(2\nu)^{\nu}}{\Gamma(\nu)\kappa^{2\nu}}
\int_0^{\infty}
u^{\nu - 1}
e^{-\frac{2 \nu}{\kappa^2} u}
k_{\infty, \sqrt{2 u}, \sigma^2}(\v{x}, \v{x}')
\d u
\\ \label{eqn:matern_integral_formula_2}
&=
\sigma^2 \frac{(2\nu)^{\nu} (4 \pi)^{n/2}}{\Gamma(\nu)\kappa^{2\nu}}
\int_0^{\infty}
u^{\nu - 1 + n/2}
e^{-\frac{2 \nu}{\kappa^2} u}
\c{P}(u, \v{x}, \v{x}')
\d u
.
\]
This can be understood by observing that the spectral measure of the Matérn kernel, which is a $\f{T}$ distribution, can be written as a gamma mixture of Gaussians, which are the spectral measure of the squared exponential kernel.
\Cref{eqn:matern_integral_formula_1,eqn:matern_integral_formula_2} are the kernel analog of this: \Cref{appdx:proofs} gives a proof.
More generally, we adopt this expression---after dropping inessential constants and renormalizing---as the \emph{definition} of a Matérn kernel.

\begin{definition} \label{dfn:matern}
Let $X$ be a $n$-dimensional compact Lie group or a homogeneous space of a compact Lie group, and let $\c{P}(u, x, x')$ be the heat kernel.
Define the \emph{Matérn kernel}~by
\[ \label{eqn:matern_dfn}
k_{\nu, \kappa, \sigma^2}(x, x')
&=
\frac{\sigma^2}{C_{\nu, \kappa}}
\int_0^{\infty}
u^{\nu - 1 + n/2}
e^{-\frac{2 \nu}{\kappa^2} u}
\c{P}(u, x, x')
\d u
\]
where $C_{\nu, \kappa}$ is chosen so that $k_{\nu, \kappa, \sigma^2}(x, x) = \sigma^2$.
\end{definition}

It is easy to check that such kernels are positive (semi)definite whenever the respective squared exponential kernels are positive (semi)definite, regardless of the underlying space~$X$---see \Cref{appdx:proofs}.
The main issue at this stage is that working with Matérn kernels ostensibly requires one to evaluate an integral.
We therefore study the expression for the spaces at hand to understand simplifications, given as follows.

\begin{figure}
\begin{subfigure}{0.32\textwidth}
\includegraphics[scale=0.25]{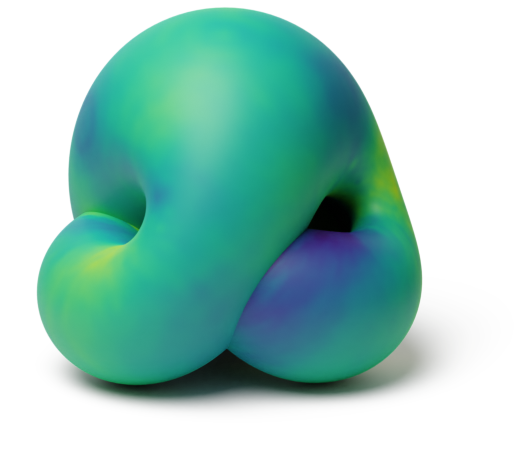}
\caption{$\nu =  0.5$}
\end{subfigure}
\begin{subfigure}{0.32\textwidth}
\includegraphics[scale=0.25]{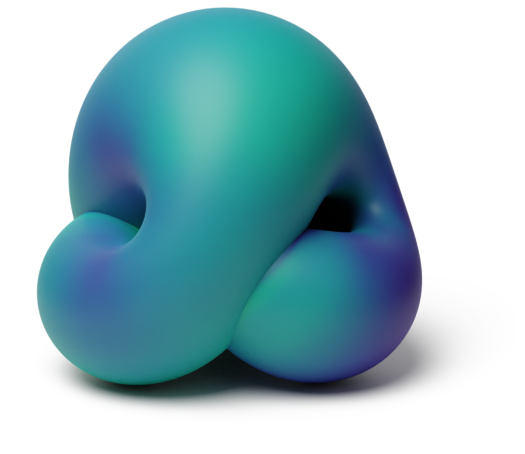}
\caption{$\nu = 1.5$}
\end{subfigure}
\begin{subfigure}{0.32\textwidth}
\includegraphics[scale=0.25]{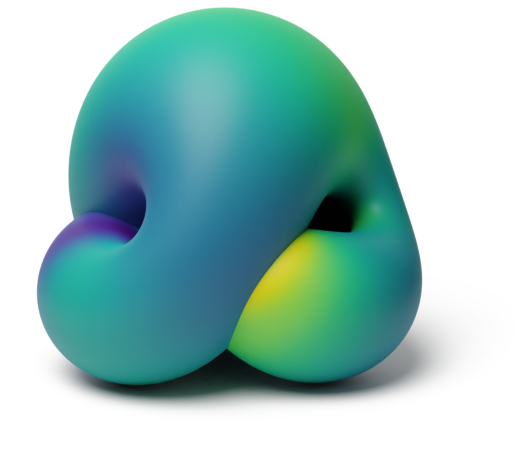}
\caption{$\nu = 2.5$}
\end{subfigure}
\caption{Samples from Matérn Gaussian processes on $\RP_2$, with varying smoothness.}
\label{fig:matern-samples}
\end{figure}

\begin{restatable}{proposition}{ThmMaternLie} \label{thm:matern_lie}
The Matérn kernel on a compact Lie group $G$ is given by
\[ \label{eqn:matern_lie}
k_{\nu, \kappa, \sigma^2}(g_1, g_2)
=
\frac{\sigma^2}{C_{\nu, \kappa}'}
\sum_{\lambda \in \Lambda}
\del{\frac{2 \nu}{\kappa^2} + \alpha_{\lambda}}^{-\nu-n/2}
\,\,
d_{\lambda}
\chi_{\lambda}(g_2^{-1} \bdot g_1)
\]
where $C_{\nu, \kappa}'$ is a normalizing constant which ensures that $k_{\nu, \kappa, \sigma^2}(g, g) = \sigma^2$.
\end{restatable}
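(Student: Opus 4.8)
The plan is to insert the spectral expansion of the heat kernel from~\eqref{eqn:heat_lie} into the integral definition~\eqref{eqn:matern_dfn}, interchange the sum over $\Lambda$ with the integral over $u$, and evaluate each resulting one-dimensional integral in closed form. Writing $\c{P}(u,x,y) = \sum_{\lambda \in \Lambda} e^{-\alpha_\lambda u} d_\lambda \chi_\lambda(y^{-1} \bdot x)$, where $\alpha_\lambda$ are the Laplace--Beltrami eigenvalues of Result~\ref{thm:laplace_eigenfunctions_group} (these are the quantities written as $a_\lambda$ in the statement), the definition becomes, after exchanging the order of summation and integration,
\[
k_{\nu,\kappa,\sigma^2}(x,y)
=
\frac{\sigma^2}{C_{\nu,\kappa}}
\sum_{\lambda \in \Lambda}
d_\lambda \chi_\lambda(y^{-1} \bdot x)
\int_0^\infty
u^{\nu - 1 + n/2}
e^{-\del{\frac{2\nu}{\kappa^2} + \alpha_\lambda} u}
\d u.
\]

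Each of these integrals is the standard Gamma integral $\int_0^\infty u^{s-1} e^{-bu} \d u = \Gamma(s)\, b^{-s}$, valid for $s,b>0$, here with $s = \nu + n/2$ and $b = \frac{2\nu}{\kappa^2} + \alpha_\lambda$, both positive since $\nu,\kappa > 0$ and $\alpha_\lambda \geq 0$. This produces the factor $\Gamma(\nu + n/2)\del{\frac{2\nu}{\kappa^2} + \alpha_\lambda}^{-\nu - n/2}$, which is exactly the $\lambda$-dependent coefficient appearing in~\eqref{eqn:matern_lie}. The prefactor $\Gamma(\nu + n/2)$ does not depend on $\lambda$, so it folds into the normalizing constant: defining $C'_{\nu,\kappa} = C_{\nu,\kappa} / \Gamma(\nu + n/2)$ and fixing its value by the requirement $k_{\nu,\kappa,\sigma^2}(x,x) = \sigma^2$ yields the claimed expression.

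The step requiring genuine justification---and the one I expect to be the main obstacle---is the interchange of summation and integration. I would establish it via absolute convergence and Tonelli's theorem. Using the elementary character bound $\left|\chi_\lambda(g)\right| \leq \chi_\lambda(e) = d_\lambda$, the general term is dominated in absolute value by $d_\lambda^2\, u^{\nu - 1 + n/2} e^{-(\frac{2\nu}{\kappa^2} + \alpha_\lambda)u}$, whose integral over $u$ equals $d_\lambda^2 \Gamma(\nu + n/2)\del{\frac{2\nu}{\kappa^2} + \alpha_\lambda}^{-\nu - n/2}$. It therefore suffices to verify that
\[
\sum_{\lambda \in \Lambda}
d_\lambda^2
\del{\frac{2\nu}{\kappa^2} + \alpha_\lambda}^{-\nu - n/2}
< \infty.
\]
This sum is a shifted spectral zeta function of $-\Delta$ on $G$. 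By Peter--Weyl, each signature $\lambda$ contributes a $d_\lambda^2$-dimensional eigenspace with eigenvalue $\alpha_\lambda$, so by Weyl's law the eigenvalue counting function (with these multiplicities) grows like $\alpha^{n/2}$ on the $n$-dimensional manifold $G$. Consequently the sum converges precisely because the exponent $\nu + n/2$ strictly exceeds $n/2$ whenever $\nu > 0$, while the strictly positive shift $\frac{2\nu}{\kappa^2}$ controls the zero eigenvalue of the trivial representation. With absolute convergence in hand, Tonelli justifies the interchange for the dominating nonnegative integrand, and Fubini then applies to the complex-valued terms, completing the argument.
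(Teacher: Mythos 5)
Your proof follows the same route as the paper's: substitute the spectral expansion \eqref{eqn:heat_lie} into \eqref{eqn:matern_dfn}, swap sum and integral, evaluate the Gamma integral $\int_0^\infty u^{s-1}e^{-bu}\,\d u = \Gamma(s)b^{-s}$, and absorb $\Gamma(\nu+n/2)$ into the normalizing constant. The only difference is that you explicitly justify the interchange via the character bound $|\chi_\lambda(g)|\leq d_\lambda$, Weyl's law, and Tonelli--Fubini, a step the paper performs without comment; your justification is correct and strengthens the argument.
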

\begin{proof}
\Cref{appdx:proofs}.
\end{proof}

\begin{proposition}
The Matérn kernel on a homogeneous space $G/H$ is given by
\[ \label{eqn:matern_homogeneous}
k_{\nu, \kappa, \sigma^2}(g_1 \bdot H, g_2 \bdot H)
=
\frac{\sigma^2}{C_{\nu, \kappa}'}
\sum_{\lambda \in \Lambda}
\sum_{j=1}^{r_\lambda}
\del{\frac{2 \nu}{\kappa^2} + a_{\lambda}}^{-\nu-n/2}
\,\,
d_{\lambda}
\pi^{(\lambda)}_{j j}(g_2^{-1} \bdot g_1).
\]
where $C_{\nu, \kappa}'$ is a normalizing constant which ensures that $k_{\nu, \kappa, \sigma^2}(x, x) = \sigma^2$.
\end{proposition}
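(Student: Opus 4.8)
The plan is to follow the proof of \Cref{thm:matern_lie} essentially verbatim, replacing the character expansion of the heat kernel on $G$ with its spherical-function analog \eqref{eqn:heat_homogeneous} on $G/H$. I would begin from the integral definition of the Matérn kernel in \Cref{dfn:matern},
\[
k_{\nu, \kappa, \sigma^2}(x \bdot H, y \bdot H)
=
\frac{\sigma^2}{C_{\nu, \kappa}}
\int_0^{\infty}
u^{\nu - 1 + n/2}
e^{-\frac{2 \nu}{\kappa^2} u}
\c{P}(u, x \bdot H, y \bdot H)
\, \d u,
\]
and substitute the spherical-function expansion \eqref{eqn:heat_homogeneous} of the heat kernel, where the exponents $\alpha_\lambda$ are exactly the Laplace--Beltrami eigenvalues $a_\lambda$ identified via \Cref{thm:laplace_eigenfunctions_hom} and \eqref{eqn:laplace_eigenvalues_lie}. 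Granting the interchange of the sum over $\Lambda$ with the integral over $u$, each summand contributes the scalar factor $d_\lambda \sum_{j,k=1}^{r_\lambda}\pi^{(\lambda)}_{jk}(y^{-1}\bdot x)$ times a single gamma integral in $u$.

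Second, I would evaluate that integral as
\[
\int_0^{\infty} u^{\nu - 1 + n/2} e^{-\del{\frac{2\nu}{\kappa^2} + a_\lambda} u} \, \d u
=
\Gamma\del{\nu + \tfrac{n}{2}}
\del{\frac{2\nu}{\kappa^2} + a_\lambda}^{-\nu - n/2}.
\]
Since the factor $\Gamma(\nu + n/2)$ is independent of $\lambda$ and of $x,y$, it pulls out of the sum and folds into the normalization, giving $C_{\nu,\kappa}' = C_{\nu,\kappa}/\Gamma(\nu+n/2)$; as in \Cref{dfn:matern}, $C_{\nu,\kappa}'$ is then pinned down by $k_{\nu,\kappa,\sigma^2}(x,x)=\sigma^2$. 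Reassembling the surviving $d_\lambda$ and the double sum over $1\le j,k\le r_\lambda$ reproduces exactly the claimed \eqref{eqn:matern_homogeneous}.

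The only nontrivial step, which I expect to be the main obstacle, is licensing the interchange of summation and integration, for which I would verify absolute convergence of the resulting double sum-integral and invoke Fubini--Tonelli. Writing the inner double sum as a diagonal matrix coefficient, $\sum_{j,k=1}^{r_\lambda}\pi^{(\lambda)}_{jk}(g) = \langle \pi^{(\lambda)}(g)\,v,\,v\rangle$ with $v=\sum_{k=1}^{r_\lambda} e_k$, unitarity of $\pi^{(\lambda)}(g)$ and Cauchy--Schwarz give the sharp bound $|\sum_{j,k=1}^{r_\lambda}\pi^{(\lambda)}_{jk}(g)|\le \|v\|^2 = r_\lambda$ (the crude entrywise bound $r_\lambda^2$ is too lossy here). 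Consequently the absolute total is controlled by $\Gamma(\nu+n/2)\sum_{\lambda}d_\lambda r_\lambda\,(2\nu/\kappa^2+a_\lambda)^{-\nu-n/2}$, and this is finite because $d_\lambda r_\lambda$ is precisely the multiplicity of the eigenvalue $a_\lambda$ of $-\Delta$ on $L^2(G/H)$: by Weyl's law the counting function of these eigenvalues (with multiplicity) grows like $R^{n/2}$, so the spectral zeta series $\sum_\lambda d_\lambda r_\lambda\, a_\lambda^{-s}$ converges for every $s>n/2$, and here $s=\nu+n/2>n/2$ for all $\nu>0$. This establishes the absolute convergence needed for the interchange and completes the argument; equivalently, one may dominate $|\c{P}(u,x,y)|\le \c{P}(u,x,x)$ by positive-definiteness, whose small-time order $u^{-n/2}$ makes the integrand $O(u^{\nu-1})$ near $u=0$ and hence integrable.
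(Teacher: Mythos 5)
Your proposal is correct and follows essentially the same route as the paper, which proves this proposition exactly as it proves \Cref{thm:matern_lie}: substitute the spherical-function expansion \eqref{eqn:heat_homogeneous} of the heat kernel into \eqref{eqn:matern_dfn}, interchange sum and integral, evaluate the resulting gamma integral via \textcite[Section 3.326, Item 2]{gradshteyn2014}, and absorb $\Gamma(\nu+n/2)$ into $C_{\nu,\kappa}'$. Your additional justification of the interchange---the bound $\bigl|\sum_{j,k=1}^{r_\lambda}\pi^{(\lambda)}_{jk}(g)\bigr|\le r_\lambda$ via unitarity and the convergence of $\sum_\lambda d_\lambda r_\lambda\,a_\lambda^{-\nu-n/2}$ via Weyl's law---is sound and supplies a step the paper leaves implicit, but it does not change the structure of the argument.
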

\begin{proof}
\Cref{appdx:proofs}, via the same proof as for \Cref{thm:matern_lie}.
\end{proof}

Note that the constants $C_{\nu, \kappa}'$ in~\Cref{eqn:matern_homogeneous,eqn:matern_lie} can differ, owing to the different settings.
All these kernels can be computed using the techniques of \Cref{sec:computation}.
One can similarly derive formulas for random sampling from Matérn priors.
We illustrate such samples in \Cref{fig:matern-samples}.
We conclude by verifying that these kernels satisfy the same differentiability properties as the usual Euclidean Matérn class.

\begin{restatable}{theorem}{ThmSmoothnessComp} \label{thm:smoothness_comp}
For all $\nu > 0$, Matérn kernels on compact Lie groups and their homogeneous spaces are positive definite and lie in the Sobolev space $H^{\nu+n/2}$. 
Thus, they are continuous and possess continuous derivatives of all orders strictly less than $\nu$.
\end{restatable}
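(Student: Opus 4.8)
The plan is to diagonalize the Matérn kernel in the eigenbasis of the Laplace--Beltrami operator, read its Sobolev norm off the spectral weights, and prove convergence of the resulting series via Weyl's law; positive definiteness and the differentiability statement then follow from the sign of the weights and from Sobolev embedding, respectively.

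First I would reduce the claim to a single-argument statement. Since the Killing-form metric is $G$-invariant, $-\Delta$ commutes with the $G$-action, so the translation $k(\cdot, y) \mapsto k(\cdot, g \lacts y)$ is an isometry of every Sobolev space $H^s(X)$; as $G$ acts transitively, $\norm{k(\cdot, y)}_{H^s}$ is independent of $y$, and it suffices to bound it for one $y$. On a compact $n$-dimensional Riemannian manifold one has the spectral characterization
\[
\norm{f}_{H^s(X)}^2 \asymp \sum_q (1 + \lambda_q)^s \, |c_q|^2, \qquad c_q = \innerprod{f}{e_q}_{L^2},
\]
where $\{e_q\}$ is an $L^2(X)$-orthonormal basis of eigenfunctions of $-\Delta$ with eigenvalues $\lambda_q$. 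By \Cref{thm:laplace_eigenfunctions_group,thm:laplace_eigenfunctions_hom}, this basis is furnished by the matrix coefficients, respectively the spherical functions, each tied to the eigenvalue $\alpha_\lambda$ of its representation.

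Next I would exploit the Mercer structure of the kernel. From the expansions \eqref{eqn:matern_lie} and \eqref{eqn:matern_homogeneous}, the Matérn kernel diagonalizes in this eigenbasis: writing $k(x,y) = \sum_q s_q\, e_q(x)\,\overline{e_q(y)}$, the weight $s_q$ attached to an eigenfunction in the $\alpha$-eigenspace equals $\frac{\sigma^2}{C_{\nu,\kappa}'}\del{\tfrac{2\nu}{\kappa^2} + \alpha}^{-\nu-n/2}$ and so depends only on $\alpha$, not on $q$. Since $k(\cdot,y) = \sum_q s_q\,\overline{e_q(y)}\,e_q$, averaging the identity above over $y$ (using $\norm{e_q}_{L^2}=1$ and the $y$-independence just established) gives
\[
\norm{k(\cdot,y)}_{H^s}^2 = \sum_q (1+\lambda_q)^s s_q^2 \asymp \sum_{\alpha}(1+\alpha)^s\del{\tfrac{2\nu}{\kappa^2}+\alpha}^{-2\nu-n}\dim E_\alpha,
\]
where the last sum runs over the distinct eigenvalues $\alpha$ of $-\Delta$ and $\dim E_\alpha$ is the multiplicity. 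Setting $s = \nu + n/2$, each summand is asymptotic to $\alpha^{-\nu-n/2}\dim E_\alpha$ as $\alpha \to \infty$.

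The crux --- and the main obstacle --- is the convergence of $\sum_\alpha \alpha^{-\nu-n/2}\dim E_\alpha$, which is where the representation-theoretic multiplicities must be matched to geometry. By Weyl's law the counting function $N(\alpha) = \sum_{\alpha' \le \alpha}\dim E_{\alpha'}$ satisfies $N(\alpha) \sim C_X\,\alpha^{n/2}$, whence
\[
\sum_\alpha \alpha^{-\nu-n/2}\dim E_\alpha = \int_{0^+}^\infty \alpha^{-\nu-n/2}\d N(\alpha) \asymp \int_1^\infty \alpha^{-\nu-1}\d\alpha < \infty,
\]
the finiteness holding precisely because $\nu > 0$ (only finitely many eigenvalues lie near the origin, so there is no difficulty there). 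This shows $k(\cdot,y) \in H^{\nu + n/2}(X)$ and identifies $\nu + n/2$ as the borderline exponent, exactly as in the Euclidean Matérn class. Finally, positive definiteness is immediate since every weight $s_q > 0$ (equivalently, by the remark following \Cref{dfn:matern}, the kernel is a strictly positive mixture of positive definite heat kernels), and the differentiability claim follows from the Sobolev embedding $H^s(X) \hookrightarrow C^m(X)$, valid for $s > m + n/2$: with $s = \nu + n/2$ this holds for every integer $m < \nu$, so the kernel is continuous and has continuous derivatives of all orders strictly below $\nu$.
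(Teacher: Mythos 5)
Your proof is correct and rests on the same underlying mechanism as the paper's: diagonalize the kernel in the Laplace--Beltrami eigenbasis furnished by the matrix coefficients (resp.\ spherical functions), read off the Sobolev regularity from the decay of the Mercer weights, and finish with Sobolev embedding. The difference is one of self-containedness. The paper normalizes to $2\nu/\kappa^2=1$ and then simply cites Proposition~2 of de Vito et al.\ (2019) to identify $k_{\nu,\kappa,\sigma^2}$ as the reproducing kernel of $H^{\nu+n/2}$, whereas you carry out the computation yourself: you use invariance of the metric to make $\norm{k(\cdot,y)}_{H^s}$ independent of $y$, average out the $|e_q(y)|^2$ factors, and invoke Weyl's law to show that $\sum_q (1+\lambda_q)^{-\nu-n/2}$ converges precisely when $\nu>0$. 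That Weyl-law estimate is exactly the content hidden inside the paper's citation, so your argument is a legitimate, and arguably more informative, unpacking of the same route rather than a genuinely different one; both proofs then obtain positive definiteness from the strictly positive Mercer weights (or the heat-kernel mixture representation) and the differentiability claim from the embedding $H^{s}\hookrightarrow C^{m}$ for $s>m+n/2$. I see no gap in your version.
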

\begin{proof}
\Cref{appdx:proofs}.
\end{proof}

\subsection{Approximation Error and Computational Costs}
\label{sec:approx}

To compute the preceding kernels or efficiently sample the respective Gaussian processes, one must truncate infinite series, and potentially draw Monte Carlo samples from the respective probability measures.
These result in approximation error, which we analyze both analytically and empirically in this section.

\subsubsection{Asymptotics}
\label{sec:approx:asymptotics}

\paragraph{Truncation error on general manifolds.}
We begin with understanding truncation error, since except for very specific cases where the series can be summed analytically,\footnote{The only such case known to the authors is the circle $\bb{S}_1 = \SO(2)$, see Example 8 in \textcite{borovitskiy2020}, which adapts the argument in the supplementary material of \textcite{guinness2016}.} this approximation error is always present.
Truncation error is governed by the rate of convergence of the respective series, which depends in turn on the growth rate of the Laplace--Beltrami eigenvalues $\alpha_\lambda$, the representation degrees $d_\lambda$, as well as $r_\lambda$ for homogeneous spaces, and on the kernel's parameters.
Throughout this section we work on a compact homogeneous space $X$.
However, to get the basic asymptotics, we initially ignore the homogeneous space structure of $X$, and treat it as a general compact Riemannian manifold: then, $k_{\nu, \kappa, \sigma^2}$ can be obtained via the general manifold Fourier feature expansion.
Let
\[
\label{eqn:truncated_series}
k_{\nu, \kappa, \sigma^2}^{(J)}(x, x')
&=
\frac{\sigma^2}{C_{\nu, \kappa}}
\sum_{j=0}^{J} \Psi_{\nu, \kappa}(\lambda_j) f_j(x) f_j(x')
&
\Psi_{\nu,\kappa}(\lambda)
&=
\begin{cases}
\del{\frac{2 \nu}{\kappa^2} + \lambda}^{-\nu-n/2} & \nu < \infty
\\
e^{-\frac{\kappa^2}{2} \lambda} & \nu = \infty
\end{cases}
\]
where $(-\lambda_j, f_j)$ are Laplace--Beltrami eigenpairs, defined such that $\cbr{f_j}_{j=0}^{\infty}$ is an orthonormal basis of $L^2(X)$ and $n = \dim(X)$.
Note that $k_{\nu, \kappa, \sigma^2}^{(\infty)} =  k_{\nu, \kappa, \sigma^2}$ where the right-hand side is the Matérn or heat kernel, as discussed in~\Cref{sec:heat_matern:heat,sec:heat_matern:matern}.
For simplicity, we set $\sigma^2 = C_{\nu, \kappa}$ and omit the corresponding index.
Since the presented kernel expressions for Lie groups and homogeneous spaces can be viewed as simplifications of this general expression, which group together and combine certain eigenfunctions into characters and zonal spherical functions, respectively, one can understand truncation error by analyzing the respective tail of \Cref{eqn:truncated_series}.

Specifically, we focus on the \emph{$L^2$-truncation error} $\norm[0]{k_{\nu, \kappa} - k_{\nu, \kappa}^{(J)}}_{L^2(X\x X)}$.
To bound this, we apply Weyl's Law \cite{chavel1984}, which for a general compact Riemannian manifold of dimension $n$ says that there is a constant $C > 0$ such that 
\[ \label{eqn:weyls_law}
C^{-1} j^{2/n} \leq \alpha_j \leq C j^{2/n}
.
\]
We obtain the following.
\begin{restatable}{proposition}{ThmTruncationBounds} \label{thm:truncation_bounds}
For $J \in \N$ sufficiently large, and $\tl{C}$ depending only on $\nu$, $\kappa$, and $C$, we~have
\[
\norm[1]{k_{\nu, \kappa} - k_{\nu, \kappa}^{(J)}}_{L^2(X \x X)}
\leq
\tl{C}
\begin{cases}
J^{-2\nu/n - 1/2} & \nu < \infty, \\
J^{1/2-1/n} e^{- \frac{\kappa^2}{2 C} J^{2/n}} & \nu = \infty.
\end{cases}
\]
\end{restatable}

\begin{proof}
\Cref{appdx:proofs}.
\end{proof}

This gives a bound on the truncation error, which is seen to decrease in a manner that depends on the kernel's smoothness as quantified by parameter $\nu$, and on the manifold's geometry through the growth rate of $\lambda_j$---in this simple case, determined by the dimension $n$ and constant $C$ only.
It is therefore clear that manifold Fourier feature truncation error decreases polynomially for Matérn kernels and exponentially for the heat kernel.

\paragraph{Truncation error on Lie groups.}
On a Lie group $X = G$, one can think of the character-based kernel expression as a simplification of the general manifold expression, where certain terms corresponding to repeated eigenvalues are combined.
Recall the expression 
\[ 
\label{eqn:character_series}
k^{(L)}_{\nu, \kappa}(g_1, g_2)
=
\sum_{\lambda \in \tl\Lambda}
a^{(\lambda)}
\,
\Re\chi^{(\lambda)}(g_2^{-1} \bdot g_1)
\]
of \Cref{sec:computation}, where $\tl\Lambda$ is the truncated set of $L$ irreducible representations, constructed according to \Cref{sec:computation:char}.
Note that each summand in this expression corresponds to $d_\lambda^2$ eigenfunctions in the manifold Fourier feature expansion.
This means that this series converges faster than manifold Fourier features do, with the precise speedup determined by the behavior of $d_\lambda$ for the group of interest.

To get an estimate of $d_\lambda$, one can apply the results of \textcite{augarten2020}, obtaining that the number of irreducible representations of dimension at most $N$ is at most $(r!)^3 N^{r/R}$, where $r$ is the rank of $G$ and $R$ is the number of positive roots of $G$, which grows quadratically with $r$. 
This reveals the Lie group's \emph{rank} to be the primary factor influencing the degree of speedup: the larger it is, the bigger $d_{\lambda}$ will be.
In general, however, understanding how geometric factors affect the precise interplay between the growth of $a_{\lambda}$ and $d_{\lambda}$ is non-trivial, and we therefore examine it empirically in the sequel.

\paragraph{Truncation error on homogeneous spaces.}

On a homogeneous space $X = G/H$ the situation is similar: recall that the zonal spherical functions play the role of characters, and
\[
k_{\nu, \kappa}^{(L)}(g_1 \bdot H, g_2 \bdot H)
=
\sum_{\lambda \in \tl\Lambda}
a^{(\lambda)}
\,
\sum_{j = 1}^{r_\lambda}
\Re \pi^{(\lambda)}_{j j}(g_2^{-1} \bdot g_1)
\]
where $\tl\Lambda$ is as above and $\pi^{(\lambda)}_{j j}$ are zonal spherical functions.
Each inner sum of $r_{\lambda}$ terms corresponds to $r_{\lambda} d_{\lambda}$ eigenfunctions, which up to a constant coincide with spherical functions.
Here, the rate of speedup is determined by the behavior of $d_\lambda$ and $r_\lambda$ for the particular homogeneous space under study.

\paragraph{Other asymptotics.}
The generalized periodic summation of \Cref{sec:computation:zonal} and generalized random phase Fourier feature estimators of \Cref{sec:computation:comp_pointwise,sec:computation:comp_sampling} are Monte Carlo estimators.
Therefore, they converge at the usual $\sqrt{S}$ rate, in the Monte Carlo sense, where $S$ is the number of samples.
The constants are governed by the interplay between the geometry of the space under consideration and properties of the kernel.
To understand how these settings perform and get a more quantitative view of truncation, we now turn to empirical~evaluation.

\subsubsection{Empirical Evaluation}
\label{sec:approx:empirical}

\begin{figure}[t]
\includegraphics{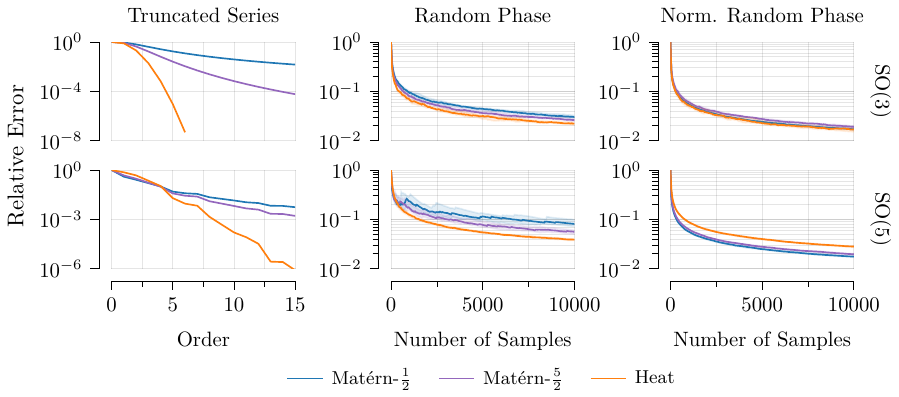}
\caption{Here we compute the approximation error for the truncated character-based series expression, and random phase approximation, as a function of the truncation order, and number of Monte Carlo samples, respectively.
For the random phase variants, we plot the median, as well as $25\%$ and $75\%$ quantiles, computed using $20$ different random seeds.
We examine relative error, which is equivalent to normalizing each curve so that it equals one at the origin.
We see that, for the character-based expansion, approximation error decreases rapidly with the truncation order, especially for the smoother heat kernel.
Note that the rate also depends on the length scale.
Random-phase approximations converge at a slower rate.
Normalized random phase approximations outperform their non-normalized~variants.
}
\label{fig:approx-error-rffs}
\end{figure}

We now evaluate the error in all of the proposed kernel approximations---based on truncation, periodic summation, and generalized random phase Fourier features---empirically.
For this, we selected a total of two Lie groups, namely the special orthogonal groups $\SO(3)$ and $\SO(5)$, and three homogeneous spaces, namely the Stiefel manifolds $\f{V}(1, 5)$, $\f{V}(2, 5)$, and $\f{V}(3, 5)$.
We consider three kernels of varying smoothness, namely a Matérn-$\frac{1}{2}$ kernel corresponding to a non-differentiable Gaussian process, a Matérn-$\frac{5}{2}$ kernel corresponding to a twice-differentiable Gaussian process, and heat kernel corresponding to an infinitely-differentiable Gaussian process.
We select length scales that ensure the processes are neither approximately spatially constant, nor resemble white noise.
All experiments---where applicable due to stochasticity---were repeated $20$ times to assess variability.

In our first experiment, we evaluate the truncation error of the character-based kernel expression, and the approximation error of the generalized random phase approximation, on $\SO(3)$ and $\SO(5)$.
For this, we first precompute the kernel $k_{\nu, \kappa}^{(L_{\max})}$ using $L_{\max} = 50$ total characters, to create a suitable set of ground-truth values for the kernel.
We assume $\sigma^2$ is chosen in such a way that $k_{\nu, \kappa}^{(L_{\max})}(x, x) = 1$.
We then vary the truncation order $L$, namely the number of terms used in the series in~\Cref{eqn:character_series}, and compare the resulting kernel $k_{\nu, \kappa}^{(L)}$ to the precomputed kernel.
We similarly vary the number of Monte Carlo samples $S$ used for the random phase approximation given by~\Cref{eqn:random_phase_lie} with number of characters fixed to $L_{\max}$.
We denote the covariance function of the resulting approximation by $k_{\nu, \kappa}^{(L,S)}$.
We also consider an alternative approximation $\tilde{k}_{\nu, \kappa}^{(L,S)}(x, x')$, which we call the \emph{normalized random phase} approximation, which forces $\tilde{k}_{\nu, \kappa}^{(L,S)}(x, x) = 1$. 
This approximation is defined by
\[
\tilde{k}_{\nu, \kappa}^{(L,S)}(x, x') = \frac{k_{\nu, \kappa}^{(L,S)}(x, x')}{\sqrt{k_{\nu, \kappa}^{(L,S)}(x, x) k_{\nu, \kappa}^{(L,S)}(x', x')}} 
.
\]
In all cases, we examine the $L^2$-error $\norm{k_{\nu, \kappa}^{L_{\max}} - \.}_{L^2(X \x X)}$ where $(\.)$ is either $k_{\nu, \kappa}^{(L)}$, $k_{\nu, \kappa}^{(L,S)}$ or $\tilde{k}_{\nu, \kappa}^{(L,S)}$.
In the former case, this error can be computed analytically in terms of $\Psi_{\nu, \kappa}(a^{(\lambda)})$.
In the latter cases, we draw $50$ uniformly random points $x_1, \ldots, x_{50} \in X$ and approximate
\[ \label{eqn:approx_norm}
\norm{k_{\nu, \kappa}^{(L_{\max})} - k_{\nu, \kappa}^{(L,S)}}_{L^2(X \x X)}^2
\approx
\frac{1}{50^2}
\sum_{i=1}^{50} \sum_{j=1}^{50}
\abs{k_{\nu, \kappa}^{(L_{\max})}(x_i, x_j) - k_{\nu, \kappa}^{(L,S)}(x_i, x_j)}^2
\]
and similarly for $\tilde{k}_{\nu, \kappa}^{(L,S)}$ instead of $k_{\nu, \kappa}^{(L,S)}$.

Results are given in \Cref{fig:approx-error-rffs}.
We see that, mirroring the theoretical picture given in \Cref{sec:approx:asymptotics}, the truncation error decay rate varies according to the smoothness of the kernel.
For the heat kernel, we observe superexponential decay in truncation error.
For the two Matérn kernels, we observe decay that is slightly slower than exponential on the scales considered.
This indicates that the extra $d_\lambda$-factor of the character-based formula produces a significant increase in accuracy on these manifolds, reinforcing the theoretical picture presented previously.

Next, we examine the random phase approximation, again on $\SO(3)$ and $\SO(5)$, whose results are also given in \Cref{fig:approx-error-rffs}.
Here, we see that error decreases at the expected Monte Carlo rate.
Compared to the previous setting, we also see less variability in behavior among kernels with different levels of smoothness.
In all cases, the normalized approximation $\tilde{k}_{\nu, \kappa}^L$ performs better: when applicable, we thus recommend using it in practice.

\begin{figure}[t]
\includegraphics{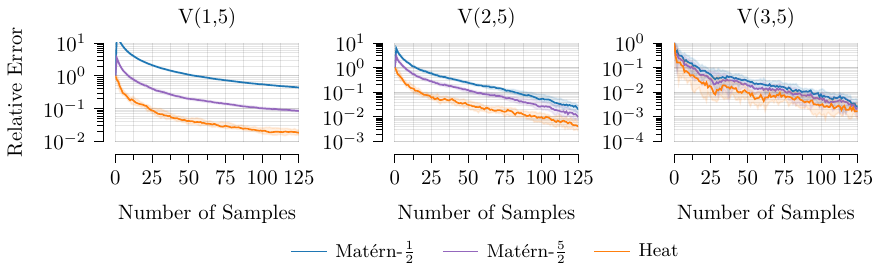}
\caption{
Here we compute the approximation error for the periodic-summation-based kernel computation technique.
We plot the median, as well as $25\%$ and $75\%$ quantiles, computed using $20$ different random seeds.
We examine relative error, which is equivalent to normalizing each curve so that it equals one at the origin.
We see that error decreases at a modest $\sqrt{S}$ rate, as one would expect given the Monte Carlo nature of the approximation.
}
\label{fig:approx-error-psum}
\end{figure}

For our next experiment, we evaluate the performance of periodic summation on the Stiefel manifolds $\f{V}(1, 5)$, $\f{V}(2, 5)$, and $\f{V}(3, 5)$.
Note that, of these, $\f{V}(1, 5) = \bb{S}_4$ is the four-dimensional sphere.
Similarly to the preceding experiment, we compute a ground-truth kernel: on $\f{V}(1, 5)$ we do so using the zonal-spherical-function-based available for the sphere\footnote{We use the first 10 eigenspaces on sphere. Note that from \textcite{gelbart1974}, these correspond to eigenspaces on $\SO(5)$ with signature $(k,0)$---we use this to improve the kernel approximation.}, whereas on $\f{V}(2, 5)$ and $\f{V}(3, 5)$ where such an expansion is not available we instead compute a ground-truth kernel using $150$ Monte Carlo samples.
We then vary the number of Monte Carlo samples used for periodic summation, and examine the $L^2$-error between the approximate kernel and the ground truth, again computed as in~\Cref{eqn:approx_norm}, but this time using $20$ random points instead of $50$.

Results are given in \Cref{fig:approx-error-psum}.
Here, we see that on $\f{V}(1, 5)$, approximation accuracy varies slightly according to smoothness, and decays slightly faster for the heat kernel compared to Matérn kernels.
On $\f{V}(2, 5)$ and $\f{V}(3, 5)$, no such differences are visible, and all techniques perform comparably---though, it is possible this similarity is caused by the comparatively-imprecise ground truth used in these cases.

In summary, we see that empirically, truncation error tends to decay very fast on the examples considered: a few dozen characters or zonal spherical functions is usually more than enough to accurately approximate the limiting kernel.
The Monte-Carlo-based approximations considered---generalized random phase Fourier features and generalized periodic summation---decay at the usual Monte Carlo rate.
For these, a reasonable number of samples may be on the order of hundreds or thousands to accurately represent the limiting kernel.
Finally, note that even if the error is large, all of the kernel approximations considered here are positive semi-definite, and therefore define perfectly valid and usable Gaussian processes in their own right, irrespective of the truncation level chosen.

\subsubsection{Computational Complexity}

We conclude by summarizing the computational costs of the three proposed kernel approximations: (i) truncated series expansions based on characters and zonal spherical functions, (ii) generalized periodic summation, and (iii) generalized random phase Fourier~features.

Given a set of truncated signatures $\tl\Lambda$ of size $L$, our proposed truncated series expansions are linear combinations of either characters or zonal spherical functions, depending on the setting.
The computational cost of evaluating the truncated series, given that characters or zonal spherical functions are known and cost $\c{O}(1)$ to evaluate, is therefore linear: this gives a complexity of $\c{O}(L)$.
In cases where these respective functions are known, they are usually either polynomials or ratios of such, with powers increasing as $\c{O}(L)$ grows.
Thus, the $\c{O}(1)$ assumption is arguably only appropriate in cases where these expansions are very efficient, and can be evaluated in a numerically stable manner using only lower-order terms.\footnote{In theory, there exists a character formula which can be calculated in constant time: unfortunately, this formula involves ratios where the denominator can evaluate to zero, rendering the calculation unstable. We therefore use an alternative formula, which leads to an explicit polynomial and is numerically stable, but whose size unfortunately grows with $L$.}
Fortunately, in practice this is often the case: in all of the examples we considered, a few dozen terms sufficed to obtain an accurate approximation.

Generalized periodic summation generally requires one to use Monte Carlo to evaluate the respective integral.
The computational costs of doing this directly---that is, without using a double sum to improve accuracy---are $\c{O}(LS)$, where $S$ is the number of samples.
Recall that, to guarantee positive semi-definiteness, one can modify the approximation by introducing a double sum: this results in $\c{O}(LS^2)$ kernel evaluation cost.
This is the most costly algorithm in this work, and shows that having an explicit form for the zonal spherical functions can be significantly beneficial in terms of efficiency.

Generalized random phase Fourier features provide a feature map with $LS$ features in total, which can be sampled in $\c{O}(S)$ time, where $S$ is the number of samples.
Evaluating this feature map costs $\c{O}(LS)$ for Lie groups, and $\c{O}(LRS)$ for homogeneous spaces, where $R = \max_{\lambda\in\tl\Lambda} r_\lambda$.
The complexity of approximately sampling a Gaussian process prior using this technique is the same.

\section{Conclusion}

In this work, we derived a set of techniques that make it possible to deploy Gaussian process models on a number of compact Riemannian manifolds which admit analytic descriptions and are important in applications.
This was done by combining and synthesizing results from a variety of mathematical fields, including probability, statistics, differential geometry, and representation theory.
Our techniques yield approximations which are essentially-closed-form algebraic expressions, and are guaranteed positive semi-definite, making them simple and reliable to use.
We hope our ideas enable practitioners to deploy these models in new prospective applications.

\section*{Acknowledgments}
The authors are grateful to Prof. Mikhail Lifshits for providing useful feedback.
IA and AS were supported by RSF grant N\textsuperscript{\underline{o}}21-11-00047.
AT was supported by Cornell University, jointly via the Center for Data Science for Enterprise and Society, the College of Engineering, and the Ann S. Bowers College of Computing and Information Science.
VB was supported by an ETH Zürich Postdoctoral Fellowship.
We further acknowledge support from Huawei Research and Development.

\appendix

\section{Traversing Representations via Highest Weights and Signatures} \label{appdx:trav_repr}

Here, primarily following \textcite{brockerdieck1985,fegan1991}, we review the relevant notions of representation theory and describe a general approach that allows one to traverse the set of irreducible unitary representations.

Recall that a substantial part of the mathematical structure of a Lie group is encoded within its tangent spaces. 
Due to the group structure, these are linked rigidly enough that it suffices to study the tangent space at one point, such as the identity element $e$ of the group. This tangent space, equipped with some additional algebraic structure, is called the \emph{Lie algebra} $\fr{g}$ of the group $G$.\footnote{A Lie algebra $\fr{g}$ can also be viewed as the set of left-invariant vector fields on $G$: using this perspective, we define the Lie bracket $[\.,\.] : \fr{g} \x \fr{g} \-> \fr{g}$ as the commutator of vector fields $[x,y] = xy - yx$. More generally, we define an abstract Lie algebra to be a finite-dimensional vector space equipped with an antisymmetric bilinear binary operation satisfying the Jacobi identity $[x,[y,z]]+[y,[z,x]]+[z,[x,y]]=0$}

Following the discussion from the main text, every irreducible representation of a connected compact semi-simple Lie group $G$ produces a representation of its maximal torus $T_m \subseteq G$ by restriction.
Using the structure of the torus, this representation splits into a sum of one-dimensional irreducible representations, parametrized by tuples of integers, which are called the \emph{weights} of the representation.
It turns out these representations possess even more structure: they are in fact uniquely determined by just one of the tuples, called the \emph{highest weight}.

We proceed to define the above-mentioned weights, along with the related notion of the \emph{roots} of a Lie group in an abstract and general fashion, which produces a configuration of vectors in $\fr{t}^*$, the space of linear functionals on the Lie algebra $\fr{t}$ of a fixed maximal torus.
This combinatorial structure enables us to enumerate all irreducible representations.

\subsection{Weights and Roots of Lie groups}

Let $G$ be a Lie group, and let $\fr{g}$ be its Lie algebra.
Usually $G$ is a linear group, namely it is realized as a closed subgroup of $\GL(V)$ for some finite-dimensional complex vector space $V$, in which case $\fr{g}$ is naturally realized as a subspace of the space of endomorphisms $\f{End}(V)$, defined as the space of all linear operators from $V$ to $V$.
In what follows, we use $\cl{x}$ and $\cl{A}$, respectively, to denote the conjugate of a complex number, and the closure of a subset of a topological space, with the intended meaning clear by context.
We now introduce the notion of a maximal torus.

\begin{result} \label{thm:computation:maxtorus}
Every compact Lie group $G$ contains a torus $T$ which is maximal with respect to inclusion. 
Any other torus is contained in a maximal one.
Maximal tori are not unique, but are conjugate to one another.
\end{result}

\begin{proof}
\textcite[Chapter IV]{brockerdieck1985}.
\end{proof}

We can use this to study representations of $G$.
Let $\fr{t}$ be the Lie algebra of $T$.
Given a representation $\pi: G\->\GL(V)$ of $G$ on a vector space $V$, consider its restriction $\tl\pi = \pi|_T$ to $T$. 
Since $T$ is a torus, this restriction decomposes into a direct sum of one-dimensional (complex-valued) irreducible unitary representations of $T$, given as
\[
\tl\pi(x) = \f{diag}(1, \ldots, 1, \exp(\theta_1(x)), \ldots, \exp(\theta_k(x)))
\]
where $\theta_i$ are homomorphisms $T\->\bb{T} \subset \C$.
We define the \emph{weights} of a representation in terms of these homomorphisms as follows.

\begin{definition}
Define the \emph{weights} of a representation $\pi : G \-> \GL(V)$ to be the real differentials $\d_e \theta_i:\fr{t}\->\R$, where $\theta_i$ are defined above.
By linearity, we have $\d_e \theta_i \in \fr{t}^*$ where $\fr{t}^*$ is the space of linear functionals over the space $\fr{t}$.
\end{definition}

For each $x\in G$, define the \emph{adjoint action} $I_x: G\-> G$ by $I_x(g)=x\bdot g \bdot x^{-1}$, and the \emph{adjoint representation} $\Ad: G\->\GL(\fr{g})$ by $\Ad(x) = \d_e I_x$.
If $G$ is a linear group, then the adjoint action is realized by conjugation of matrices in $\fr{g}$.
This allows us to introduce the \emph{roots} of $G$.

\begin{definition}
Define the \emph{roots} of a Lie group $G$ to be the non-zero weights of the adjoint representation $\Ad : G \-> \GL(\fr{g})$.
\end{definition}

The adjoint representation gives rise to a particular bilinear form on $\fr{g}$ that we will need further, called the \emph{Killing form}. 
We define it now.
Let $\ad : \fr{g} \-> \f{End}(\fr{g})$ be defined by $\ad(X)(Y) = [X,Y]$.

\begin{definition}
Define the symmetric bilinear \emph{Killing form} on the Lie algebra $\fr{g}$ by $B(X,Y) = \tr(\ad(X)\after\ad(Y))$ for $X, Y \in \fr{g}$, where $\after$ denotes composition.
We say $G$ is \emph{semi-simple} if $B$ is non-degenerate in the sense that $B(X, Y) = 0$ for all $Y$ implies $X = 0$.
\end{definition}

It is not hard to see that the Killing form is invariant under the adjoint action, that is,
\[
B(\Ad(g)X,\Ad(g)Y) = B(X,Y).
\]
Different choices of maximal tori $T^{(1)}$ and $T^{(2)}$ produce different sets of weights in the respective tangents spaces $\fr{t}^{(1)}$ and $\fr{t}^{(2)}$, but since all maximal tori are conjugate, the corresponding differential $\fr{t}^{(1)}\->\fr{t}^{(2)}$ is an isometry with respect to the Killing form, sending one set of the weights into another.

One can show that a connected semi-simple Lie group is compact if and only if $B$ is negative definite.
Reinterpreted as a tensor field, $-B$ therefore canonically defines a bi-invariant Riemannian structure.
Hereinafter by default we assume that the groups we are working with are semi-simple.

\subsection{The Highest Weight of a Representation} \label{sec:computation:comp_traverse:highest}

To enumerate representations we will use certain weights called the \emph{highest weights}.
For a given representation its highest weight is the maximal weight with respect to a certain partial order.
We proceed to define and formalize these concepts.

\begin{definition} \label{dfn:computation:la}
For each root $\alpha$, let $L_\alpha\subset\fr{t}$ be the hyperplane 
\[
L_\alpha = \{ X\in\fr{t} : \alpha(X)=0 \}
\]
which, by virtue of $\alpha$ being linear, passes through the origin.
Then $\fr{t}\setminus\cup_\alpha L_\alpha$ is by construction a union of disjoint convex polyhedral cones, each of which we call a \emph{Weyl chamber}.
\end{definition}

Define $\sigma:\fr{t}\->\fr{t}^*$ according to $-B(X,Y) = \sigma(X)(Y)$, where $B$ is the Killing form.
By non-degeneracy of $B$ this establishes an isomorphism between $\fr{t}$ and $\fr{t}^*$.
The subsets of $\fr{t}^*$ that are images of Weyl chambers under $\sigma$ will also be called Weyl chambers.

To define the ordering, we fix one arbitrary Weyl chamber $\fr{t}_+ \subseteq \fr{t}$ and call it \emph{the positive chamber}.
We also denote $\fr{t}^{*}_+ = \sigma(\fr{t}_+) \subseteq \fr{t}^*$.
This defines the order on $\fr{t}^*$ as follows.

\begin{definition}
Let $\theta,\theta'\in\fr{t}^*$.
We say that $\theta\geq\theta'$ if $\theta - \theta' \in \cl{\fr{t}^*_+}$, which defines a partial order.
We say that $\theta\in\fr{t}^*$ is \emph{positive} if $\theta(X)>0$ for any $X\in \fr{t}_+$.
\end{definition}

For purposes of calculating Laplace--Beltrami eigenvalues on compact Lie groups, we now introduce a particular weight $\rho$.

\begin{definition} \label{def:roots-half-sum}
Define the \emph{half-sum of the roots} to be $\rho = \frac{1}{2} \sum r$, where the sum is taken over all positive roots.
\end{definition}

Before proceeding to the main result on highest weights, we describe a certain algebraic structure generated by weights.
Let $\exp : \fr{g} \-> G$ be the Lie-theoretic exponential map.

\begin{definition}
Define the \emph{weight lattice}\footnote{We define a \emph{lattice} to be a discrete additive subgroup of $\R^m$.} to be
\[ 
P = \cbr{ c_1\alpha_1+\ldots+ c_k\alpha_k :\ k \in \N, \ c_j\in\Z,\ \t{and} \alpha_j\ \t{are weights}}
,
\]
which is spanned by the weights of all representations.
\end{definition}

With these notions in place, we can state the key structural result which we will use to enumerate the irreducible representations of $G$.

\begin{result}
For a given choice of a positive chamber $\fr{t}_+$, every representation $\pi$ of $G$ has a unique weight $w$ which is largest with respect to the ordering on $\fr{t}^*$ induced by $\fr{t}_+$.
This weight $w$ is always an element of $\cl{\fr{t}^*_+}$, and we call it the \emph{highest weight}.
Moreover, every element of $P\^ \cl{\fr{t}^*_+}$ is the highest weight of a unique irreducible representation.
\end{result}

\begin{proof}
See \textcite[Definition 2.7, Chapter VI]{brockerdieck1985} or \textcite[Theorems 8.19 and 8.21]{fegan1991}.
\end{proof}

This gives us a way to explicitly enumerate the irreducible representations of a given Lie group $G$: (1) choose a maximal torus $T \subseteq G$, (2) choose a positive chamber $\fr{t}_+ \subseteq \fr{t}$, where $\fr{t}$ is the Lie algebra of $T$, and (3) calculate the intersection of the weight lattice $P$ with the closure of the dual positive chamber $\cl{\fr{t}^*_+}$.
Every weight in this intersection maps uniquely onto an irreducible representation.

This establishes what actually needs to be done to calculate the index set $\Lambda$ of all irreducible representations. 
However, there is a more convenient way to carry this process out numerically.
If we identify our maximal torus $T$  with $\bb{T}^m \subseteq \C^m$, it turns out we can express the action of the highest weight of a representation simply by raising an element of $\bb{T}^m$ to the power of some integer multi-index.
This idea leads to the notion of a \emph{signature}, formulated as follows.

\begin{result}
\label{res:signature}
Let $w$ be the highest weight of a representation $\pi$.
Suppose the chosen maximal torus $T \subseteq G$ is isomorphic to $\bb{T}^m \subseteq \C^m$ via the isomorphism of form
\[
t \in T \-> (\varepsilon_1(t), \ldots, \varepsilon_m(t)) \in \bb{T}^m.
\]
Then for every $t \in T$, and every $X \in \fr{t}$ satisfying $t = \exp(X)$, $w$ can be expressed as
\[
w(X) = \varepsilon_1(t)^{p_1} \cdot \ldots \cdot \varepsilon_m(t)^{p_m}
.
\]
We call $(p_1,\ldots,p_m) \in \Z^m$ the \emph{signature} of the representation, which is unique up to a choice of a maximal torus $T$ and its isomorphism with the standard torus $\bb{T}^m$.
\end{result}

\begin{proof}
See \textcite[Chapter VII, Section 2]{naimark1982representations}.
\end{proof}

Thus, we can use signatures to identify highest weights with tuples of integers. 
Not all tuples will correspond to highest weights: various restrictions on the integers forming a signature vary according to $G$.
These are best handled on a case by case basis: see Appendix~\ref{appdx:weights_examples} for examples.

\section{Calculating Signatures and Highest Weights for \texorpdfstring{$\SL(n)$}{SL(n)}, \texorpdfstring{$\SU(n)$}{SU(n)}, and \texorpdfstring{$\SO(n)$}{SO(n)}} \label{appdx:weights_examples}

To illustrate the preceding ideas via example, we now calculate the above-introduced notions for three widely-occurring Lie groups.
We defer the remaining task of determining which irreducible representations corresponds to which highest weights to subsequent sections.

\begin{example}
If $G=\SL(n)$, then
\[
\fr{g} = \{ X\in M(n,\R) : \tr(X) = 0 \}.
\]
The most natural choice of a maximal torus $T$ is the subgroup of all diagonal matrices, then $\fr{t} \subset \fr{g}$ is naturally realized as the subspace of diagonal matrices. The Killing form on $\fr{g}$ is realized as $B(X,Y) = 2n\tr(XY)$, and its restriction to $\fr{t}$ is then
\[ B(\f{diag}(x_1,\ldots,x_n),\f{diag}(y_1,\ldots,y_n)) = 2n(x_1y_1+\ldots+x_ny_n). \]
The space $\fr{t}^*$, equipped with the symmetric bilinear form, is naturally isomorphic to the $n-1$-dimensional Euclidean space.
Denote $e_i=(0, \ldots, 1, \ldots, 0)$ where $1$ is in the $i$th component.
A standard realization of roots and weights is as follows. Consider the subspace $E$ of $\R^d$ orthogonal to $e_1+\ldots+e_n$. The roots are precisely the vectors $\alpha\in E$ having integer coordinates and $|\alpha|^2=2$. In other words, they are of the form $\alpha = e_i-e_j$ for $i\neq j$. The root lattice can be described as $Q = \Z^n\cap E$.

The half-sum of the positive roots then equals
\[
\rho = \del{\frac{n-1}{2}, \frac{n-3}{2}, \ldots, \frac{3-n}{2}, \frac{1-n}{2}}
.
\]

The weight lattice $P$ can be described as the dual of $Q$. Namely, one can find such $\varpi_i\in E$, $i=1,\ldots,n-1$, that $(\varpi_i, e_j-e_{j+1}) = \delta_{ij}$, then $Q = \sum_i \Z \varpi_i $.

A natural choice for a positive chamber $\cl{\fr{t}^*_+}$ is then the positive cone spanned by $\varpi_i$, namely the set of their linear combinations with non-negative coefficients. It can also be very conveniently described as
\[
\cl{\fr{t}^*_+} = \{ (x_1,\ldots,x_n) \in E\subset \R^n : x_1\geqslant x_2\geqslant \ldots \geqslant x_n \}.
\]

The action of the Weyl group is by all possible permutations of coordinates, thus $W \cong S_n$, the symmetric group on $n$ points.
\end{example}

\begin{example}
If $G=\SU(n)$, then its tangent space is
\[ \fr{g} = \cbr{ X\in M(n,\mathbb{C}) : X^*=-X,\ \tr(X)=0 }. \]
The most natural choice of a maximal torus is the subgroup of diagonal matrices, that is,
\[ T = \cbr{ \operatorname{diag}\del{ e^{i\theta_1}, \ldots, e^{i\theta_n} } : \theta_1+\ldots+\theta_n=0}. \]
Its tangent space is then
\[ \fr{t} = \cbr{ \operatorname{diag}\del{ i\theta_1, \ldots, i\theta_n } : \theta_1+\ldots+\theta_n=0 }. \]
The Killing form on $\fr{t}$ can be calculated as
\[ B(i\operatorname{diag}(\theta_1,\ldots,\theta_n), i\operatorname{diag}(\eta_1,\ldots,\eta_n)) = -2n(\theta_1\eta_1+\ldots+\theta_n\eta_n). \]
Under the identification $\operatorname{diag}\del{ i\theta_1, \ldots, i\theta_n } \mapsto (\theta_1,\ldots,\theta_n)$ of $\fr{t}$ with $\R^n$ the description of roots and weights coincides with that of $\SL(n)$.
\end{example}

\begin{example}
Let $G=\SO(n)$, its tangent is $\fr{g} = \cbr{ X\in \M(n,\mathbb{R}) : A^\top=-A }$.

The description of roots and weights for $G=\SO(n)$ depends on the oddity of $n$. For $n=2k$ or $2k+1$ the maximal tori $T$ can be chosen as
\[ 
T &= \cbr{
\begin{pmatrix}
\cos\theta_1 & \sin\theta_1 \\
-\sin\theta_1 & \cos\theta_1 \\
&& \ddots \\
&&& \cos\theta_k & \sin\theta_k \\
&&& -\sin\theta_k & \cos\theta_k
\end{pmatrix}
}
\\
T &= \cbr{
\begin{pmatrix}
\cos\theta_1 & \sin\theta_1 \\
-\sin\theta_1 & \cos\theta_1 \\
&& \ddots \\
&&& \cos\theta_k & \sin\theta_k \\
&&& -\sin\theta_k & \cos\theta_k \\
&&&&& 1
\end{pmatrix}
}
.
\]
Their tangent spaces are then
\[
\fr{t} &= \cbr{
\begin{pmatrix}
0 & \theta_1 \\
-\theta_1 & 0 \\
&& \ddots \\
&&& 0 & \theta_k \\
&&& -\theta_k & 0
\end{pmatrix}
}
&
\fr{t} &= \cbr{
\begin{pmatrix}
0 & \theta_1 \\
-\theta_1 & 0 \\
&& \ddots \\
&&& 0 & \theta_k \\
&&& -\theta_k & 0 \\
&&&&& 0
\end{pmatrix}
}
\]
and the formulas for the Killing form are
\[
&\begin{aligned}
B\Big( \operatorname{bdiag}\del{ \begin{psmallmatrix} 0 & \theta_1 \\ -\theta_1 & 0 \end{psmallmatrix}, \ldots, \begin{psmallmatrix} 0 & \theta_k \\ -\theta_k & 0 \end{psmallmatrix}}, \operatorname{bdiag}&\del{\begin{psmallmatrix} 0 & \eta_1 \\ -\eta_1 & 0 \end{psmallmatrix}, \ldots, \begin{psmallmatrix} 0 & \eta_k \\ -\eta_k & 0 \end{psmallmatrix}} \Big)
\\
&= (4-4k)(\theta_1\eta_1+\ldots+\theta_k\eta_k)
\end{aligned}
\\
&\begin{aligned}
B\Big( \operatorname{bdiag}\del{ \begin{psmallmatrix} 0 & \theta_1 \\ -\theta_1 & 0 \end{psmallmatrix}, \ldots, \begin{psmallmatrix} 0 & \theta_k \\ -\theta_k & 0 \end{psmallmatrix}}, \operatorname{bdiag}&\del{ \begin{psmallmatrix} 0 & \eta_1 \\ -\eta_1 & 0 \end{psmallmatrix}, \ldots, \begin{psmallmatrix} 0 & \eta_k \\ -\eta_k & 0 \end{psmallmatrix} }, 0 \Big)
\\
&= (2-4k)(\theta_1\eta_1+\ldots+\theta_k\eta_k)
.
\end{aligned}
\]
In both cases, $\fr{t}^*$ is naturally identified with $\R^k$.
The roots are, then,
\[
e_i\pm e_j,\ i\neq j \quad \text{for} \quad n=2k, \\
\pm e_i\ \ \text{and}\ \ e_i\pm e_j,\ i\neq j \quad \text{for} \quad n=2k+1.
\]
The half-sum of the positive roots equals
\[
\rho = \del{ n-1, n-2, \ldots, 2, 1, 0 } \quad \text{for} \quad n=2k, \\
\rho = \del{ \frac{n}{2}-1, \frac{n}{2}-2, \ldots, \frac{3}{2}, \frac{1}{2} } \quad \text{for} \quad n=2k+1.
\]
The weight lattice $Q$ in both cases can be described as $\Z^k$. The action of the Weyl group is by permutations of coordinates and changing of signs.\footnote{In group-theoretic terms, $W \cong S_k \ltimes (\Z/2\Z)^k$, the semidirect product of $S_k$ acting on the $k$-th power of the cyclic group of order $2$ by permutations of components.}

The standard choice of the positive chamber is
\[
\cl{\fr{t}^*_+} = \{ (x_1,\ldots,x_k)\in Q : x_i\geqslant x_{i+1},\ x_{k-1}\geqslant |x_k|  \} \quad \text{for} \quad n=2k, \\
\cl{\fr{t}^*_+} = \{ (x_1,\ldots,x_k)\in Q : x_i\geqslant x_{i+1}\geqslant 0 \} \quad \text{for} \quad n=2k+1.
\]
\end{example}

\begin{example}
For $G=\SO(n)$ and the choices of $T$ and $\cl{\fr{t}^*_+}$ as above, the possible signatures $(p_1,\ldots,p_k)$ are
\[ 
p_1\geqslant p_2 \geqslant \ldots \geqslant p_{k-1} \geqslant |p_k| \quad \text{for} \quad n=2k, \\
p_1\geqslant p_2 \geqslant \ldots \geqslant p_k \geqslant 0 \quad \text{for} \quad n=2k+1.
\]
\end{example}

\begin{example}
For $G=\SU(n)$ the possible signatures $(p_1,\ldots,p_n)$ are those with
\[ p_1\geqslant p_2\geqslant\ldots\geqslant p_n. \]
\end{example}

\section{The Weyl Character Formula}
\label{appdx:character_formula}
To describe the Weyl character formula, we will need the principal group of symmetries of a root system, its Weyl group.
Again, denote by $T$ a fixed maximal torus and by $\fr{t}$ its Lie algebra.

\begin{definition}
Define the \emph{Weyl group} $W$ as the subgroup of $\f{O}(\fr{t})$ generated by all reflections in $L_\alpha$ of \Cref{dfn:computation:la}.
$W$ acts transitively on the set of all Weyl chambers by permutations.
\end{definition}

To proceed, we introduce two ways of expressing certain alternating sums.

\begin{result}[Weyl Denominator Formula]
Let $X \in \fr{t}$, and define
\[
j(\exp(X)) = \sum_{\sigma\in W} \sgn(\sigma)\exp(\sigma(\rho)(X))
\]
where $W$ is the Weyl group, $\rho$ is half the sum of positive roots of \Cref{def:roots-half-sum}, and $\sgn(\sigma)$ is the determinant of $\sigma$ as a linear operator.\footnote{Note that $\sgn(\sigma)$ is equal to $(-1)^m$ when $\sigma$ is a composition of $m$ reflections.}
Then 
\[
j(\exp(X)) = \prod_{\alpha>0} \del{\exp\del{\tfrac{i}{2}\alpha(X)} - \exp\del{\tfrac{-i}{2}\alpha(X)}}
\]
where the product is taken over all positive roots.
\end{result}

\begin{proof}
\textcite{fegan1991}, Theorem 9.2.
\end{proof}

We now introduce the Weyl character formula, which expresses characters as the ratio of two alternating sums depending on the chosen tangent vector.

\begin{result}[Weyl Character Formula]
For a highest weight $w$, define 
\[ 
j_w(\exp(X)) = \sum_{\sigma\in W} \sgn(\sigma)\exp(\sigma(w+\rho)(X))
.
\]
Then for $t\in T$, the character $\chi_w$ can be expressed as
\[ 
\chi_w(t) = \frac{j_w(t)}{j(t)}
.
\]
\end{result}

\begin{proof}
\textcite{fegan1991}, Theorem 9.9.
\end{proof}

Ostensibly, this gives us a way to calculate character values on the maximal torus $T$.
However, since characters are conjugation-invariant, the Weyl character formula can be used to calculate character values everywhere on $G$.
We state this formally as follows.

\begin{corollary}
Let $g \in G$, and let $t \in T$ be such that there is an $h \in G$ such that $g = h \bdot t \bdot h^{-1}$, where $t$ and $h$ always exist by \Cref{thm:computation:maxtorus}.
Then 
\[
\chi_w(g) = \chi_w(h^{-1} \bdot t \bdot h) = \chi_w(t)
.  
\]
\end{corollary}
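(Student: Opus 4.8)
The plan is to derive the result directly from the conjugation invariance of characters, which in turn follows from the invariance of the trace under similarity. Let $\pi$ denote the irreducible representation with highest weight $w$, acting on a vector space $V$, so that by definition $\chi_w = \tr \pi$. Because $\pi : G \to \GL(V)$ is a group homomorphism, it intertwines conjugation in $G$ with conjugation in $\GL(V)$: for any $a, b \in G$ one has $\pi(a \bdot b \bdot a^{-1}) = \pi(a)\,\pi(b)\,\pi(a)^{-1}$.

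First I would record the elementary fact that $\tr(\m{M}\m{P}\m{M}^{-1}) = \tr(\m{P})$ for any invertible $\m{M}$, which is the cyclic property of the trace already invoked to argue that $\chi_w$ is independent of the choice of basis in the main text. Combining this with the homomorphism property yields, for all $a, b \in G$,
\[
\chi_w(a \bdot b \bdot a^{-1}) = \tr\del{\pi(a)\,\pi(b)\,\pi(a)^{-1}} = \tr \pi(b) = \chi_w(b),
\]
so $\chi_w$ is constant on conjugacy classes.

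Applying this identity to the hypothesis $g = h \bdot t \bdot h^{-1}$ gives $\chi_w(g) = \chi_w(t)$, and applying it to the element $h^{-1} \bdot t \bdot h$ (a conjugate of $t$ by $h^{-1}$) gives $\chi_w(h^{-1} \bdot t \bdot h) = \chi_w(t)$ as well; chaining these two equalities through the common value $\chi_w(t)$ establishes the displayed chain. The existence of a torus element $t \in T$ and a conjugator $h \in G$ with $g = h \bdot t \bdot h^{-1}$ is precisely the statement that every element of a compact Lie group is conjugate into any fixed maximal torus, which rests on \Cref{thm:computation:maxtorus}. There is no genuine obstacle in the argument itself: the only non-tautological input is this existence of the conjugator, while the two points meriting care are that $\pi$ is a bona fide homomorphism (so conjugation is preserved) and that one fixes a single representation throughout—both of which are already in place.
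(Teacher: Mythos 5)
Your proof is correct and follows essentially the same route as the paper's: both reduce the claim to conjugation invariance of the character, obtained by combining the homomorphism property of $\pi_w$ with the cyclic invariance of the trace. Your version is slightly more careful in noting that $g = h \bdot t \bdot h^{-1}$ and $h^{-1} \bdot t \bdot h$ are each conjugates of $t$, so both evaluate to $\chi_w(t)$, but this is a presentational refinement rather than a different argument.
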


\begin{proof}
This follows because the character is invariant under conjugation:
\[
\chi_w(h^{-1} \bdot t \bdot h)
=
\tr \pi_w(h^{-1}) \pi_w(t) \pi_w(h)
=
\tr \pi_w(h^{-1}) \pi_w(h) \pi_w(t)
=
\chi_w(t).
\]
\end{proof}

This establishes how to calculate the character $\chi_w$ for a given highest weight $w$.
We now study how to express this using signatures.
If $t=(\varepsilon_1,\ldots,\varepsilon_n)$ under the correspondence between the maximal torus and the complex exponentials, we can re-express the above formula as the ratio of two polynomials in $\varepsilon_1,\ldots,\varepsilon_n$, where the denominator divides the numerator, and the ratio itself is also a polynomial.

In many cases, this expression can be simplified greatly. 
As an example, for $G=\SU(2)$, the Weyl character formula simply says that for a unique representation of dimension $d$---with signature $(d)$---the character value equals
\[
\chi_d(g) = \frac{\lambda^d-\lambda^{-d}}{\lambda-\lambda^{-1}}
\]
where $\lambda,\lambda^{-1}$ are the eigenvalues of $g$. 
Since both eigenvalues can be expressed as 
\[
\frac{1}{2}\del{ \tr(g)\pm\sqrt{\tr(g)^2-4} }
\]
one can show that
\[
\chi_d(g) = U_{d-1}\del{\tfrac{1}{2}\tr(g)}
\]
where $U_k$ is the $k$th Chebyshev polynomial of the second kind.

In general, this expression can either be calculated symbolically from the Weyl character formula, or obtained from the Kostant multiplicity formula, which expresses the coefficients as certain combinatorial quantities related to the structure of the root system of $G$.
Note in particular that from this expression, one can deduce the Weyl dimension formula
\[ 
\dim V_w = \chi_w(e) = \frac{\prod_{\alpha>0} (w+\rho, \alpha)}{\prod_{\alpha>0} (\rho,\alpha)}
\]
which is used for calculating normalizing constants $d_\lambda$.
In most cases, the alternating sums appearing in the above cases can be interpreted with the help of Weyl denominator formula as certain matrix determinants.

\section{Calculating the Characters for \texorpdfstring{$\SU(n)$}{SU(n)}, and \texorpdfstring{$\SO(n)$}{SO(n)}} \label{appdx:particular_character_formulas}

In more practical terms, for $\SO(n)$ and $\SU(n)$ Weyl character formula provides an explicit expression for $\chi(g)$ as a ratio of two Vandermonde-type determinants depending only on the eigenvalues of $g$, possibly together with an ordering.

\begin{example}
Let $G=\SO(n)$ and let $(p_1,\ldots,p_n)$ be the signature of a representation with the highest weight $\varpi$. Then the corresponding character $\chi_\varpi$ can be calculated as follows.
If $n=2k+1$, denote $q_i = p_i+k-i+\frac{1}{2}$ and set $(\gamma_1,\ldots,\gamma_k,1,\gamma_k^{-1},\ldots,\gamma_1^{-1})$ to be the eigenvalues of an element $g\in G$. Then
\[ \chi_\varpi(g) = \frac{\xi_1(q_1,\ldots,q_k)}{\xi_1(k-\frac{1}{2},\ldots,\frac{3}{2},\frac{1}{2})}, \quad \text{where} \quad \xi_1(d_1,\ldots,d_k) = \det\del{ \gamma_i^{d_j} - \gamma_i^{-d_j} }_{i,j=1}^k. \]
Weyl's dimension formula specializes to
\[ d_\varpi = \frac{2^k}{(2k-1)!\ldots 3!\,1!} \cdot q_1\ldots q_k \cdot \prod_{\mathclap{1\leqslant i<j\leqslant k}} (q_i^2-q_j^2). \]

If $n=2k$, denote $q_i=p_i+k-i$ for $i=1,\ldots,k-1$ and $q_k=|p_k|$. Let the quantity $(\gamma_1,\ldots,\gamma_k,\gamma_k^{-1},\ldots,\gamma_1^{-1})$ be the eigenvalues of $g\in G$ ordered such that
\[
\begin{pmatrix}
\cos\theta_1 & \sin\theta_1 \\
-\sin\theta_1 & \cos\theta_1 \\
&& \ddots \\
&&& \cos\theta_k & \sin\theta_k \\
&&& -\sin\theta_k & \cos\theta_k \\
&&&&& 1
\end{pmatrix}, \quad \text{where} \quad \gamma_j = \cos\theta_j + i \sin\theta_j
\]
is conjugated in $SO(n)$ to $g$ (contrary to the case $n=2k+1$, here each possible collection of eigenvalues determine two conjugacy classes). Introduce
\[
\xi_0(d_1,\ldots,d_k) &= \det\del{ \gamma_i^{d_j} + \gamma_i^{-d_j} }_{i,j=1}^k
&
\xi_1(d_1,\ldots,d_k) &= \det\del{ \gamma_i^{d_j} - \gamma_i^{-d_j} }_{i,j=1}^k. 
\]
Then
\[
\chi_\varpi(g) = \frac{\xi_0(q_1,\ldots,q_k)}{\xi_0(m-1,m-2,\ldots,1,0)}, \quad \text{if} \quad p_k=q_k=0, \\
\chi_\varpi(g) = \frac{\xi_0(q_1,\ldots,q_k)+(\sgn p_m)\xi_1(q_1,\ldots,q_k)}{2\xi_0(m-1,m-2,\ldots,1,0)}, \quad \text{if} \quad p_m\neq 0.
\]
The dimension formula turns into
\[ d_\varpi = \frac{2^{k-1}}{(2k-2)!(2k-4)!\ldots 4!\,2!} \cdot \prod_{\mathclap{\quad 1\leqslant i<j\leqslant k}} (q_i^2-q_j^2). \]
\end{example}

\begin{example}
For $G=\SU(n)$ the character of a representation with the signature $(p_1,\ldots,p_n)$ can be calculated as follows. Denote $q_i = p_i+n-i$. If $(\gamma_1,\ldots,\gamma_n)$ are the eigenvalues of $g\in G$, then
\[ \chi(g) = \frac{\xi_1(q_1,\ldots,q_k)}{\xi_1(0,1,\ldots,n-1)}, \quad \text{where} \quad \xi_1(d_1,\ldots,d_n) = \det\del{ \gamma_i^{d_j} }_{i,j=1}^n. \]
\end{example}

\section{Proofs} \label{appdx:proofs}

\ThmStationaryGroup*

\begin{proof}
Theorem 2 of \textcite{yaglom1961} shows that for any stationary Gaussian process $k(g_1, g_2) = \sum_{\lambda \in \Lambda} a^{(\lambda)} \, \chi^{(\lambda)}(g_2^{-1} \bdot g_1)$ where $a^{(\lambda)} \geq 0$ satisfy $\sum_{\lambda\in\Lambda} d_{\lambda} a^{(\lambda)} < \infty$.
Under our usual assumption that a Gaussian process is real-valued, we have $\Re k(g_1, g_2) = k(g_1, g_2)$. Using this and taking the real parts of both sides of the equation, we get \Cref{eqn:stationary_lie:kernel}.

To prove positive semi-definiteness, write, using (i) the definition of $\chi^{(\lambda)}$, (ii) the homomorphism property of $\pi^{(\lambda)}$, (iii) the fact that $\pi^{(\lambda)}(g)^{-1} = \pi^{(\lambda)}(g)^*$ due to unitarity, where the star denotes the operation of taking the conjugate transpose matrix, and (iv) the standard properties of traces, that
\[
\chi^{(\lambda)}(g_2^{-1} g_1)
=
\tr \pi^{(\lambda)}(g_2^{-1} g_1)
=
\tr \pi^{(\lambda)}(g_1) \pi^{(\lambda)}(g_2)^*
=
\sum_{j = 1}^{d_{\lambda}} \sum_{k = 1}^{d_{\lambda}} \pi^{(\lambda)}_{j k}(g_1) \overline{\pi^{(\lambda)}_{j k}(g_2)}.
\]
This immediately shows that a character $\chi^{(\lambda)}$, being an inner product of the feature maps $g \-> \cbr[0]{\pi^{(\lambda)}_{j k}(g)}_{1 \leq j, k \leq d_{\lambda}}$, is (complex-valued) positive semi-definite.
The real part of a complex-valued positive semi-definite function is positive semi-definite, hence $\Re \chi^{(\lambda)}$ is positive semi-definite.
\end{proof}

\ThmStationaryHomogeneous*

\begin{proof}
Theorem 5 of \textcite{yaglom1961} immediately implies that any stationary real-valued Gaussian process is of form $k(g_1 \bdot H, g_2 \bdot H) = \sum_{\lambda \in \Lambda} \sum_{j, k=1}^{r_\lambda} a^{(\lambda)}_{j k} \pi^{(\lambda)}_{j k}(g_2^{-1} \bdot g_1) $ where the coefficients $a^{(\lambda)}_{j k} \in \C$ form Hermitian positive semi-definite matrices $\m{A}^{(\lambda)}$ of size $r_\lambda \x r_\lambda$ with \emph{complex} entries satisfying $\sum_\lambda \tr \m{A}^{(\lambda)} < \infty$, and $\pi^{(\lambda)}_{j k}$ are the zonal spherical functions.
Furthermore, any covariance of this form, provided it is real-valued, it distribution-wise uniquely defines a stationary Gaussian process.
To obtain the claim, we need to understand when $k$ defined in this way is a real-valued function.

The map $g \-> \del[0]{\pi^{(\lambda)}(g)}^*$ defines an irreducible unitary representation, thus is equal to $\pi^{(\lambda')}$ for some $\lambda' \in \Lambda$.
No matter if $\lambda' = \lambda$ or not, it is easy to see that $r_{\lambda'} = r_\lambda$ and, what is more, we can define the zonal spherical functions for $\pi^{(\lambda')}$ using the same basis $e_1, \ldots, e_{d_{\lambda}} \in V_{\lambda'} = V_{\lambda}$, getting
\[
\pi^{(\lambda')}_{j k}(g)
=
\innerprod[1]{\pi^{(\lambda')}(g) e_j}{e_k}_{V_{\lambda}}
=
\innerprod[1]{\del[0]{\pi^{(\lambda)}(g)}^* e_j}{e_k}_{V_{\lambda}}
=
\innerprod[1]{e_j}{\pi^{(\lambda)}(g) e_k}_{V_{\lambda}}
=
\overline{\pi^{(\lambda)}_{k j}(g)}.
\]
Using this property, from $\Re k = k$, or equivalently from $k = \overline{k}$ we infer
\[
\sum_{\lambda \in \Lambda}
\sum_{j k = 1}^{r_{\lambda}}
a^{(\lambda)}_{j k}
\pi^{(\lambda)}_{j k} (g)
=
\sum_{\lambda \in \Lambda}
\sum_{j k = 1}^{r_{\lambda}}
\overline{a^{(\lambda)}_{j k}
\pi^{(\lambda)}_{j k} (g)}
=
\sum_{\lambda \in \Lambda}
\sum_{j k = 1}^{r_{\lambda}}
\overline{a^{(\lambda')}_{k j}}
\pi^{(\lambda)}_{j k} (g)
\]
Because all $\pi^{(\lambda)}_{j k}$ are orthonormal, it follows that $a^{(\lambda)}_{j k} = \overline{a^{(\lambda')}_{k j}}$.

Now, since $k(g_1 \bdot H, g_2 \bdot H) = k(g_2 \bdot H, g_1 \bdot H)$, write
\[
\sum_{\lambda \in \Lambda}
\sum_{j k = 1}^{r_{\lambda}}
a^{(\lambda)}_{j k}
\pi^{(\lambda)}_{j k} (g)
&=
\sum_{\lambda \in \Lambda}
\sum_{j k = 1}^{r_{\lambda}}
a^{(\lambda)}_{j k}
\pi^{(\lambda)}_{j k} (g^{-1})
=
\sum_{\lambda \in \Lambda}
\sum_{j k = 1}^{r_{\lambda}}
a^{(\lambda)}_{j k}
\overline{\pi^{(\lambda)}_{k j} (g)}
\\
&=
\sum_{\lambda \in \Lambda}
\sum_{j k = 1}^{r_{\lambda}}
a^{(\lambda')}_{k j}
\pi^{(\lambda)}_{j k} (g)
\]
This implies $a^{(\lambda)}_{j k} = a^{(\lambda')}_{k j}$ and therefore $a^{(\lambda)}_{j k} \in \R$.
Using these observations, write
\[
\Re
\sum_{\lambda \in \Lambda}
\sum_{j, k=1}^{r_\lambda}
a^{(\lambda)}_{j k} \pi^{(\lambda)}_{j k}(g_2^{-1} \bdot g_1)
=
\sum_{\lambda \in \Lambda}
\sum_{j, k=1}^{r_\lambda}
a^{(\lambda)}_{j k} \Re \pi^{(\lambda)}_{j k}(g_2^{-1} \bdot g_1)
\]
which gives the first part of the claim.

We now show that for each individual $\lambda \in \Lambda$ the corresponding sum over $j, k$ is positive semi-definite. \textcite[Theorem 5]{yaglom1961} proves, as a byproduct, that this sum is a covariance function of a Gaussian process given by
\[
f_{\lambda}(g H)
=
\sum_{j=1}^{d_{\lambda}}
\sum_{k=1}^{r_\lambda}
z_{j k}^{\lambda} \pi^{(\lambda)}_{j k}(g)
\]
where $z_{j k}^{\lambda}$ are zero-mean Gaussians with covariances given by $\E z_{j_1 k_1}^{\lambda} z_{j_2 k_2}^{\lambda} = \delta_{j_1 j_2} a^{(\lambda)}_{k_1 k_2}$.
The claim follows.
\end{proof}

\ThmPeriodicSummation*

\begin{proof}
Recall that under a suitable choice of basis in the representation space $V_{\lambda}$, spherical functions $\pi^{(\lambda)}_{j k}$ are matrix coefficients with $1 \leq j \leq d_{\lambda}$ and $1 \leq k \leq r_{\lambda}$.

Consider the integrals of form
\[
I^{(\lambda)}_{j k}(g)
=
\int_H
\pi^{(\lambda)}_{j k}(g \bdot h)
\d \mu_H(h).
\]
Since $\pi^{(\lambda)}_{j k}$ corresponding to $1 \leq j \leq d_{\lambda}$ and $1 \leq k \leq r_{\lambda}$ are spherical functions and these are constant on cosets, we have $\pi^{(\lambda)}_{j k}(g \bdot h) = \pi^{(\lambda)}_{j k}(g)$ and thus $I^{(\lambda)}_{j k}(g) = \pi^{(\lambda)}_{j k}(g)$.
Furthermore, since spherical functions span the subspace of $\Span \pi^{(\lambda)}_{j k}$ that consists of all functions constant on cosets, and since the remaining matrix coefficients belong to its orthogonal complement, we have $I^{(\lambda)}_{j k}(g) = 0$ for all remaining indices $j, k$, proving that
\[
\int_H \chi^{(\lambda)}(g \bdot h)
\d \mu_H(h)
=
\sum_{j = 1}^{d_\lambda}
\int_H
\pi^{(\lambda)}_{j j}(g \bdot h)
\d \mu_H(h)
=
\sum_{j = 1}^{r_\lambda}
\pi^{(\lambda)}_{j j}(g \bdot h).
\]
The same holds for real parts.

With this, we can write
\[
k(g_1 \bdot H, g_2 \bdot H)
&=
\sum_{\lambda \in \Lambda}
a^{(\lambda)}
\sum_{j = 1}^{r_\lambda}
\Re \pi^{(\lambda)}_{j j}(g_2^{-1} \bdot g_1)
\\
&=
\sum_{\lambda \in \Lambda}
a^{(\lambda)}
\int_H \Re \chi^{(\lambda)}(g_2^{-1} \bdot g_1 \bdot h)
\d \mu_H(h)
\\
&=
\int_H
\sum_{\lambda \in \Lambda}
a^{(\lambda)}
\Re \chi^{(\lambda)}(g_2^{-1} \bdot g_1 \bdot h)
\d \mu_H(h)
\\
&=
\int_H
k_G(g_1 \bdot h, g_2)
\d \mu_H(h).
\]

Define $f'(gH) = \int_H f_G(g \bdot h) \d \mu_H(h)$ where $f_G \~[GP](0, k_G)$.
Now, $\E f'(g H) = 0$, while writing
\[
\Cov\del{f'(g_1 H), f'(g_2 H)}
&=
\int_H \int_H
\Cov\del{f_G(g_1 \bdot h_1), f_G(g_2 \bdot h_2)}
\d \mu_H(h_1) \d \mu_H(h_2)
\\
&=
\int_H \int_H
k_G(g_1 \bdot h_1, g_2 \bdot h_2)
\d \mu_H(h_1) \d \mu_H(h_2)
\\
&=
\int_H \int_H
k_G(g_1 \bdot h_1 \bdot h_2^{-1}, g_2)
\d \mu_H(h_1) \d \mu_H(h_2)
\\
&=
\int_H
k_G(g_1 \bdot h, g_2)
\d \mu_H(h)
=
k(g_1 \bdot H, g_2 \bdot H)
\]
proves the claim.
\end{proof}

\ThmKPhaseCond*

\begin{proof}
Expanding the right-hand side of~\Cref{eqn:rkhs_prop_to_test} we get
\[
\sum_{\substack{k_1, k_2 = 1 \\ k_3, k_4 = 1}}^{r_\lambda}
a^{(\lambda)}_{k_1, k_2}
a^{(\lambda)}_{k_3, k_4}
\sum_{j_1, j_2 = 1}^{d_{\lambda}}
            \pi^{(\lambda)}_{j_1, k_1}(x)
  \overline{\pi^{(\lambda)}_{j_2, k_3}(x')}
\obr{
\int_{G/H}
  \overline{\pi^{(\lambda)}_{j_1, k_2}(u)}
            \pi^{(\lambda)}_{j_2, k_4}(u)
\d\mu_{G/H}(u)
}^{d_{\lambda}^{-1} \delta_{j_1 j_2} \delta_{k_2 k_4}}
\\ \label{eqn:rkhs_prop_to_test_rhs}
=
d_{\lambda}
\sum_{k_1, k_3 = 1}^{r_\lambda}
\ubr{\del{
  \sum_{k = 1}^{r_\lambda}
  \frac{a^{(\lambda)}_{k_1, k}}{d_{\lambda}}
  \frac{a^{(\lambda)}_{k_3, k}}{d_{\lambda}}
}}_{\t{denote by}\, b_{k_1, k_3}}
\sum_{j}^{d_{\lambda}}
          \pi^{(\lambda)}_{j, k_1}(x)
\overline{\pi^{(\lambda)}_{j, k_3}(x')}.
\]
Comparing the last expression to the right-hand side of~\Cref{eqn:homogeneous_K_defn} we see that the necessary and sufficient condition for them to match is that $d_{\lambda} b_{k_1, k_2} = a_{k_1, k_2}$ for all $1 \leq k_1, k_2 \leq r_\lambda$.
If let $\m{B}$ be the matrix consisting of $b_{k_1, k_2}$, then by~\Cref{eqn:rkhs_prop_to_test_rhs} we have $\m{B} = \m{A} \m{A}^\top$, along with the preceding condition, which in matrix form is $\m{B} = \m{A}$.
Hence, the necessary and sufficient condition for~\Cref{eqn:rkhs_prop_to_test} to hold is $\m{A} = \m{A} \m{A}^{\top}$ and since $\m{A}$ is symmetric by assumption, namely $\m{A} = \m{A}^{\top}$, the condition simplifies to $\m{A} \m{A}^{\top} = \m{A}^2 = \m{A}$.
\end{proof}

\ThmFundamentalSetExists*

\begin{proof}
Induction. For $k=1$, the claim is immediate since $\phi_1 \neq 0$.
Now, assume $\det \m{M}_k \neq 0$ for all $k$ in the range $k = 1, \ldots, n - 1$.
Split the matrix $\m{M}_n$ into blocks as follows:
\[
\m{M}_n = \begin{pmatrix}
\m{M}_{n-1} & \phi_*(x_n) \\ \phi_n(x_*) & \phi_n(x_n)
\end{pmatrix}
\]
and use Schur complement formula for its determinant to obtain
\[
\det \m{M}_n = \det \m{M}_{n-1} \del[2]{\phi_{n}(x_n) - \sum_{j=1}^{n-1}\alpha_j \phi_{j}(x_n)}
\]
where the coefficients $\alpha_j$ are defined as
\[
\del{\alpha_1, \ldots, \alpha_{n-1}} = \del{\phi_n(x_1), \ldots, \phi_{n}(x_{n-1})} \m{M}_{n-1}^{-1}
\]
and in particular do not depend on $x_n$.
If the statement $\det \m{M}_n = 0$ for all $x_n \in \c{S}$ were true, it would imply $\phi_{n}(x_n) - \sum_{j=1}^{n-1}\alpha_j \phi_{j}(x_n) = 0$ for all $x_n \in \c{S}$, which would contradict linear independence of $\phi_j$.
The claim follows.
\end{proof}

\ThmFundamentalSpan*

\begin{proof}
We have
\[
K(x, x_k)
=
\sum_{j=1}^N \overline{\phi_j(x_k)} \phi_j(x).
\]
Notice that
\[
\begin{pmatrix}
K(x, x_1) \\
\vdots \\
K(x, x_N) \\
\end{pmatrix}
=
\m{M}_N^*
\begin{pmatrix}
\phi_1(x) \\
\vdots \\
\phi_N(x) \\
\end{pmatrix}.
\]
Hence, by non-degeneracy of $\m{M}_N$, it follows that $K(\., x_j)$ is a basis of $\Span \phi_j$.
\end{proof}

\ThmLaplaceEigenfunctionsGroup*

\begin{proof}
Since $G$ is equipped with the metric induced by the Killing form, the \emph{left shift} and \emph{right shift} maps
\[
&L_g: G \-> G
&
&L_g u = g \bdot u
&
&R_g: G \-> G
&
&R_g u = u \bdot g^{-1}
\]
are Riemannian isometries of $G$.
Let $\Delta$ be the Laplace--Beltrami operator on $G$, and recall that this operator commutes with Riemannian isometries.
Thus, $\Delta$ commutes with $L_g$ and $R_g$ for all $g \in G$.
Let $\phi_{j k} = \Delta \pi^{(\lambda)}_{j k}$.
Then $(\Delta L_g \pi^{(\lambda)}_{j k})(u) = (L_g \phi_{j k})(u) = \phi_{j k}(g \bdot u)$ and, on the other hand,
\[
(\Delta L_g \pi^{(\lambda)}_{j k})(u)
=
\sum_{l=1}^{d_{\lambda}}\pi^{(\lambda)}_{j l}(g) (\Delta \pi^{(\lambda)}_{l k})(u)
=
\sum_{l=1}^{d_{\lambda}}\pi^{(\lambda)}_{j l}(g) \phi_{l k}(u).
\]
Hence $\phi_{j k}(g u) = \sum_{l=1}^{d_{\lambda}}\pi^{(\lambda)}_{j l}(g) \phi_{l k}(u)$.
Denoting $\phi(g) = \del{\phi_{j k}(g)}$ we get $\phi(g \bdot u) = \pi^{(\lambda)}(g) \phi(u)$.
Examining the action for the right shifts in the same manner, we get $\phi(u \bdot g^{-1}) = \phi(u) \pi^{(\lambda)}(g^{-1})$.
Because of this, $\pi^{(\lambda)}(g) \phi(e) = \phi(g) = \phi(e) \pi^{(\lambda)}(g)$ for all $g \in G$, which, by Schur's Lemma, implies $\phi(e) = \alpha_{\lambda} I$ and $\phi(g) = \alpha_{\lambda} \pi^{(\lambda)}(g)$.
This proves that every $\pi^{(\lambda)}_{j k}$ is an eigenfunction of $\Delta$ corresponding to the same eigenvalue $\alpha_{\lambda}$.
By general theory of the Laplacian we have $\alpha_{\lambda} \leq 0$.

This shows that $\cbr[1]{\sqrt{d_{\lambda}}\pi^{(\lambda)}_{j k}}_{\lambda, j, k}$ is an orthonormal basis of eigenfunctions of $\Delta$ in $L^2(G)$, where the Haar measure on $G$ coincides with the Riemannian volume measure on $G$.
\end{proof}

\ThmLaplaceEigenfunctionsHomogeneous*

\begin{proof}
The metric structure of a homogeneous space $\c{M}=G/H$ with $G$ a compact Lie group is inherited from the group $G$ and thus the Laplacian $\Delta_{G/H}$ on $G/H$ satisfies
\[
(\Delta_{G/H} f)(g\bdot H) = \del[1]{\Delta_{G} \tilde{f}}(g) 
\]
where $\Delta_{G}$ is the Laplacian on $G$, $\tilde{f}(g) = f(\phi(g))$ with $\phi: G \-> G/H$ and $\phi(g) = g \bdot H$.

Consider the spherical functions $\pi^{(\lambda)}_{j k}$ for $1 \leq j \leq d_{\lambda}$, $1 \leq k \leq r_\lambda$.
The functions $\pi^{(\lambda)}_{j k}$, as functions on $G$, are matrix coefficients of irreducible unitary representations of $G$ under a suitable choice of basis in the representation space $V_{\lambda}$.
Because of this, the argument from the previous section implies that they are eigenfunctions of $\Delta_{G/H}$ corresponding to the eigenvalues $\alpha_{\lambda}$.
Thus $\cbr[1]{\sqrt{d_{\lambda}}\pi^{(\lambda)}_{j k}}_{\lambda, j, k}$ is an orthonormal basis of eigenfunctions of $\Delta_{G/H}$ in the space $L^2(G/H)$.
Note that, again, the invariant measure measure on $G/H$ coincides with the Riemannian volume measure on $G/H$.
\end{proof}

\begin{result}
Euclidean Matérn kernels can be written in integral form as
\[
k_{\nu, \kappa, \sigma^2}(\v{x}, \v{x}')
&=
\frac{(2\nu)^{\nu}}{\Gamma(\nu)\kappa^{2\nu}}
\int_0^{\infty}
u^{\nu - 1}
e^{-\frac{2 \nu}{\kappa^2} u}
k_{\infty, \sqrt{2 u}, \sigma^2}(\v{x}, \v{x}')
\d u
\\
&=
\sigma^2 \frac{(2\nu)^{\nu} (4 \pi)^{n/2}}{\Gamma(\nu)\kappa^{2\nu}}
\int_0^{\infty}
u^{\nu - 1 + n/2}
e^{-\frac{2 \nu}{\kappa^2} u}
\c{P}(u, \v{x}, \v{x}')
\d u
.
\]
\end{result}

\begin{proof}
This is a proof from \textcite{jaquier2022}, which we provide for completeness.
By \textcite[Section 3.326, Item 2]{gradshteyn2014} we have
\[ \label{eqn:exponential_integral}
\int_0^{\infty} u^m e^{- a u} \d u = \Gamma(m+1) a^{-m-1}
.
\]
Substituting $m = \nu + n/2 - 1$ and $a = 2 \nu / \kappa^2 + 4 \pi^2 \lambda^2$ into this equation and then performing a simple rearrangement of terms, we get
\[
\del{\frac{2 \nu}{\kappa^2} + 4 \pi^2 \lambda^2}^{-\nu - \frac{n}{2}}
&=
\Gamma(\nu + n/2)^{-1}
\int_0^{\infty}
u^{\nu + \frac{n}{2}-1}
e^{-\frac{2 \nu}{\kappa^2} u}
e^{-4 \pi^2 \lambda^2 u}
\d u,
\\
&=
\del{4 \pi}^{-\frac{n}{2}}\Gamma(\nu + n/2)^{-1}
\int_0^{\infty}
u^{\nu - 1}
e^{-\frac{2 \nu}{\kappa^2} u}
(4 \pi u)^{\frac{n}{2}} e^{-4 \pi^2 \lambda^2 u}
\d u,
\\ \label{eqn:aux_sd_formula_eucl}
&=
\sigma^{-2} \del{4 \pi}^{-\frac{n}{2}}\Gamma(\nu + n/2)^{-1}
\int_0^{\infty}
u^{\nu - 1}
e^{-\frac{2 \nu}{\kappa^2} u}
S_{\infty, \sqrt{2 u}, \sigma^2}(\lambda)
\d u
\]
where
\[
S_{\infty, \kappa, \sigma^2}(\lambda)
=
\sigma^2
(2 \pi \kappa^2)^{n/2}
e^{- 2 \pi^2 \kappa^2 \lambda^2}
\]
is the spectral density of the squared exponential kernel, namely
\[ \label{eqn:se_spectral}
k_{\infty, \kappa, \sigma^2}(\v{x}, \v{x}')
=
\int_{\R^n}
S_{\infty, \kappa, \sigma^2}(\norm{\v{\xi}})
e^{2 \pi i \innerprod{\v{x}-\v{x}'}{\v{\xi}}}
\d \v{\xi}.
\]

Now, using the spectral representation of the Matérn kernel together with~\Cref{eqn:aux_sd_formula_eucl}, we write
\[
&k_{\nu, \kappa, \sigma^2}(\v{x}, \v{x}')
=
\frac{\sigma^2}{C_{\nu}}
\int_{\R^n}
\del{\frac{2 \nu}{\kappa^2} + 4 \pi^2 \norm{\v{\xi}}^2}^{-\nu - \frac{n}{2}}
e^{2 \pi i \innerprod{\v{x}-\v{x}'}{\v{\xi}}}
\d \v{\xi}
\\
&=
\frac{1}{C_{\nu} \del{4 \pi}^{\frac{n}{2}}\Gamma(\nu + n/2)}
\int_{\R^n}
\int_0^{\infty}
u^{\nu - 1}
e^{-\frac{2 \nu}{\kappa^2} u}
S_{\infty, \sqrt{2 u}, \sigma^2}(\norm{\v{\xi}})
\d u
\,
e^{2 \pi i \innerprod{\v{x}-\v{x}'}{\v{\xi}}}
d \v{\xi} = \ldots
\]
By changing the order of integration, rearranging terms and using formula~\Cref{eqn:se_spectral} we get
\[ \label{eqn:matern_integral_formula_full}
\ldots =
\frac{1}{C_{\nu} \del{4 \pi}^{\frac{n}{2}}\Gamma(\nu + n/2)}
\int_0^{\infty}
u^{\nu - 1}
e^{-\frac{2 \nu}{\kappa^2} u}
k_{\infty, \sqrt{2 u}, \sigma^2}(\v{x}, \v{x}')
\d u.
\]
Using the fact that $C_\nu = \frac{\Gamma(\nu) \kappa^{2 \nu}}{2^n \pi^{n/2} \Gamma(\nu+n/2) \del{2 \nu}^{\nu}}$ and
noticing that
\[
\frac{1}{C_{\nu} \del{4 \pi}^{\frac{n}{2}}\Gamma(\nu + n/2)}
\frac{2^n \pi^{n/2} \Gamma(\nu+n/2) \del{2 \nu}^{\nu}}{\Gamma(\nu) \kappa^{2 \nu}}
=
\frac{(2\nu)^{\nu}}{\Gamma(\nu)\kappa^{2\nu}}
\]
proves the claim.
\end{proof}

\begin{result}
Euclidean Matérn kernels, defined in the integral sense, are positive semi-definite.
\end{result}

\begin{proof}
This is a proof from \textcite{jaquier2022}, which we provide for completeness.
Assume $k_{\nu, \kappa, \sigma^2}$ is defined by~\Cref{dfn:matern} and that the corresponding heat kernel $k_{\infty, \kappa, \sigma^2}$ is positive (semi)definite.
Take some locations $x_1, \ldots, x_m = \v{x}$ and consider the matrix $\m{K}^{\nu}_{\v{x} \v{x}}$ with elements $k_{\nu, \kappa, \sigma^2}(x_i, x_j)$.
In order to prove that $k_{\nu, \kappa, \sigma^2}$ is positive (semi)definite we need to show that $\m{K}^{\nu}_{\v{x} \v{x}}$ is a positive (semi)definite matrix for an arbitrary choice of $m$ and $\v{x}$.
This means that we need to show that $\v{y}^{\top} \m{K}^{\nu}_{\v{x} \v{x}} \v{y}^{\top} > 0$ for all nonzero vectors $\v{y} \in \R^m$.

To prove that $\m{K}^{\nu}_{\v{x} \v{x}}$ is positive definite, consider the matrices $\m{K}^{\infty, \kappa}_{\v{x} \v{x}}$ with elements $k_{\infty, \kappa, \sigma^2}(x_i, x_j)$.
Then, extending equation~\Cref{eqn:matern_dfn} to matrices, we have
\[
\m{K}^{\nu}_{\v{x} \v{x}}
=
C
\int_0^{\infty}
  u^{\nu - 1 + n/2}
  e^{-\frac{2 \nu}{\kappa^2} u}
  K^{\infty, \sqrt{2 u}}_{\v{x} \v{x}}
  \d u.
\]
Because of this, we obtain
\[
\v{y}^{\top} \m{K}^{\nu}_{\v{x} \v{x}} \v{y}
=
C
\int_0^{\infty}
  \underbracket[0.1ex]{
  u^{\nu - 1 + n/2}
  e^{-\frac{2 \nu}{\kappa^2} u}
  \vphantom{\v{y}^{\top} K^{\infty, \sqrt{2 u}}_{\v{x} \v{x}} \v{y}}
  }_{*}
  \underbracket[0.1ex]{\v{y}^{\top} K^{\infty, \sqrt{2 u}}_{\v{x} \v{x}} \v{y}}_{**}
  \d u
\]
where the factor $*$ of the integrand is obviously positive and the factor $**$ of the integrand is positive  (non-negative) because $\m{K}^{\infty, \sqrt{2 u}}_{\v{x} \v{x}}$ is positive (semi)definite by assumption, thus the integral is positive (non-negative).
Thus $k_{\nu, \kappa, \sigma^2}$ is positive (semi)definite.
\end{proof}

\ThmMaternLie*

\begin{proof}
Substituting the expression~\Cref{eqn:heat_lie} for heat kernels on a compact Lie group $G$ into \Cref{eqn:matern_integral_formula_2}, we get, for $x, y \in G$,
\[
k_{\nu, \kappa, \sigma^2}(g_1 \bdot H, g_2 \bdot H)
&=
\frac{\sigma^2}{C_{\nu, \kappa}}
\int_0^{\infty}
u^{\nu - 1 + n/2}
e^{-\frac{2 \nu}{\kappa^2} u}
\c{P}(u, g_1 \bdot H, g_2 \bdot H)
\d u
\\
&=
\frac{\sigma^2}{C_{\nu, \kappa}}
\int_0^{\infty}
u^{\nu - 1 + n/2}
e^{-\frac{2 \nu}{\kappa^2} u}
\sum_{\lambda \in \Lambda}
e^{-\alpha_{\lambda} u}
d_{\lambda}
\chi_{\lambda}(g_2^{-1} \bdot g_1)
\d u.
\]
Rearranging the terms we get
\[
k_{\nu, \kappa, \sigma^2}(g_1 \bdot H, g_2 \bdot H)
=
\frac{\sigma^2}{C_{\nu, \kappa}}
\sum_{\lambda \in \Lambda}
\ubr{
\int_0^{\infty}
u^{\nu - 1 + n/2}
e^{-\frac{2 \nu}{\kappa^2} u}
e^{-\alpha_{\lambda} u}
\d u
}_{\Phi_{\kappa, \nu+n/2}(a_{\lambda})}
\,\,
d_{\lambda}
\chi_{\lambda}(g_2^{-1} \bdot g_1)
\]
where $\Phi_{\kappa, \nu+n/2}(a_{\lambda})$ is defined as the inner integral.
By \textcite[Section 3.326, Item 2]{gradshteyn2014}, we have
\[
\int_0^{\infty} u^n e^{- a u} \d u = \Gamma(n+1) a^{-n-1}
.
\]
Hence, $\Phi_{\kappa, \nu+n/2}(a_{\lambda}) = \Gamma(\nu+n/2)\del{\frac{2 \nu}{\kappa^2} + a_{\lambda}}^{-\nu-n/2}$. 
We obtain
\[
k_{\nu, \kappa, \sigma^2}(g_1 \bdot H, g_2 \bdot H)
=
\frac{\sigma^2}{C_{\nu, \kappa}'}
\sum_{\lambda \in \Lambda}
\del{\frac{2 \nu}{\kappa^2} + a_{\lambda}}^{-\nu-n/2}
\,\,
d_{\lambda}
\chi_{\lambda}(g_2^{-1} \bdot g_1)
\]
where $C_{\nu, \kappa}'$ is a normalizing constant ensuring that $k_{\nu, \kappa, \sigma^2}(x, x) = \sigma^2$.
\end{proof}

\ThmSmoothnessComp*

\begin{proof}
See~\Cref{appdx:proofs}.
First, note that
\[
\min\del{1, 2 \nu / \kappa^2}
(1 + a_{\lambda})
\leq
\frac{2 \nu}{\kappa^2} + a_{\lambda}
\leq 
\max\del{1, 2 \nu / \kappa^2}
(1 + a_{\lambda}).
\]
Using this, by change of metric expressions it suffices to study $k_{\nu, \kappa, \sigma^2}$ only in the special case when $2\nu/\kappa^2 = 1$.

Recall that $\cbr[1]{\sqrt{d_{\lambda}}\pi^{(\lambda)}_{j k}}_{\lambda, j, k}$ is an orthonormal basis of eigenfunctions of the Laplace--Beltrami operator.
Then under our assumption that $2\nu/\kappa^2 = 1$, by \textcite[Proposition 2]{devito2019}, we see that $k_{\nu, \kappa, \sigma^2}$ is the reproducing kernel of $H^{\nu+n/2}$, and therefore an element of this space.
Therefore, positive definiteness and continuous differentiability follow from one of the Sobolev embedding theorems---see \textcite[Remark 4]{devito2019}. 
The claim follows.
\end{proof}

\ThmTruncationBounds*

\begin{proof}
To begin, write
\[
\norm{k_{\nu, \kappa} - k_{\nu, \kappa}^J}_{L^2(X \x X)}^2
=
\norm[3]{\sum_{j=J+1}^\infty S_{\nu, \kappa}(\lambda_j) f_j(\.) f_j(\.')}^2_{L^2(X \x X)}
=
\sum_{j=J+1}^\infty S_{\nu, \kappa}(\lambda_j)^2 
\]
where we have used the fact that $f_j$ are orthonormal in $L^2(X)$.
Since the function $S_{\nu, \kappa}(\cdot)$ is monotonically decreasing, Weyl's law given by~\Cref{eqn:weyls_law} implies
\[
\sum_{j=J+1}^\infty S_{\nu, \kappa}(\lambda_j)^2
\leq
\sum_{j=J+1}^\infty S_{\nu, \kappa}(C^{-1} j^{2/n})^2.
\]

First, assume $\nu < \infty$.
Using elementary algebra and the Maclaurin--Cauchy test, we have
\[
\sum_{j=J+1}^\infty
S_{\nu, \kappa}(C^{-1} j^{2/n})^2
=
\sum_{j=J+1}^\infty
\del{\frac{2 \nu}{\kappa^2} + C^{-1} j^{2/n}}^{-2\nu-n}
&\leq
C^{2 \nu + n}
\sum_{j=J+1}^\infty
j^{-4\nu/n-2}
\\
&\leq
C^{2 \nu + n}
\int_{J}^\infty
u^{-4\nu/n-2} \d u
\\
&=
\underbrace{\frac{C^{2 \nu + n}}{4\nu/n+1}}_{\widetilde{C}^2}
J^{-4\nu/n-1}.
\]
This proves the claim for $\nu < \infty$.
Now, assume $\nu = \infty$.
First, notice the identity
\[
\int_{J}^{\infty} e^{-a u^b} \d u
=
a^{-1/b} b^{-1}
\int_{a J^b} v^{\frac{1-b}{b}} e^{-v} \d v
\]
which holds by change of variables: $a u^b = v$.

Using the Maclaurin--Cauchy test and the variable change above with $a = \frac{\kappa^2}{C}$ and $b = 2/n$, write
\[
\sum_{j=J+1}^\infty
S_{\nu, \kappa}(C^{-1} j^{2/n})^2
&=
\sum_{j=J+1}^\infty
e^{-\frac{\kappa^2}{C} j^{2/n}}
\leq
\int_{J}^{\infty}
e^{-\frac{\kappa^2}{C} u^{2/n}} \d u
\\
&=
\del{\frac{\kappa^2}{C}}^{-n/2}
\frac{n}{2}
\int_{\frac{\kappa^2}{C} J^{2/n}}
v^{\frac{n-2}{2}} e^{-v} \d v
\\
&=
\del{\frac{\kappa^2}{C}}^{-n/2}
\frac{n}{2}
\Gamma\del{\frac{n}{2}, \frac{\kappa^2}{C} J^{2/n}}
\]
where $\Gamma(\cdot, \cdot)$ is the incomplete Gamma function. \textcite[\href{https://dlmf.nist.gov/8.11}{(8.11)}]{NIST:DLMF} immediately implies
\[
\frac{\Gamma(x, y)}{y^{x-1}e^{-y}} \-> 1
&&
y \-> \infty.
\]
Hence, for large enough $J$, we have
\[
\sum_{j=J+1}^\infty
S_{\infty, \kappa}(C^{-1} j^{2/n})^2
\leq
\del{\frac{\kappa^2}{C}}^{-n/2}
\frac{n}{2}
\Gamma\del{\frac{n}{2}, \frac{\kappa^2}{C} J^{2/n}}
&\leq
C'
\del{\frac{\kappa^2 J^{2/n}}{C}}^{n/2-1} e^{- \frac{\kappa^2 J^{2/n}}{C}}
\\
&=
\widetilde{C}^2
J^{1-2/n} e^{- \frac{\kappa^2 J^{2/n}}{C}}.
\]
The claim follows.
\end{proof}

\printbibliography

\end{document}